\theoremstyle{remark}
\newtheorem{assum}{Assumption}
\theoremstyle{definition}
\newtheorem{definition}{Definition}
\newtheorem{prop}{Proposition}
\newtheorem{remark}{Remark}
\newtheorem{theorem}{Theorem}
\newtheorem*{theorem*}{Theorem}
\newtheorem{result}{Result}
\newtheorem*{result*}{Result}
\newtheorem{lemma}{Lemma}
\DeclareMathOperator*{\argmin}{arg\,min}
\newcommand{\ds}{\displaystyle}
\newcommand{\prox}{\mathop{\rm prox}\nolimits}
\newcommand{\Real}{\mathbb{R}}
\newcommand{\Tra}{^{\sf T}} 
\newcommand{\E}{\mathbb{E}}
\newcommand{\amp}{\mathop{\:\:\,}\nolimits}
\newcommand{\Var}{\operatorname{Var}} 
\newcommand{\Cov}{\operatorname{Cov}} 
\newcommand\numberthis{\addtocounter{equation}{1}\tag{\theequation}}
\title{MCMC Importance Sampling via Moreau-Yosida Envelopes}
\author{Apratim Shukla \\ Department of Mathematics and Statistics \\ IIT Kanpur \\ {\tt apratims21@iitk.ac.in} \and Dootika Vats \\ Department of Mathematics and Statistics \\ IIT Kanpur \\ {\tt dootika@iitk.ac.in} \and Eric C.\@ Chi \\ School of Statistics \\ University of Minnesota \\ {\tt  echi@umn.edu}  }
\date{\today}
\begin{document}

\maketitle

\begin{abstract}
Non-differentiable priors are standard in modern parsimonious Bayesian models. Lack of differentiability, however, precludes gradient-based Markov chain Monte Carlo (MCMC) for posterior sampling. Recently proposed proximal MCMC approaches can partially remedy this limitation by using a differentiable approximation, constructed via Moreau-Yosida (MY) envelopes, to make proposals. In this work, we build an importance sampling paradigm by using the MY envelope as an importance distribution. Leveraging properties of the envelope, we establish asymptotic normality of the importance sampling estimator with an explicit expression for the asymptotic covariance matrix. Since the MY envelope density is smooth, it is amenable to gradient-based samplers. We provide sufficient conditions for geometric ergodicity of Metropolis-adjusted Langevin and Hamiltonian Monte Carlo algorithms, sampling from this importance distribution. Our numerical studies show that the proposed scheme can yield lower variance estimators compared to existing proximal MCMC alternatives, and is effective in low and high dimensions.
\end{abstract}
\doublespacing
\section{Introduction}
\label{sec:introduction}

Markov chain Monte Carlo (MCMC) is a popular algorithm for sampling from complex and high-dimensional distributions. For a function $\psi: \Real^d \to (-\infty, \infty]$, we consider the problem of estimating characteristics of a target density of the form
\begin{eqnarray}
\label{eq:targ_dens}
    \pi(x) & \propto & e^{-\psi(x)}\,.
\end{eqnarray}
Such potentially intractable densities arise in numerous areas, but most noticeably appear as posterior distributions in Bayesian statistics. MCMC enables the estimation of features of $\pi$ by constructing a Markov chain with $\pi$ as its stationary and limiting distribution. The complexity of modern data has led to increasingly sophisticated models across various applications, and sampling from the resulting posteriors can be challenging. Consequently, much work in MCMC has gone into constructing  effective sampling and estimation strategies that are locally informed, utilizing the geometry of the posterior density to inform the next move of the Markov chain. 

Gradient-based MCMC algorithms account for the  local geometry of $\pi$  by utilizing $\nabla \psi(x)$ in the proposal distribution within  the popular Metropolis-Hastings sampler. These methods include the Metropolis-Adjusted Langevin Algorithm (MALA) \citep{roberts1998optimal}, Hamiltonian Monte Carlo (HMC) \citep{MR2858447}, Riemannian manifold MALA and HMC \citep{girolami2011riemann}, and Barker's proposals \citep{livingstone2022barker}.
 Gradient-based MCMC algorithms can converge faster to the target distribution \citep{beskos2013optimal,roberts1998optimal} than uninformed algorithms like random walk Metropolis. 
 
Chief among the requirements to implement gradient-based algorithms, is the differentiability of $\psi$. Non-differentiable priors that induce structured sparsity in model parameters are often used for modern data. Effective sampling and estimation from the resulting non-differentiable posteriors is a challenge, since it precludes a straightforward use of gradient-based MCMC methods. Major progress, however, in removing the barrier to using gradient-based MCMC methods for non-smooth target densities came in \cite{pereyra2016proximal}. This work introduced a MALA-like MCMC algorithm, called proximal MALA, for sampling from a non-smooth target density, $\pi$. The key idea is to replace gradients of $\psi$ in the MALA proposal with gradients of its Moreau-Yosida envelope, $\psi^\lambda$, which are Lipschitz. Thus,  the gradient of  $\psi^{\lambda}$ not only exists but is also well-behaved. Nonetheless, the algorithm can converge slowly, yielding high variance estimators of posterior quantities \citep{durmus2022proximal}. A primary reason of slow convergence is that the gradient of $\psi^{\lambda}$ may not adequately capture the geometry of $\psi$. Fortunately, in importance sampling, there is a natural strategy for variance reduction for exactly these situations.

Importance sampling is a classic approach of using samples generated from a proxy distribution, called the importance distribution, to estimate characteristics of a  target distribution. 
If the proxy is chosen well, the variance of the importance sampling estimator can be much smaller than the variance of the estimator computed with samples from $\pi$. Conversely, if it is chosen poorly, estimators can have larger variance -- even infinite variance. The success and failure of importance sampling hinges critically on the choice of the importance distribution.

We propose a simple and general way to design an effective importance distribution for a given $\pi$ that is either non-differentiable or lacks Lipschitz gradients. We use the Moreau-Yosida envelope, $\psi^\lambda$, to approximate the $\psi$ and implement importance sampling with the importance density $\pi^\lambda(x)\propto e^{-\psi^{\lambda} (x)}$. Since $\pi^{\lambda}$ have well-behaved gradients, gradient-based MCMC samplers like MALA and HMC  can effectively sample from $\pi^{\lambda}$. We present importance sampling estimators for expectations under $\pi$. Our estimators are guaranteed to have finite asymptotic variance. Practical ways to estimate this asymptotic variance for both univariate and multivariate expectations can be found in the supplement. As credible intervals are an integral part of the Bayesian workflow, we also provide importance sampling estimators of marginal quantiles.

Finite asymptotic variance of importance sampling estimators requires the underlying Markov chains to converge at a fast enough rate. For MALA and HMC chains, invariant for $\pi^{\lambda}$, we present sufficient conditions for geometric ergodicity. Moreover, we identify situations when MALA or HMC are not geometrically ergodic for $\pi$ but are geometrically ergodic for $\pi^{\lambda}$. 

The rest of the paper is organized as follows. Section~\ref{sec:importance_sampling} reviews importance sampling schemes and Section~\ref{sec:proximal} reviews proximal MCMC algorithms. Section~\ref{sec:MY_IS} introduces our proposed sampling scheme and estimator, discusses the estimation of quantiles, and presents critical theoretical results and practical considerations. Section~\ref{sec:geom_erg} presents results under which MALA and HMC algorithms targeting $\pi^{\lambda}$ are geometrically ergodic. Section~\ref{sec:numerical} present  numerical studies illustrating the utility of our proposed strategy over existing alternatives. 
Section~\ref{sec:discussion} ends with a discussion. All proofs and some details on the examples are provided in the supplement.
\section{Importance sampling}
\label{sec:importance_sampling} 

Consider a distribution with density $\pi$ defined on $\Real^d$, of the form in \eqref{eq:targ_dens} and a function $\xi : \Real^{d} \to \Real^{p}$. A fundamental task is to estimate
\begin{eqnarray}
   \label{eq:est_xi}
   \theta & := & \E_{\pi}\left[ \xi(X) \right] \amp = \amp \int_{\Real^{d}} \xi(x) \pi(x) dx  \amp < \amp \infty\,.
\end{eqnarray}
We use the notation $\E_{\pi}$ to indicate that the expectation is with respect to a distribution with density $\pi$. Importance sampling methods estimate $\theta$ using weighted samples from an importance distribution with density $g$, whose support contains the support of $\pi$. The key idea is to express an expectation with respect to $\pi$, as an expectation with respect to $g$,
\begin{eqnarray}
\label{eq:key}
    \theta & = & \int_{\Real^{d}} \xi(x) \pi(x) dx 
     \amp = \amp \int_{\Real^{d}} \xi(x) \dfrac{\pi(x)}{g(x)} g(x) dx \amp = \amp \E_{g} \left[\xi(X) \dfrac{\pi(X)}{g(X)} \right]\,.
\end{eqnarray}
%
%
When iid samples from $g$ are difficult to obtain, one can simulate a $g$-ergodic Markov chain $\{X_t\}_{t\geq 1}$, and a natural strategy to estimate $\theta$ in light of \eqref{eq:key}, is to use a weighted average from these samples.
%
%
When either $\pi$ or $g$ is known only up to a normalizing constant, this is not straightforward.
Fortunately, one can introduce a rescaling that eliminates the need for normalization constants. For $g(x) \propto \tilde{g}(x)$, let $w(x) = \exp(-\psi(x))/\tilde{g}(x)$. The self-normalized importance sampling estimator of $\theta$ from $g$ is
\begin{eqnarray}
\label{eq:snis_estimator}
    \hat{\theta}^{g}_n & = & \ds \sum_{t=1}^{n} \dfrac{\xi(X_t) w(X_t)}{\sum_{k=1}^{n} w(X_k)}\,.
\end{eqnarray}
When the Markov chain $\{X_t\}_{t\geq 1}$ is $g$-ergodic, $\hat{\theta}^g_n$ is strongly consistent\footnote{Theorem~\ref{thm:imp_consistency} is generally known, but we present the proof in the supplement for completeness.}.
%
\begin{theorem} 
\label{thm:imp_consistency}
   Let $\{X_{t}\}_{t\geq1}$ be an irreducible, aperiodic, and Harris recurrent Markov chain with stationary density $g$. Then, as $n \to \infty$, $\hat{\theta}^{g}_n \amp \overset{\text{a.s.}}{\to} \amp \theta$.
\end{theorem}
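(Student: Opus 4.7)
The plan is to rewrite the self-normalized estimator as a ratio of two Birkhoff ergodic averages and then invoke the strong law of large numbers for Harris recurrent Markov chains on each of them separately. Writing $Z_\pi = \int e^{-\psi(x)}\,dx$ and $Z_g = \int \tilde{g}(x)\,dx$, so that $\pi = e^{-\psi}/Z_\pi$ and $g = \tilde{g}/Z_g$, the weight function satisfies $w(x) = (Z_\pi/Z_g)\,\pi(x)/g(x)$. Dividing numerator and denominator of \eqref{eq:snis_estimator} by $n$ gives
\begin{equation*}
\hat{\theta}^{g}_{n} \;=\; \frac{\frac{1}{n}\sum_{t=1}^{n} \xi(X_{t})\, w(X_{t})}{\frac{1}{n}\sum_{t=1}^{n} w(X_{t})}.
\end{equation*}

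Next I would apply the Markov chain SLLN (for instance, Theorem 17.0.1 of Meyn and Tweedie, which holds for any irreducible, aperiodic, Harris recurrent chain with invariant density $g$ and any $g$-integrable function). For the denominator, I would compute
\begin{equation*}
\E_{g}[w(X)] \;=\; \int \frac{e^{-\psi(x)}}{\tilde{g}(x)}\, g(x)\, dx \;=\; \frac{1}{Z_g} \int e^{-\psi(x)}\, dx \;=\; \frac{Z_\pi}{Z_g},
\end{equation*}
which is finite because $\pi$ is a proper density; hence $w$ is $g$-integrable and the SLLN yields $n^{-1}\sum_t w(X_t) \to Z_\pi/Z_g$ almost surely. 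For the numerator, the same calculation gives $\E_{g}[\xi(X) w(X)] = (Z_\pi/Z_g)\,\E_{\pi}[\xi(X)] = (Z_\pi/Z_g)\,\theta$, which is finite by the standing assumption in \eqref{eq:est_xi} that $\theta$ exists (more precisely, that $\xi$ is $\pi$-integrable component-wise). Another application of the Markov chain SLLN then gives $n^{-1}\sum_t \xi(X_t) w(X_t) \to (Z_\pi/Z_g)\,\theta$ almost surely.

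Finally, since almost sure convergence is preserved under the continuous mapping $(a,b)\mapsto a/b$ on the set $b\neq 0$, and the limit of the denominator is the strictly positive constant $Z_\pi/Z_g$, the ratio converges almost surely to $\theta$, which establishes the claim. I do not anticipate a substantive obstacle here: the only mild care needed is to verify that the two integrands are in $L^{1}(g)$, which reduces entirely to the two bookkeeping computations above and to the assumed finiteness of $\theta$. In particular, the argument uses the hypotheses exactly where one expects: irreducibility, aperiodicity, and Harris recurrence are what permit the Birkhoff-type SLLN to be applied to arbitrary $g$-integrable test functions, regardless of the starting distribution of the chain.
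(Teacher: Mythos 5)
Your proposal is correct and follows essentially the same route as the paper's own proof: the paper likewise divides numerator and denominator of $\hat{\theta}^{g}_n$ by $n$, applies Birkhoff's ergodic theorem to show the two averages converge almost surely to $(k_{\pi}/k_{g})\theta$ and $k_{\pi}/k_{g}$ respectively (your $Z_\pi/Z_g$), and concludes by the continuous mapping theorem. Your added bookkeeping verifying that $w$ and $\xi w$ are in $L^{1}(g)$ is a mild sharpening of a step the paper leaves implicit, but the argument is the same.
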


As a consequence of \eqref{eq:snis_estimator}, \cite{tier:1994} proposed extending importance sampling using iid sampling to MCMC sampling -- a strategy that has enjoyed some success.
\cite{silva2023robust} used MCMC importance samples for Bayesian leave-one-out cross-validation, \cite{MR1728560} obtained a univariate theoretical paradigm for problems in statistical physics, \cite{schuster2020markov} used importance sampling inspired weighted averaging to remove bias from the unadjusted Langevin algorithm, and \cite{buta2011computational,tan2015honest} used MCMC importance sampling with multiple Markov chains for applications in Bayesian sensitivity analysis. However, many of these examples are either low-dimensional or  specific for cases when natural choices of $g$ are easily available.

Importance sampling is successful when (i) it is easier to construct faster converging $g$-invariant Markov kernels than $\pi$-invariant kernels and (ii) when the tails of $g$ can adequately bound $\xi(x) \pi(x)$ to ensure low variance of $\hat{\theta}^{g}_n$. 
In fact, if $g$ is chosen well, the variance of $\hat{\theta}^{g}_n$ can be orders of magnitude lower than the variance of standard Monte Carlo. On the other hand, a finite second moment of $\xi$ under $\pi$ does not guarantee a finite variance of $\hat{\theta}^{g}_n$. If $g$ is not carefully chosen, $\hat{\theta}^g_n$ can be catastrophically worse than standard Monte Carlo. The importance density $g$ is critical to the quality of $\hat{\theta}^g_n$, but there is little general guidance on how to choose $g$ in practice. For the iid case, \citet[Chapter 2.9]{hesterberg1988advances}  provides the optimal choice of $g$ as $g^*(x) \propto |\xi(x) - \theta| \pi(x).$ 
%
%
Unfortunately, $g^*$ is not useful in practice as it depends on the unknown $\theta$.

The main contribution of this paper is a general strategy for constructing an effective $g$ for a log-concave $\pi$ that yields an estimator  $\hat{\theta}^g_n$ with finite asymptotic variance. We focus on target distributions $\pi$ when $\psi$'s proximal mapping can be computed efficiently. This enables the construction of a practical importance distribution using a smooth Moreau-Yosida approximation of $\pi$. Our proposed importance distribution is guaranteed to yield finite variance estimators. Moreover, the smoothness in the importance distribution can be tuned to yield estimators with smaller variance than standard MCMC.  While our primary focus when we started this work was on log-concave $\pi$ that are not smooth, our approach also applies to differentiable log-concave $\pi$ for which popular algorithms like MALA and HMC are ineffective. This is particularly true when $\psi$ is not Lipschitz differentiable. 

\section{Proximal Markov chain Monte Carlo}
\label{sec:proximal}

\subsection{Moreau-Yosida envelopes and proximal maps}
\label{sec:MY_env}
We review relevant concepts from convex analysis, specifically Moreau-Yosida envelopes and proximal mappings. For a thorough review of proximal mappings and their applications in statistics and machine learning, see \citet{Combettes2011, Parikh2014, PolsonScott2015Proximal}. Let $\Gamma(\Real^d)$ denote the set of proper, closed, convex functions from $\Real^d$ into $\Real \cup \{\infty\}$. Our focus is on targets $\pi$ such that $\psi \in \Gamma(\Real^d)$. Let $\| \cdot \|$ denote the  Euclidean norm.

\begin{definition}\label{def:moreau-envelope}
Given a $\lambda > 0$, the \emph{Moreau-Yosida envelope} of $\psi \in \Gamma(\Real^d)$ is given by
\begin{eqnarray}
\label{eq:psi-lambda}
\psi^\lambda(x) & = & \inf_{y \in \Real^d} \left\{\psi(y) + \frac{1}{2\lambda}\|y-x\|^2\right\}\,.
\end{eqnarray}
\end{definition}
The infimum in \eqref{eq:psi-lambda} is always attained at a unique point because $\psi \in \Gamma(\Real^d)$. The unique minimizer of \eqref{eq:psi-lambda} defines the proximal mapping of $\psi$. 
\begin{definition}\label{def:prox-operator}
Given a $\lambda > 0$, the \emph{proximal mapping} of $\psi \in \Gamma(\Real^d)$ is the operator
\begin{eqnarray}
    \label{eq:proximal_operator}
\prox_\psi^\lambda(x) & = & \underset{y \in \Real^d}{\arg\min}\; \left\{\psi(y) + \frac{1}{2\lambda}\|y-x\|^2\right\}\,.
\end{eqnarray}
\end{definition}

The function $\psi^{\lambda}(x)$ is called the Moreau-Yosida envelope of $\psi(x)$ since for all $\lambda > 0$,
\begin{eqnarray}
\label{eq:bound_psi}
    \psi^{\lambda}(x) & = &  \psi\left( \prox^{\lambda}_{\psi}(x) \right) + \frac{1}{2\lambda}  \left\|\prox^{\lambda}_{\psi}(x) - x\right\|^{2} \amp \leq \amp \psi(x) \quad \text{ for all } x \in \Real^d\,.
\end{eqnarray}
%
Therefore, $\psi^{\lambda}(x)$ ``envelopes'' $\psi(x)$ from below. 

To illustrate these concepts concretely, consider $\pi(x) \propto e^{-|x|}$, so $\psi(x) = \lvert x \rvert$. Its Moreau-Yosida envelope is the Huber function.
%
%
The left panel of Figure~\ref{fig:lap} shows $\psi(x)$ and $\psi^\lambda(x)$ for three different $\lambda$ values. We see that the Moreau-Yosida envelope provides a differentiable approximation to a non-smooth function where the approximation improves as $\lambda$ gets smaller. 

\begin{figure}
\centering
    \includegraphics[scale = 0.45]{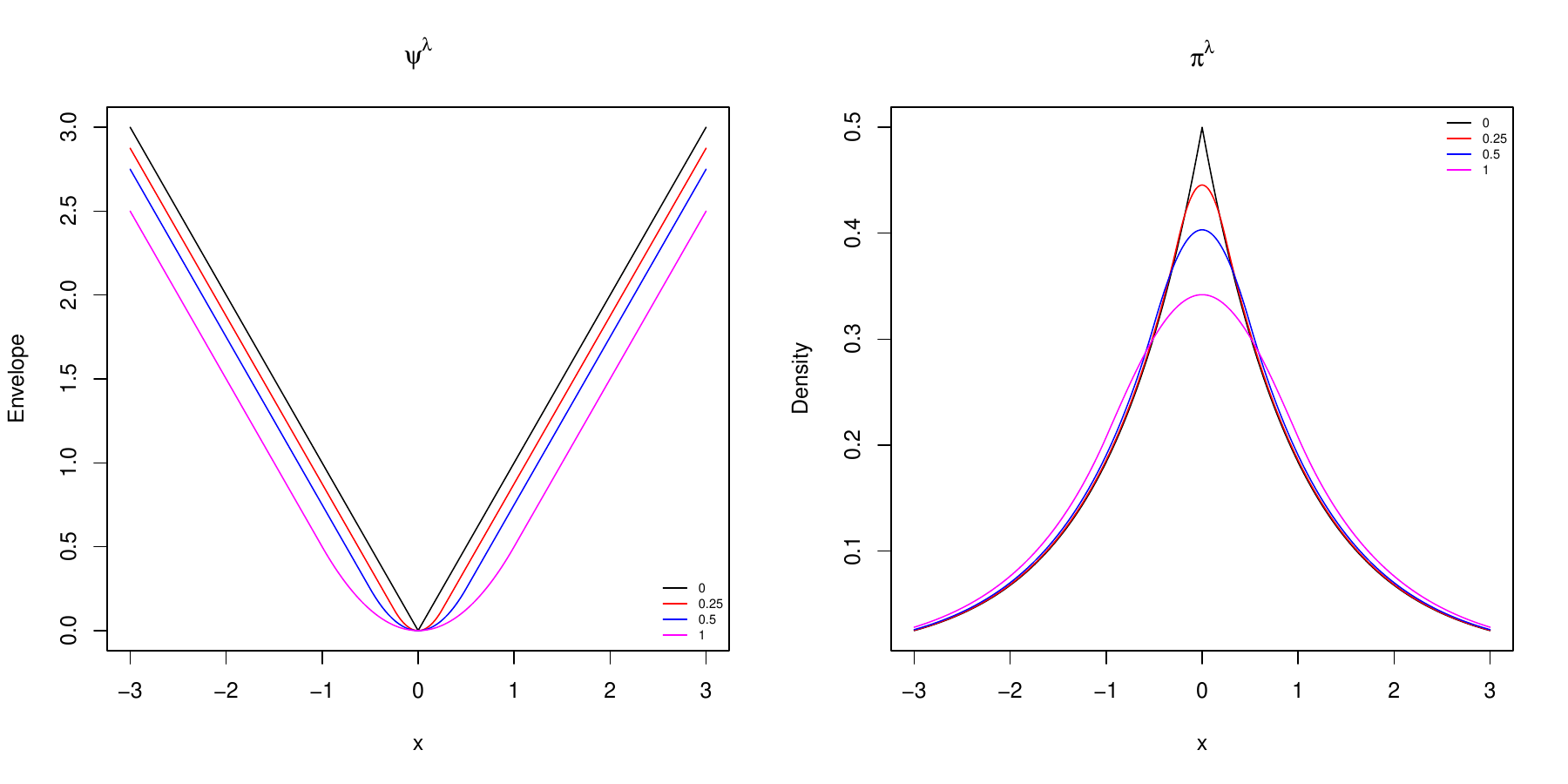}
    \caption{(Left) Moreau-Yosida envelope for $\psi(x) = |x|$  for different values of $\lambda$. (Right) The corresponding envelopes for the Laplace distribution.}
    \label{fig:lap}
\end{figure}

Having constructed $\psi^{\lambda}$, consider the importance density 
\begin{eqnarray}  
 \label{eq:targ_approx}
    \pi^{\lambda}(x) & \propto & e^{-\psi^{\lambda}(x)}\,,
\end{eqnarray}
that is integrable over $\Real^d$ for all $\lambda > 0$ \citep{durmus2022proximal}. 
The right panel of Figure~\ref{fig:lap} shows the  densities $\pi^{\lambda}$ corresponding to the Moreau-Yosida envelope, $\psi^{\lambda}$. The results in Proposition \ref{prop:my_properties} below will be critical to our proposed  importance sampling paradigm.

\begin{prop} 
\label{prop:my_properties}
[{\cite{durmus2022proximal,pereyra2016proximal}}]
Let $\psi \in \Gamma(\Real^d)$ be bounded from below. Then the following hold. 
    \begin{enumerate}[label=(\alph*)]
        \item  If $\int_{\Real^{d}} e^{-\psi(x)}\, dx < \infty\,$, then $\pi^{\lambda}$ in \eqref{eq:targ_approx} defines a proper density on $\Real^{d}$.
    \item The importance density $\pi^{\lambda}(x)$ converges to $\pi(x)$ pointwise as $\lambda \rightarrow 0$.
    \item The importance density $\pi^{\lambda}(x)$ is continuously differentiable even if $\pi$ is not. Moreover, $\nabla \log \pi^{\lambda}(x)  =   \frac{1}{\lambda}\left(\prox^{\lambda}_\psi(x) - x \right)$.
    \item  A point $x^{*} \in \Real^d$ maximizes $\pi$ if and only if  $x^{*}$ maximizes $\pi^{\lambda}$. \label{max_prop}
    \end{enumerate}
\end{prop}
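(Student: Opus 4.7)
\emph{Part (a).} The hypothesis $\int e^{-\psi} < \infty$ together with convexity of $\psi$ forces bounded sublevel sets and hence coercivity of $\psi$. Using the identity $\psi^\lambda(x) = \psi(\prox^\lambda_\psi(x)) + \|\prox^\lambda_\psi(x) - x\|^2/(2\lambda)$ from \eqref{eq:bound_psi}, a case split on whether $\prox^\lambda_\psi(x)$ stays bounded as $\|x\| \to \infty$ shows that either $\psi(\prox^\lambda_\psi(x))$ is large (the prox lives in a high sublevel set) or the quadratic penalty dominates (the prox stays bounded while $x$ escapes to infinity). Either way, $\psi^\lambda$ inherits sufficient tail growth to yield $\int e^{-\psi^\lambda} < \infty$. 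Quantifying this tail bound is the main technical obstacle in the proposition; the rest of the parts are comparatively routine.

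\emph{Part (b).} I would argue this via standard epi-convergence. Taking $y = x$ in the defining infimum gives $\psi^\lambda(x) \leq \psi(x)$; conversely, as $\lambda \downarrow 0$ the penalty $\|y-x\|^2/(2\lambda)$ blows up unless the minimizer converges to $x$, and lower semicontinuity of $\psi$ (from $\psi \in \Gamma(\Real^d)$) then forces $\liminf_{\lambda \downarrow 0} \psi^\lambda(x) \geq \psi(x)$. Pointwise convergence of $\pi^\lambda \to \pi$ follows once the corresponding normalizing constants are shown to converge, which is obtained via dominated convergence exploiting monotonicity of $\psi^\lambda$ in $\lambda$ together with integrability from part (a).

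\emph{Part (c).} This is Moreau's theorem. Differentiating $\psi^\lambda(x) = \psi(\prox^\lambda_\psi(x)) + \|\prox^\lambda_\psi(x) - x\|^2/(2\lambda)$ in $x$ and invoking the envelope theorem, the implicit contributions through $\prox^\lambda_\psi(x)$ cancel by first-order optimality of the prox, leaving $\nabla \psi^\lambda(x) = (x - \prox^\lambda_\psi(x))/\lambda$, equivalently $\nabla \log \pi^\lambda(x) = (\prox^\lambda_\psi(x) - x)/\lambda$. Continuity of this gradient follows from nonexpansivity (indeed firm nonexpansivity) of the prox operator.

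\emph{Part (d).} This follows cleanly from (c) and \eqref{eq:bound_psi}. Since $\pi \propto e^{-\psi}$ and $\pi^\lambda \propto e^{-\psi^\lambda}$, maximizers of the densities coincide with minimizers of the potentials. If $x^*$ minimizes $\psi$, then $0 \in \partial \psi(x^*)$, which is equivalent to the fixed-point condition $\prox^\lambda_\psi(x^*) = x^*$; substituting into \eqref{eq:bound_psi} yields $\psi^\lambda(x^*) = \psi(x^*) = \min \psi$. Since $\psi^\lambda(x) \geq \psi(\prox^\lambda_\psi(x)) \geq \min \psi$ for every $x$, we conclude $\min \psi^\lambda = \min \psi$ is attained at $x^*$, so $x^*$ maximizes $\pi^\lambda$. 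Conversely, if $x^*$ maximizes $\pi^\lambda$, then by (c) we have $\nabla \psi^\lambda(x^*) = 0$, which forces $\prox^\lambda_\psi(x^*) = x^*$, hence $0 \in \partial \psi(x^*)$, so $x^*$ minimizes $\psi$ and maximizes $\pi$.
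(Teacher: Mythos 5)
The paper offers no proof of this proposition: it is imported directly from \cite{durmus2022proximal} and \cite{pereyra2016proximal}, so your attempt can only be judged against the standard arguments in those references, and it reconstructs them correctly. Parts (b), (c), and (d) are sound as written: (b) is the classical monotone convergence $\psi^\lambda \uparrow \psi$ as $\lambda \downarrow 0$ (via lsc of $\psi$), with the normalizing constants handled by dominating $e^{-\psi^\lambda}$ by $e^{-\psi^{\lambda_0}}$ for a fixed $\lambda_0 \geq \lambda$, integrable by (a); (c) is Moreau's theorem (your envelope-theorem phrasing is heuristic since $\psi$ is nonsmooth, but the conclusion and the continuity-via-nonexpansivity step are exactly right); and your fixed-point argument for (d) --- $0 \in \partial\psi(x^*)$ iff $\prox^{\lambda}_{\psi}(x^*) = x^*$, combined with $\psi^\lambda(x) \geq \psi(\prox^{\lambda}_{\psi}(x)) \geq \min\psi$ in one direction and differentiability of $\psi^\lambda$ in the other --- is precisely the standard proof and is airtight.

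The one soft spot is part (a), where you defer the quantification of the tail bound as "the main technical obstacle"; it is in fact one line, and your dichotomy as stated is not quite the right split. If you only case-split on whether $\prox^{\lambda}_{\psi}(x)$ stays bounded, an unbounded but slowly growing prox yields merely $\psi(\prox^{\lambda}_{\psi}(x)) \to \infty$, hence $\psi^\lambda(x) \to \infty$ --- coercivity of $\psi^\lambda$, which does not by itself imply $\int e^{-\psi^\lambda} < \infty$. The fix: convexity plus $\int e^{-\psi} < \infty$ gives not just coercivity but linear growth, i.e., there exist $a > 0$, $b \in \Real$ with $\psi(y) \geq a\|y\| + b$ for all $y$ (bounded sublevel sets plus convexity). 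Now split pointwise in $x$: writing $p = \prox^{\lambda}_{\psi}(x)$ and $m = \inf\psi > -\infty$, if $\|p\| \geq \|x\|/2$ then $\psi^\lambda(x) \geq \psi(p) \geq a\|x\|/2 + b$, while if $\|p\| < \|x\|/2$ then $\|x - p\| \geq \|x\|/2$ and $\psi^\lambda(x) \geq m + \|x\|^2/(8\lambda)$. Hence $\psi^\lambda(x) \geq \min\bigl\{a\|x\|/2 + b,\; m + \|x\|^2/(8\lambda)\bigr\}$, which grows at least linearly in $\|x\|$ and yields $\int_{\Real^d} e^{-\psi^\lambda(x)}\,dx < \infty$. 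With that line inserted, your proposal is a complete and faithful rendering of the cited proofs.
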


\subsection{Proximal and Moreau-Yosida MCMC algorithms}

Langevin algorithms like MALA, built from a discretization of the continuous-time Langevin diffusion, are ubiquitous in modern day MCMC applications and are well studied \citep{robe:rose:2001,roberts1996exponential}. Langevin algorithms require $\log \pi(x)$ to be differentiable and its gradient cannot grow larger than $\|x\|$. Specifically, \cite{roberts1996exponential} show that for bounded $\pi$, the MALA algorithm fails to be geometrically ergodic if 
\begin{eqnarray}
\label{eq:not_geom}
    \liminf_{\|x\| \to \infty}\, \dfrac{\|\nabla \log \pi(x) \|}{\|x\|} & > & c\,,
\end{eqnarray}
where $c$ is a known expression.  Geometric ergodicity implies the existence of a Markov chain central limit theorem for ergodic averages and thus controls the overall accuracy of the results. 

Recently \cite{pereyra2016proximal} and \cite{durmus2022proximal}  extend Langevin algorithms to generate samples from non-smooth target densities. Given a target density of the form \eqref{eq:targ_dens}, the proximal MALA (P-MALA) algorithm of \cite{pereyra2016proximal} and the Moreau-Yosida unadjusted Langevin algorithm (MY-ULA) of \cite{durmus2022proximal}  employ a discretization of the Langevin diffusion for $\pi^{\lambda}$, to yield a candidate $y$ from a given $x$:
\begin{eqnarray} \label{prox_proposal}
    y & = & x + \frac{h}{2}\nabla \log \pi^{\lambda}(x) + \sqrt{h} Z\,,
\end{eqnarray}
where $Z \sim N(0, \mathbb{I}_{d})$ and $h > 0$ is a step size. \cite{pereyra2016proximal} uses $y$ as a proposal in a Metropolis-Hastings step ensuring $\pi$-invariance, and \cite{durmus2022proximal} accept $y$ as the next state of the chain, without an accept-reject step. MY-ULA is inexact due to both the discretization error and the discrepancy between $\pi$ and $\pi^{\lambda}$. P-MALA is an exact algorithm with scaling fixed as $h = 2 \lambda$. As we will see in Section~\ref{sec:numerical}, however, P-MALA can often suffer from inefficient estimation compared to our proposed solution explained in the sequel.

Hamiltonian Monte Carlo (HMC) is another popular gradient-based MCMC algorithm. \cite{chaari2016hamiltonian} construct a modified HMC proposal aimed at efficient sampling from non-smooth densities. They define the HMC proposal based on proximal mappings utilizing gradients of $\psi^{\lambda}$ instead of gradients of $\psi$. In our examples, we also compare our proposed method to this proximal mapping based HMC (P-HMC) algorithm of \cite{chaari2016hamiltonian}. 

\section{Moreau-Yosida importance sampling}
\label{sec:MY_IS}

\subsection{Moreau-Yosida importance sampling estimator}
We propose employing $\pi^{\lambda}$ as an importance density to estimate expectations and quantiles under $\pi$. 
Let $\{X_t\}_{t\geq1}$ be an irreducible, aperiodic, and Harris recurrent Markov chain with stationary density $\pi^{\lambda}$. Using notation from Section~\ref{sec:importance_sampling}, define the unnormalized weights as
\begin{eqnarray}
    \label{eq:weights}
    w^{\lambda}(x) & = & \dfrac{e^{-\psi(x)} }{ e^{-\psi^{\lambda}(x)} } \amp = \amp e^{-\left\{ \psi(x)-\psi^{\lambda}(x) \right\}}   \,.
\end{eqnarray}
To correct for the mismatch between $\pi$ and $\pi^\lambda$, our proposed \emph{Moreau-Yosida importance sampling (MY-IS)} estimator of $\theta$ takes the following weighted average of $\xi(X_t)$ 
\begin{eqnarray} \label{mym_imp_est}
    \hat{\theta}_n^{\text{MY}} & = & \sum_{t = 1}^{n} \dfrac{\xi(X_{t})w^{\lambda}(X_{t})}{\sum_{k = 1}^{n} w^{\lambda}(X_{k})} \,.
\end{eqnarray}
Computing $\hat{\theta}_n^{\text{MY}}$ is straightforward. At each iteration we just need to evaluate $\xi$ and $w^{\lambda}$ at the current Markov chain state $X_t$ and update a running sum of weights, adding nominal computational burden.

%
A universal challenge in importance sampling is ensuring the finiteness of the variance of the importance sampling estimator. The following condition on the weights ensures finite variance of an  importance sampling estimator,
\begin{eqnarray}
\label{eq:weight_bound}
    \underset{x \in \Real^d}{\sup}\; w(x) & < & \infty\,.
\end{eqnarray}
Under \eqref{eq:weight_bound} finite variance is guaranteed when $\{X_t\}_{t\geq1}$ are iid. \cite{MR1728560} present sufficient conditions for when $\{X_t\}_{t\geq 1}$ is a univariate Markov chain.

The unnormalized weights $w^\lambda(x)$ satisfy \eqref{eq:weight_bound} because of the global underestimation bound in \eqref{eq:bound_psi}. We show that \eqref{eq:weight_bound} is also sufficient to obtain asymptotic normality of $\hat{\theta}^{\text{MY}}_n$ provided the $\pi^{\lambda}$-invariant Markov chain converges at a geometric rate.

\begin{theorem}
\label{thm:asymp_norm}
    Let $\{X_t\}_{t \geq 1}$ be a $\pi^{\lambda}$-reversible, geometrically ergodic Markov chain. If  $\E_{\pi}\|\xi(X_1)\|^{2} < \infty$, then as $n \to \infty$
    \begin{eqnarray}        
    \sqrt{n}\left(\hat{\theta}_{n}^{\text{MY}} - \theta\right) & \overset{d}{\to} & N_{p}(0, \Xi)\,, \,\text{ where } \,\,
    \Xi   =  \dfrac{1}{ \left[\E_{\pi^{\lambda}}(w^{\lambda}(X_1)) \right]^{2}}
     \begin{bmatrix}
        \textbf{I}_{p} & -\theta
    \end{bmatrix} \Sigma \begin{bmatrix}
        \textbf{I}_{p} \\
        -\theta\Tra
    \end{bmatrix}\,, \label{eq:true_lambda}
\end{eqnarray}

%
and
 \begin{equation}
 \label{eq:Sigma}
 \begin{split}
    \Sigma &  \amp = \amp   \sum_{k=-\infty}^{\infty}\Cov_{\pi^{\lambda}}(S(X_{1}), S(X_{1+k})) \,\,\, \text{ with } \,\,\,
    S(x)  \amp = \amp  \begin{pmatrix}
        \xi(x)w^{\lambda}(x) \\
        w^{\lambda}(x)
    \end{pmatrix}.
\end{split}
\end{equation}
\end{theorem}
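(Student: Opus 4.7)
The plan is to reduce the statement to a standard Markov chain CLT followed by a multivariate delta method. Writing $\bar S_n = n^{-1}\sum_{t=1}^n S(X_t)$ with $S$ defined in \eqref{eq:Sigma}, the estimator factors as a smooth function of a sample mean: if we split $\bar S_n = (\bar U_n, \bar V_n)$ with $\bar U_n \in \Real^p$ and $\bar V_n \in \Real$, then $\hat\theta_n^{\text{MY}} = \bar U_n / \bar V_n = h(\bar S_n)$ where $h(u,v) = u/v$. Because the chain is $\pi^\lambda$-reversible, geometrically ergodic, and $S$ will be shown to have finite second moment under $\pi^\lambda$, the classical reversible-chain CLT gives $\sqrt{n}(\bar S_n - \mu_S) \overset{d}{\to} N_{p+1}(0, \Sigma)$, with $\Sigma$ the lag-summed autocovariance of $S$ as written in \eqref{eq:Sigma}.

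The first technical step is verifying the moment condition. By \eqref{eq:bound_psi}, $\psi^\lambda(x) \le \psi(x)$, so $w^\lambda(x) = e^{-\{\psi(x)-\psi^\lambda(x)\}} \le 1$ for all $x$. Hence $\E_{\pi^\lambda}[w^\lambda(X_1)^2] \le 1$ and, writing $Z = \int e^{-\psi}$, $Z^\lambda = \int e^{-\psi^\lambda}$, one checks
\begin{equation*}
\E_{\pi^\lambda}\!\left[\|\xi(X_1)\|^2 w^\lambda(X_1)^2\right] \le \E_{\pi^\lambda}\!\left[\|\xi(X_1)\|^2 w^\lambda(X_1)\right] = \frac{Z}{Z^\lambda}\,\E_\pi\!\left[\|\xi(X_1)\|^2\right] < \infty,
\end{equation*}
using the assumption $\E_\pi\|\xi(X_1)\|^2 < \infty$. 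Thus $\E_{\pi^\lambda}\|S(X_1)\|^2 < \infty$, which together with $\pi^\lambda$-reversibility and geometric ergodicity suffices for the multivariate CLT (e.g.\ Kipnis--Varadhan, or the reversible case of Roberts--Rosenthal).

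The second step identifies the centering. A direct change of measure gives
\begin{equation*}
\E_{\pi^\lambda}[w^\lambda(X_1)] = \int \frac{e^{-\psi(x)}}{e^{-\psi^\lambda(x)}}\,\frac{e^{-\psi^\lambda(x)}}{Z^\lambda}\,dx = \frac{Z}{Z^\lambda}, \qquad \E_{\pi^\lambda}[\xi(X_1)w^\lambda(X_1)] = \frac{Z}{Z^\lambda}\,\theta,
\end{equation*}
so $\mu_S = (Z/Z^\lambda)(\theta, 1)^{\sf T}$ and $h(\mu_S) = \theta$. The final step is the multivariate delta method: the Jacobian of $h(u,v) = u/v$ at $\mu_S$ is
\begin{equation*}
J = \frac{1}{\E_{\pi^\lambda}[w^\lambda(X_1)]}\begin{bmatrix} I_p & -\theta \end{bmatrix},
\end{equation*}
so $\sqrt{n}(\hat\theta_n^{\text{MY}} - \theta) \overset{d}{\to} N_p(0, J\Sigma J^{\sf T})$, and $J\Sigma J^{\sf T}$ matches $\Xi$ in \eqref{eq:true_lambda}.

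The main obstacle I anticipate is not any single calculation but packaging the right form of the Markov chain CLT: one needs a multivariate version that holds under $L^2$-integrability in the reversible, geometrically ergodic regime. Everything else--the bounded-weight moment bound from \eqref{eq:bound_psi}, the change-of-measure identification of $\mu_S$, and the delta method--is routine once that CLT is cited.
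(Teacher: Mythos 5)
Your proposal is correct and follows essentially the same route as the paper's proof: a multivariate Markov chain CLT (Kipnis--Varadhan type, which the paper cites) for $\bar{S}_n$, with the moment condition verified via $w^{\lambda}(x) \leq 1$ from \eqref{eq:bound_psi} and a change of measure, followed by the delta method applied to $h(u,v)=u/v$ with Jacobian $\frac{1}{\E_{\pi^{\lambda}}(w^{\lambda}(X_1))}\begin{bmatrix} \textbf{I}_p & -\theta \end{bmatrix}$. Your moment bound using $(w^{\lambda})^2 \leq w^{\lambda}$ is a slightly more streamlined version of the paper's argument via $\sup_x \pi(x)/\pi^{\lambda}(x) < \infty$, but it is the same idea, not a different method.
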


\begin{proof}
    See the supplement.
\end{proof}

\begin{remark}
    Using the results of \cite{jone:2004}, sufficient conditions can also be obtained when the $\pi^{\lambda}$-Markov chain is non-reversible, uniformly ergodic, or polynomially ergodic.   
\end{remark}

The utility of Theorem~\ref{thm:asymp_norm} is three-fold: (i) it guarantees that all choices of $\lambda$ yield a finite variance estimator, (ii) moment conditions are under $\pi$ and not the relatively unstudied $\pi^{\lambda}$, and (iii) the expression of the limiting variance $\Xi$ is explicit. In the supplement, we provide estimators of $\Xi$ which enable practitioners to assess  simulation quality and determine whether sufficiently many Monte Carlo samples have been obtained. See \cite{agarwal2022principled,glyn:whit:1991,roy:2019,vats2019multivariate}. 

Theorem~\ref{thm:asymp_norm} requires the Markov chain for $\pi^{\lambda}$ to be reversible and geometrically ergodic. In Section~\ref{sec:geom_erg} we present sufficient conditions for MALA and the HMC  algorithms to be geometrically ergodic. The details of the $\pi^{\lambda}$ algorithms are provided in the supplement.

\subsection{Quantile estimation}
\label{sec:quantile_est}

Credible intervals are critical for Bayesian analysis. \cite{glynn1996importance} present importance sampling quantiles and establish fundamental theoretical guarantees when $\pi$ is one-dimensional. Since Bayesian application areas are typically high-dimensional, their methodology does not apply directly. We instead employ the importance sampling quantile estimation procedure proposed by \cite{chen1999monte} since it can be applied in high-dimensional applications.

Let the target and importance densities be defined as in \eqref{eq:targ_dens} and \eqref{eq:targ_approx}. For $x \in \Real^d$, let $x_i$ denote its $i^{\text{th}}$ component. Further, let $\pi_i$ denote the $i^{\text{th}}$ marginal density of $\pi$ and $\Pi_i$ denote its cumulative distribution function. The $\alpha$\textsuperscript{th} marginal quantile for component $i$
\begin{eqnarray*}
    x_{i}^{(\alpha)} & = & \inf\left\{y \in \Real: \Pi_{i}(y) \geq \alpha\right\}\,.
\end{eqnarray*}
Let $\mathbbm{1}_{\mathcal{A}_s}
(\cdot)$ denote the indicator function on the set $\mathcal{A}_s = \{t \in \Real : t \leq s\}$. Then
\begin{eqnarray*}
    \Pi_{i}(s) \amp = \amp \int_{-\infty}^{s} \pi_{i}(t)\, dt \amp = \amp \E_{\pi_{i}} \left(\mathbbm{1}_{\mathcal{A}_s}(T)  \right),
\end{eqnarray*}
where $T \sim \pi_{i}$. Taking $\xi(x) = \mathbbm{1}_{\mathcal{A}_s}(x)$ in \eqref{mym_imp_est} and using Theorem~\ref{thm:imp_consistency} produces a consistent estimator of the marginal distribution functions at any given point. Let $X^{i}_t$ denote the $i^{\text{th}}$ component of the Markov chain iteration $X_t$. For $s \in \Real$, an estimator of $\Pi_i(s)$ is
\begin{eqnarray} 
\label{eq:general_emp_cdf}
    \widehat{\Pi}_{i}(s) & = & \dfrac{\sum_{t = 1}^{n}\mathbbm{1}_{\mathcal{A}_s}(X^i_{t}) w^{\lambda}(X_t)   }{ \sum_{k = 1}^{n}   w^{\lambda}(X_{k})}\,.
\end{eqnarray}
A quantile estimator is typically obtained by inverting the empirical distribution function using order statistics. This is not  straightforward for unnormalized densities, and \cite{chen:shao:1998} propose the following. 
Consider the ordered sample $\left\{X_{i,(l)}\right\}$, where $X_{i,(l)}$ is the $l$\textsuperscript{th} order statistic obtained by sorting the entire $d-$tuple according to values in the $i$\textsuperscript{th} component. Denote
\begin{eqnarray} 
\label{eq:relative_wts}
    w_{(l)}^i & = &  \dfrac{ w^{\lambda} \left(X_{i,(l)} \right) }{\sum_{t = 1}^{n}  w^{\lambda} \left(X_{i,(t)}  \right)}\,,
\end{eqnarray}
as the relative weights corresponding to the ordered sample observations for the $i$\textsuperscript{th} component. 
An importance sampling estimator of the $\alpha$\textsuperscript{th} quantile for component $i$ is,
\begin{eqnarray}
\label{eq:impsamp_quantile}
    \hat{x}_{i}^{(\alpha)} & = & \begin{cases}
        X_{i,(1)} \quad &\text{if} \quad \alpha = 0 \\
        X_{i,(m)}  \quad &\text{if} \quad \sum_{l=1}^{m - 1}w_{(l)}^i < \alpha \leq \sum_{l=1}^{m}w_{(l)}^i\,.
    \end{cases}
\end{eqnarray}
The estimator in \eqref{eq:impsamp_quantile} can be cycled over all components for the desired quantiles to produce a complete set of credible intervals. \cite{chen1999monte} showed that if $\pi$ is unimodal and $\{X_t\}_{t \geq 1}$ is a $\pi^{\lambda}$-ergodic Markov chain, then $\hat{x}_{i}^{(\alpha)}$ produces consistent credible intervals. Asymptotic normality of the estimators, however, has not been studied for when $\{X_t\}_{t \geq 1}$ is obtained via MCMC samples, and remains an open problem.

\subsection{Tuning}
\label{sec:choosing_lambda}

The MY-IS estimator requires  choosing a $\lambda$ and Metropolis-Hastings tuning parameters. The parameter $\lambda$ impacts performance of the estimator via the importance distribution while both impact the convergence of the $\pi^{\lambda}$-Markov chain. The latter is somewhat easier to address due to  abundant guidelines available. We employ MALA and HMC algorithms for $\pi^{\lambda}$ and follow the recommendations of \cite{roberts1998optimal} to tune MALA to attain approximately $57\%$ acceptance and \cite{beskos2013optimal} to tune HMC to attain $65\%$ acceptance, respectively. The above is implemented after a value of $\lambda$ is chosen, fixing the MCMC target distribution. The choice for $\lambda$ is a little more subtle as we describe below.

We choose $\lambda$ with the aim to minimize the asymptotic variance in \eqref{eq:true_lambda}. 
Given a possibly non-smooth target density $\pi$, the importance density $\pi^{\lambda}$ is smoother, encouraging faster mixing of the underlying MALA or HMC. Further, as we will motivate below, $\pi^{\lambda}$ is expected to express reduced correlation across components, which in-turn facilitates sampling, leading to the variances in  $\Sigma$ being small. Therefore, larger values of $\lambda$ will be more desirable from a sampling perspective.  On the other hand, larger values of  $\lambda$ will increase the discrepancy between $\psi(x)$ and $\psi^{\lambda}(x)$ and thus decrease $\E_{\pi^{\lambda}}(w^{\lambda}(X))$. This leads to increase in the variance,  $\Xi$. There is thus a tension between choosing a large and small value of $\lambda$. 

In the iid sampling scenario, there are guidelines for choosing the importance distribution based on the effective sample size\footnote{Later, we will refer to another effective sample size in the context of MCMC samples. To avoid confusion, we denote the effective sample size in importance sampling as $n_e$.}  of \cite{kong1992note} in importance sampling. \cite{kong1992note} estimates the effective sample size of $n$ samples from $\pi^{\lambda}$ as
\begin{eqnarray} 
\label{eq:imp_samp_ess}
    n_{\text{e}} & = & n \,\dfrac{\bar{w}_n^{2}}{\,\,\overline{w^{2}}_n}\,,
\end{eqnarray}
where $\overline{w}_n = n^{-1} \sum_{t=1}^{n} w^{\lambda}(X_{t})$ and $\overline{w^{2}}_n = n^{-1} \sum_{t=1}^{n} (w^{\lambda}(X_{t}))^{2}$. In iid sampling, an ideal choice of a proposal yields a reasonably high $n_e/n$. In our context, this suggests using a small $\lambda$ but this would not necessarily yield any benefits in MCMC importance sampling. Similarly, a value of $n_e/n$ close to 0 implies that the first moment of the weights is very small relative to the second moment, again unideal. Empirically, we find that choosing a value of $\lambda$ such that $n_{e}/n \in [0.4, 0.8]$
%
%
typically balances the tradeoff between high and low values of $\lambda$.

A practitioner may start with an initial choice of $\lambda_0$ and adjust it until the weights yield $n_e/n$ in the window above. A natural question is how to choose an initial $\lambda_0$. This is challenging to answer in general, but fixing $\pi$ to be the density of a Gaussian distribution, the following theorem indicates that the ideal $\lambda$ is inversely proportional to the dimension. Let $\Omega$ be a $d \times d$ positive-definite matrix and let $|\cdot|$ denote the determinant. 
\begin{theorem}
\label{thm:my_normal}
    Let $\pi$ be the density of $N(0, \Omega)$. Then, the MY envelope density is the density of the $N(0, \Omega + \lambda \mathbb{I}_d)$ distribution. Further, let $s_1 \leq s_2 \leq \cdots \leq s_d$ be the eigenvalues of $\Omega$. Then for iid samples from $\pi^{\lambda}$, the value of $\lambda$ corresponding to $\xi(x) = x$ that minimizes $\left| \lim_{n\to \infty} n  \text{Var} \left(\hat{\theta}^{\text{MY}}_n \right) \right|$ is denoted by $\lambda^*$ and satisfies
    \begin{eqnarray*}
    \label{eq:optimal_lambda_gauss}
        \sum_{i=1}^{d} \dfrac{\lambda^* d - s_i}{ (s_i + \lambda^*) (s_i + 2 \lambda^*)} & = & 0\,, \quad  \text{ where } \, \,
%
       \frac{s_1}{d} \amp < \amp \lambda^* \amp < \frac{s_d}{d}.
       \end{eqnarray*}
\end{theorem}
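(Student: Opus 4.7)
The plan is to unwind the definitions for the Gaussian case, get an explicit closed form for $\Xi$, and then diagonalize to reduce the minimization to a scalar calculation.

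First, I would compute the Moreau-Yosida envelope for $\psi(x) = \tfrac{1}{2} x\Tra \Omega^{-1} x$ (the constant plays no role). The minimizer in \eqref{eq:psi-lambda} is found by setting the gradient in $y$ to zero, giving $\prox^{\lambda}_{\psi}(x) = \Omega(\Omega+\lambda \mathbb{I}_d)^{-1} x$. Substituting back, the cross term simplifies by
$\Omega(\Omega+\lambda \mathbb{I}_d)^{-1}\Omega^{-1}\Omega(\Omega+\lambda \mathbb{I}_d)^{-1} + \lambda (\Omega+\lambda \mathbb{I}_d)^{-2} = (\Omega+\lambda \mathbb{I}_d)^{-1},$
yielding $\psi^{\lambda}(x) = \tfrac{1}{2} x\Tra (\Omega+\lambda \mathbb{I}_d)^{-1} x$. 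Thus $\pi^{\lambda}$ is the density of $N(0,\Omega+\lambda \mathbb{I}_d)$, establishing the first claim.

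Next, I would evaluate $\Xi$ from Theorem~\ref{thm:asymp_norm}. With iid samples, $\Sigma$ in \eqref{eq:Sigma} reduces to $\Cov_{\pi^{\lambda}}(S(X_1))$. The weight takes the clean form $w^{\lambda}(x) = \exp\!\bigl(-\tfrac{\lambda}{2} x\Tra \Omega^{-1}(\Omega+\lambda \mathbb{I}_d)^{-1} x\bigr)$ since $\Omega^{-1} - (\Omega+\lambda \mathbb{I}_d)^{-1} = \lambda \Omega^{-1}(\Omega+\lambda \mathbb{I}_d)^{-1}$. A Gaussian integral gives $\E_{\pi^{\lambda}}[w^{\lambda}] = \sqrt{|\Omega|/|\Omega+\lambda \mathbb{I}_d|}$. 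Because $\theta = 0$, the terms involving $\theta$ in \eqref{eq:true_lambda} vanish and $\E_{\pi^{\lambda}}[Xw^{\lambda}(X)] = 0$, so $\Xi = \{\E_{\pi^{\lambda}}[w^{\lambda}]\}^{-2}\E_{\pi^{\lambda}}[XX\Tra (w^{\lambda})^2]$. Recognizing the integrand of $\E_{\pi^{\lambda}}[XX\Tra (w^{\lambda})^2]$ as a constant times a Gaussian kernel with covariance $M = (\Omega+\lambda \mathbb{I}_d)(\Omega + 2\lambda \mathbb{I}_d)^{-1}\Omega$, a second Gaussian integral gives
\[
\Xi \;=\; \frac{|\Omega+\lambda \mathbb{I}_d|}{\sqrt{|\Omega|\,|\Omega+2\lambda \mathbb{I}_d|}}\,(\Omega+\lambda \mathbb{I}_d)(\Omega+2\lambda \mathbb{I}_d)^{-1}\Omega.
\]

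With $\Xi$ in hand, all matrices commute and are simultaneously diagonalizable in the eigenbasis of $\Omega$, so letting $s_1 \le \cdots \le s_d$ be the eigenvalues, the determinant factorizes as
\[
|\Xi| \;=\; \frac{\prod_{i=1}^d (s_i+\lambda)^{d+1}}{\prod_{i=1}^d s_i^{d/2 - 1}\,\prod_{i=1}^d (s_i+2\lambda)^{d/2+1}}.
\]
Taking $\partial_\lambda \log |\Xi|$ produces $(d+1)\sum_i (s_i+\lambda)^{-1} - (d+2)\sum_i (s_i+2\lambda)^{-1}$, and combining under a common denominator gives numerator $(d+1)(s_i+2\lambda) - (d+2)(s_i+\lambda) = d\lambda - s_i$, so the first-order condition is exactly the stated equation.

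Finally, for the bounds, at $\lambda = s_1/d$ every numerator $d\lambda - s_i = s_1 - s_i \le 0$, so $\partial_\lambda \log |\Xi| \le 0$ (strictly $<0$ unless all $s_i$ coincide, in which case $s_1/d = s_d/d = \lambda^*$), and at $\lambda = s_d/d$ each term is $\ge 0$. By continuity a root lies in $(s_1/d, s_d/d)$. To confirm this root is the minimizer, I would observe that $|\Xi| \to |\Omega|$ as $\lambda \to 0^+$ with negative log-derivative $-\sum_i s_i^{-1}$, while $|\Xi|\to\infty$ as $\lambda\to\infty$ (the numerator has $\lambda$-degree $d(d+1)$ versus $d(d/2+1)$ in the denominator), so the critical point in $(s_1/d, s_d/d)$ is an interior minimum.

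The main obstacle is the algebraic bookkeeping in the determinant and derivative computations: getting the exponents of $|\Omega|$, $|\Omega+\lambda\mathbb{I}_d|$, and $|\Omega+2\lambda\mathbb{I}_d|$ right, and then recognizing the compact form $d\lambda - s_i$ after partial-fraction-style simplification. Uniqueness/monotonicity of the root is not needed for the stated assertion, but the end-behavior and sign analysis above ensures the root we identify is indeed the minimizer claimed.
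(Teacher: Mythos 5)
Your proof is correct and reaches the paper's first-order condition and bounds, but it takes a partially different route in the middle step: where the paper simply cites \cite{agarwal2022principled} for the closed-form asymptotic covariance of the iid self-normalized importance sampling estimator (the expression in \eqref{eq:gauss_lambda}), you derive that formula from scratch out of Theorem~\ref{thm:asymp_norm} -- noting that for iid samples $\Sigma$ collapses to $\Cov_{\pi^{\lambda}}(S(X_1))$, that $\theta = 0$ and $\E_{\pi^{\lambda}}[Xw^{\lambda}(X)]=0$ by symmetry, and then evaluating the two Gaussian integrals $\E_{\pi^{\lambda}}[w^{\lambda}] = \sqrt{|\Omega|/|\Omega+\lambda\mathbb{I}_d|}$ and $\E_{\pi^{\lambda}}[XX\Tra (w^{\lambda})^2]$ explicitly. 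I checked that your resulting $\Xi = |\Omega+\lambda \mathbb{I}_d|\,(|\Omega|\,|\Omega+2\lambda\mathbb{I}_d|)^{-1/2}(\Omega+\lambda\mathbb{I}_d)(\Omega+2\lambda\mathbb{I}_d)^{-1}\Omega$ agrees exactly with the paper's cited expression, so your version buys a self-contained proof at the cost of two extra integral computations. Your envelope derivation also replaces the paper's Woodbury-identity manipulation with the direct identity $\Omega(\Omega+\lambda\mathbb{I}_d)^{-2} + \lambda(\Omega+\lambda\mathbb{I}_d)^{-2} = (\Omega+\lambda\mathbb{I}_d)^{-1}$, which is cleaner. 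From the log-determinant onward the two arguments coincide (identical exponents $d+1$, $d/2-1$, $d/2+1$ and the same simplification to $d\lambda - s_i$). For the location of $\lambda^*$, the paper argues via monotonicity off $[s_1/d, s_d/d]$ plus compactness that the global minimizer lies in the closed interval; your endpoint sign analysis (derivative strictly negative at $s_1/d$, strictly positive at $s_d/d$ when the eigenvalues are not all equal) combined with the end-behavior $|\Xi| \to |\Omega|$ as $\lambda \to 0^+$ and $|\Xi| \to \infty$ as $\lambda \to \infty$ is essentially equivalent, and in fact delivers the strict inequalities $s_1/d < \lambda^* < s_d/d$ claimed in the theorem statement, which the paper's written conclusion ($s_1/d \leq \lambda^* \leq s_d/d$) technically falls short of; you also correctly flag the degenerate case $s_1 = \cdots = s_d$, where strictness genuinely fails, and correctly observe that uniqueness of the root is not asserted and need not be proved.
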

\begin{proof}
    See the supplement.
\end{proof}
 Thus, for higher dimensional problems, the initial value of $\lambda$ can be chosen appropriately small. Theorem~\ref{thm:my_normal} also indicates that $\pi^{\lambda}$ is better conditioned than $\pi$, which in-turn improves the performance of the underlying Markov chains \citep{roberts1997updating}. This becomes particularly beneficial for HMC algorithms as we will see in our numerical studies.
\begin{remark}
    The optimal value of $\lambda$ in Theorem~\ref{thm:my_normal} is for iid samples from $\pi^{\lambda}$. Under this framework, in  the supplement 
    we also obtain the value of $n_e/n$ attained by this $\lambda^*$. Unsurprisingly, $n_e/n \to 1$ as $d \to \infty$. Such results are quite challenging to obtain for MCMC samples from $\pi^{\lambda}$, since a tractable expression of $\Sigma$ in $\Xi$ is unavailable.
\end{remark}
\section{Geometric ergodicity of Moreau-Yosida algorithms}
\label{sec:geom_erg}

We turn our attention to the conditions that guarantee MALA and HMC to yield a geometrically ergodic Markov chain for a Moreau-Yosida envelope density, $\pi^{\lambda}$. Properties of proximal operators help us arrive at simpler conditions for $\pi^{\lambda}$ than what is generally available for $\pi$. As is standard in the literature, we also highlight the convergence behavior for a class of one-dimensional distributions $\mathcal{E}(\beta, \gamma)$. This class of distributions was first discussed in \cite{roberts1996exponential} and provides a common ground to compare the performance with existing methods. For some $c \in \Real$ and constants $\gamma > 0$ and $0 < \beta < \infty$, $\pi \in \mathcal{E}(\beta, \gamma)$ is of the form,
\begin{eqnarray}  \label{one_dim_class}
    \pi(x) & \propto & \exp(-\gamma|x|^{\beta}) , \qquad |x| \geq c\,.
\end{eqnarray}
The value of $\beta$ controls how quickly the tails of the distribution decay. It is further assumed that $\pi$ is smooth enough for $|x| \leq c$ in order to satisfy basic differentiability assumptions.  We state a general result for distributions in this class.
\begin{result}[\cite{pereyra2016proximal}]
\label{res:one-d}
     Assume that $\pi \in \mathcal{E}(\beta, \gamma)$ with $\beta \geq 1$. Then $\pi^{\lambda} \in \mathcal{E} (\beta', \gamma')$ with $\beta' = \min(\beta, 2)$ and some $\gamma' > 0$.
\end{result}

\cite{pereyra2016proximal} does not provide the value of $\gamma'$, however we note that for distributions of the form \eqref{one_dim_class} having $\psi(x) = \gamma|x|^{\beta}$,
\begin{eqnarray} \label{eq:min_obj_fn}
    \psi^{\lambda}(x) & = & \min_{y \in \Real} \left\{\gamma|y|^{\beta} + \frac{1}{2\lambda}(x - y)^{2}\right\}\,.
\end{eqnarray}
When $\beta > 2$, the first term in the minimization \eqref{eq:min_obj_fn} dominates the second quadratic term pushing the minimization to occur at $y \approx 0$. When $\beta < 2$, the minimization tends to occur at $y \approx x$ due to the dominance of the quadratic term. The minimizer for $\beta = 2$ is obtained where the derivative of the objective function in \eqref{eq:min_obj_fn} vanishes. Consequently, for large $x$,
\begin{eqnarray}
   \gamma' & = & \gamma \mathbbm{1}(1 \leq \beta < 2) + \left(\frac{\gamma}{1+2\gamma\lambda}\right)\mathbbm{1}(\beta = 2) +  \left(\frac{1}{2\lambda}\right) \mathbbm{1}(\beta > 2)\,,
\end{eqnarray}
%
%
%
 where $\mathbbm{1}(.)$ denotes the indicator function. Result~\ref{res:one-d} ensures that when $\pi \in \mathcal{E}(\beta, \gamma)$, then $\pi^{\lambda} \in \mathcal{E}(\beta', \gamma')$. Critically, when $\pi$ has lighter than Gaussian tails, $\pi^{\lambda}$ has Gaussian tails. This will be critical to understanding the convergence behavior of $\pi^{\lambda}$-MALA and $\pi^{\lambda}$-HMC.
\subsection{Metropolis-adjusted Langevin algorithms}
\label{subsec:MALA_geom}

\cite{roberts1996exponential}  provide sufficient conditions for geometric ergodicity of a MALA algorithm given a general target $\pi$. We apply their results to $\pi^{\lambda}$. Let $q_\text{M}(x,y)$ denote the density of a MALA proposal for a target $\pi$, i.e.,  $q_\text{M}(x,y)$ is the density of $N(x + h \nabla \log \pi(x)/2, h \mathbbm{I}_d)$. Let $A(x)$ denote the acceptance region where a proposed value is guaranteed to be accepted:
\begin{eqnarray*}
     A(x) & = & \left\{y : \dfrac{\pi(y)q_{\text{M}}(y, x)}{\pi(x)q_{\text{M}}(x, y)} \geq 1\right\}\,.
\end{eqnarray*}
Let $R(x) = A(x)^{c}$ denote the potential rejection region, where there is a positive probability of rejection. Let $I(x)$ denote the set of points interior to $x$, i.e., $I(x) = \{y : \|y\| \leq \|x\|\}.$
%
%
Finally, let $A(x) \Delta I(x)$ denote the symmetric difference between $A(x)$ and $I(x)$.
\begin{theorem} [\cite{roberts1996exponential}]
\label{thm:mala_ge}
    Assume $A(\cdot)$ converges inwards in $q$, i.e.,
    \begin{eqnarray} 
    \label{eq:inwards_condn_mala}
     \lim_{\|x\| \rightarrow \infty} \int_{A(x) \Delta I(x)} q_{\text{M}}(x, y) dy & = & 0\,.
    \end{eqnarray}
Let $c(x) = x + h \nabla \log \pi(x)/2$ denote the mean of the proposal. If
    \begin{eqnarray} 
    \label{eq:lim_condn_ge}
        \eta & := & \liminf_{\|x\| \rightarrow \infty} \left\{\|x\| - \|c(x)\|\right\} \amp > \amp 0\,,
     \end{eqnarray}
then the MALA chain is geometrically ergodic for $\pi$.
\end{theorem}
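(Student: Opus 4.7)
The plan is to apply the standard Meyn--Tweedie machinery: establish a geometric drift condition toward a small set and invoke the drift-plus-minorization theorem. First I would choose a Lyapunov function of the form $V(x) = \exp(s\|x\|)$ for a sufficiently small $s > 0$. Bounded sets in $\Real^d$ are small sets for MALA because the proposal density $q_{\text{M}}(x,\cdot)$ is strictly positive on compacts and the acceptance probability is uniformly bounded below on compacts (since $\nabla \log \pi$ is continuous). So the burden reduces to verifying $PV(x) \leq \rho V(x) + b \mathbbm{1}_{C}(x)$ for some $\rho < 1$, $b < \infty$, and a bounded $C$, where $P$ is the MALA kernel.

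Second, I would decompose $PV(x)$ by conditioning on the MALA proposal $Y = c(x) + \sqrt{h}Z$ with $Z \sim N(0, \mathbbm{I}_d)$, splitting the event $\{Y \in A(x)\}$ (certain acceptance) from $\{Y \in R(x)\}$ (possible rejection). On the acceptance region, the triangle inequality gives $\|Y\| \leq \|c(x)\| + \sqrt{h}\|Z\|$. Condition \eqref{eq:lim_condn_ge} supplies $\|c(x)\| \leq \|x\| - \eta + o(1)$ as $\|x\| \to \infty$, and a Gaussian moment generating function calculation shows $\E[\exp(s\sqrt{h}\|Z\|)] < \infty$ with the value approaching $1$ as $s \downarrow 0$. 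Choosing $s$ small enough then yields a bound of the form $\E[V(Y)\mathbbm{1}_{A(x)}(Y)] \leq e^{-s\eta/2} V(x)$ for $\|x\|$ large enough.

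Third, the main obstacle is the rejection region $R(x)$: there the chain either stays at $x$ (contributing $V(x)$) with the rejection probability, or accepts $Y$ with acceptance ratio strictly less than $1$. The inwards-convergence hypothesis \eqref{eq:inwards_condn_mala} is tailored to this: since $A(x) \Delta I(x)$ carries vanishing proposal mass, $R(x)$ asymptotically coincides with the exterior $\{y : \|y\| > \|x\|\}$, and the proposal mass placed there is at most
\begin{equation*}
\Pr(\|Y\| > \|x\|) \leq \Pr\bigl(\sqrt{h}\|Z\| > \|x\| - \|c(x)\|\bigr) \leq \Pr(\sqrt{h}\|Z\| > \eta/2),
\end{equation*}
which together with \eqref{eq:inwards_condn_mala} can be driven below any prescribed $\epsilon > 0$ by taking $\|x\|$ large. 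A Cauchy--Schwarz style split then bounds $\E[V(Y)\mathbbm{1}_{R(x)}(Y)]$ and $V(x)\Pr(Y \in R(x))$ by quantities that are $o(V(x))$ relative to the contractive $e^{-s\eta/2} V(x)$ term from the previous step.

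Fourth, combining the three pieces, for $s$ sufficiently small and $\|x\|$ sufficiently large I obtain $PV(x) \leq \rho V(x)$ with $\rho < 1$; on the complementary bounded set $C$, $PV$ is bounded by some $b < \infty$, giving the drift condition. Minorization on $C$ is standard, so the Meyn--Tweedie theorem yields geometric ergodicity. The hardest part will be the bookkeeping in the third step: one must show that the mass the proposal places outside $I(x)$, together with the acceptance-ratio shortfall on $R(x)$, cannot overpower the inward push of $c(x)$. This really rests on tying \eqref{eq:inwards_condn_mala} to the Gaussian tail of $Z$ and using the detailed-balance form of the acceptance ratio to bound it uniformly in $y \in R(x)$ for large $\|x\|$.
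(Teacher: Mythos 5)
The paper never proves this theorem: it is quoted from \cite{roberts1996exponential} and used only as an ingredient in Theorem~\ref{thm:geom_mala}, so the relevant benchmark is the original drift-and-minorization argument of Roberts and Tweedie, whose architecture (Lyapunov function $V(x)=e^{s\|x\|}$, compact sets small, a split of $PV$ over $A(x)$ and $R(x)$) your plan correctly reproduces. Within that architecture, however, your third step rests on a false claim. The inwards-convergence hypothesis \eqref{eq:inwards_condn_mala} controls only $A(x)\,\Delta\, I(x)=\left(A(x)\setminus I(x)\right)\cup\left(R(x)\cap I(x)\right)$; it says nothing about the exterior rejection region $R(x)\cap I(x)^{c}$. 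Your estimate $\Pr\left(\sqrt{h}\|Z\|>\eta/2\right)$ is a constant in $\|x\|$ and cannot be ``driven below any prescribed $\epsilon$ by taking $\|x\|$ large''; indeed the MALA rejection probability is generically bounded away from zero in the tails (optimally tuned MALA accepts roughly $57\%$ of proposals), so $V(x)\Pr(Y\in R(x))$ is of exact order $V(x)$, not $o(V(x))$, and the combination in your fourth step does not produce $\rho<1$. The repair is structural rather than bookkeeping: write $PV(x)=\E[V(Y)]+\E\left[(1-\alpha(x,Y))\,(V(x)-V(Y))\,\mathbbm{1}_{R(x)}(Y)\right]$ and observe that on $R(x)\cap I(x)^{c}$ one has $V(x)<V(Y)$, so that contribution is \emph{negative} (rejecting an outward proposal helps the drift); the only harmful rejection term is supported on $R(x)\cap I(x)\subseteq A(x)\,\Delta\,I(x)$, which is precisely the set whose proposal mass \eqref{eq:inwards_condn_mala} annihilates. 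Your sketch never exploits this sign, and it is the one idea the hypothesis is designed to enable.

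Your second step has an independent quantitative gap. The triangle inequality gives $\E\left[e^{s(\|Y\|-\|x\|)}\right]\leq e^{-s\eta}\,\E\left[e^{s\sqrt{h}\|Z\|}\right]$, and the exponent of the right-hand side linearizes as $s\left(\sqrt{h}\,\E\|Z\|-\eta\right)+O(s^{2})$; since $\E\|Z\|\asymp\sqrt{d}$, no choice of small $s$ yields your contraction $e^{-s\eta/2}$ unless $\eta>\sqrt{h}\,\E\|Z\|$, which the hypothesis $\eta>0$ does not supply ($\eta$ may be tiny and $d$ large). One needs the sharper estimate that when $\|c(x)\|\to\infty$, the increment $\|c(x)+\sqrt{h}Z\|-\|c(x)\|$ converges to the mean-zero one-dimensional projection $\sqrt{h}\,\langle Z,\, c(x)/\|c(x)\|\rangle$, with uniformly controlled moment generating function, giving $\limsup_{\|x\|\to\infty}\E\left[e^{s(\|Y\|-\|x\|)}\right]\leq e^{-s\eta+s^{2}h/2}<1$ for $0<s<2\eta/h$ (the case of $\|c(x)\|$ bounded along a subsequence is immediate, since then $\|x\|-\|c(x)\|\to\infty$). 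With this estimate and the corrected rejection analysis above, the geometric drift follows; your small-set and minorization step is standard and essentially fine, granted the boundedness and local positivity of $\pi$ assumed in the source, together with the continuity of $\nabla\log\pi$ that MALA requires anyway (and which holds for the $\pi^{\lambda}$ targets to which this paper applies the theorem).
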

Due to the specific structure of  $\nabla \log \pi^{\lambda}$, condition \eqref{eq:lim_condn_ge} can be met as long as the following assumption is satisfied.
\begin{assum}
\label{ass:limsup}
For a target density $\pi$ of the form \eqref{eq:targ_dens} with $\psi \in \Gamma(\Real^d)$, we assume
\begin{eqnarray}
\limsup_{\|x\| \to \infty} \dfrac{\|\prox^{\lambda}_{\psi}(x)\|}{\|x\|} \amp = \amp l \amp < \amp 1\,.
\end{eqnarray}
\end{assum}

\begin{theorem}
\label{thm:geom_mala}
     Let the target density be $\pi^{\lambda}$ so that $c(x) = x - h\nabla \psi^{\lambda}(x)/2$. Assume that $A(\cdot)$ converges inwards. Then if $h \leq 2\lambda$, the $\pi^{\lambda}$-MALA chain is geometrically ergodic.
\end{theorem}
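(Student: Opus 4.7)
The plan is to reduce the claim to Theorem~\ref{thm:mala_ge} (the Roberts--Tweedie sufficient condition for MALA geometric ergodicity) applied to the target $\pi^\lambda$. The assumption that $A(\cdot)$ converges inwards is already imposed as a hypothesis, so all that remains is to verify the drift condition $\eta = \liminf_{\|x\| \to \infty}(\|x\| - \|c(x)\|) > 0$ using Assumption~\ref{ass:limsup} and the bound $h \leq 2\lambda$.

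The first step is to rewrite $c(x)$ in a form that exposes the proximal mapping. By Proposition~\ref{prop:my_properties}(c), $\nabla \psi^{\lambda}(x) = \tfrac{1}{\lambda}(x - \prox^{\lambda}_{\psi}(x))$, so substituting into $c(x) = x - \tfrac{h}{2}\nabla \psi^{\lambda}(x)$ gives the affine combination
\begin{equation*}
c(x) \amp = \amp \left(1 - \frac{h}{2\lambda}\right) x + \frac{h}{2\lambda}\, \prox^{\lambda}_{\psi}(x).
\end{equation*}
The hypothesis $h \leq 2\lambda$ guarantees that both coefficients are nonnegative and sum to one, so $c(x)$ is a genuine convex combination of $x$ and $\prox^{\lambda}_{\psi}(x)$. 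Applying the triangle inequality and rearranging yields
\begin{equation*}
\|x\| - \|c(x)\| \amp \geq \amp \frac{h}{2\lambda}\left(\|x\| - \|\prox^{\lambda}_{\psi}(x)\|\right).
\end{equation*}

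The second step invokes Assumption~\ref{ass:limsup}: fix $\epsilon \in (0, 1-l)$ and choose $R$ such that $\|\prox^{\lambda}_{\psi}(x)\| \leq (l+\epsilon)\|x\|$ whenever $\|x\| \geq R$. The previous display then gives $\|x\| - \|c(x)\| \geq \tfrac{h}{2\lambda}(1 - l - \epsilon)\|x\|$ for all such $x$, which diverges to $+\infty$. In particular $\eta = +\infty > 0$, so the drift condition \eqref{eq:lim_condn_ge} is satisfied. Together with the assumed inwards convergence of $A(\cdot)$, Theorem~\ref{thm:mala_ge} yields geometric ergodicity of the $\pi^{\lambda}$-MALA chain.

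The main obstacle is controlling $\|\prox^{\lambda}_{\psi}(x)\|$ at infinity, which is why Assumption~\ref{ass:limsup} is essential. The role of the step-size constraint $h \leq 2\lambda$ is equally delicate: it is exactly what turns the expression for $c(x)$ into a convex combination, so that the triangle inequality tightens into the useful bound above. If $h > 2\lambda$ the coefficient $1 - h/(2\lambda)$ becomes negative, and the triangle inequality instead gives $\|c(x)\| \leq (h/(2\lambda) - 1)\|x\| + (h/(2\lambda))\|\prox^{\lambda}_{\psi}(x)\|$, which only produces a positive drift under a finer interaction between $h$, $\lambda$, and the constant $l$ from Assumption~\ref{ass:limsup}; the clean statement is obtained precisely in the range $h \leq 2\lambda$.
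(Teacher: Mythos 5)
Your proof is correct and follows essentially the same route as the paper: rewrite $c(x)$ via $\nabla\psi^{\lambda}(x) = \lambda^{-1}(x - \prox^{\lambda}_{\psi}(x))$ as a convex combination of $x$ and $\prox^{\lambda}_{\psi}(x)$ (valid precisely because $h \leq 2\lambda$), apply the triangle inequality to get $\|x\| - \|c(x)\| \geq \tfrac{h}{2\lambda}\bigl(\|x\| - \|\prox^{\lambda}_{\psi}(x)\|\bigr)$, and conclude from Assumption~\ref{ass:limsup} that the drift term diverges, so \eqref{eq:lim_condn_ge} holds and Theorem~\ref{thm:mala_ge} applies. The only cosmetic difference is that you verify the divergence by a direct $\epsilon$-argument from the definition of the $\limsup$, whereas the paper packages the identical fact as Lemma~\ref{lem:x_1-prox_limit}; your inlined version is equally rigorous and slightly more streamlined.
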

\begin{proof}
    See the supplement.
\end{proof}
%
%



\begin{remark}
    \cite{pereyra2016proximal} does not employ the condition in  Assumption~\ref{ass:limsup} and instead state that for $\psi \in \Gamma(\Real^d)$, $\|\text{prox}^{\lambda}_{\psi}(x)\| < \|x\|$ for all $x$. Unfortunately, this is not true in general. Consider $\psi(x) = (x - 5)^2$, corresponding to a Gaussian density centered at $5$. Here
    \begin{eqnarray*}
        \text{prox}^{\lambda}_{\psi}(x) & = & x\dfrac{1}{1 + 2\lambda} + 5 \dfrac{2 \lambda}{1  + 2\lambda}\,.
    \end{eqnarray*}
    For $0 <x < 5$, $|\text{prox}^{\lambda}_{\psi}(x)| > |x|$, and this is true for many target densities that are not maximized at 0. Further, even when $\|\text{prox}^{\lambda}_{\psi}(x)\| < \|x\|$ is true for all $x$ (except $x \ne 0$), this is not sufficient to ensure $\eta$ in \eqref{eq:lim_condn_ge} is positive, since the $\liminf$ can still be zero.

    We instead employ Assumption~\ref{ass:limsup}. Many log-concave  densities satisfy this condition. A known exception is the Laplace density, for which even the original conditions of Theorem~\ref{thm:mala_ge} are not satisfied. However, models employing an $\ell_1$ prior with an $\ell_2$ likelihood satisfy this assumption. That is, for $a, b > 0$ and $x \in \Real$, consider $\psi(x) =  ax^2 + b|x|.$
Then
\begin{equation*}
\prox^{\lambda}_{\psi}(x) \amp = \amp \left(\frac{x - \lambda b}{1 + 2a\lambda}\right) \mathbbm{1}(x > \lambda b) + \left(\frac{x + \lambda b}{1 + 2a\lambda}\right) \mathbbm{1}(x < - \lambda b)\,,
\end{equation*}
which satisfies Assumption~\ref{ass:limsup} for all $\lambda > 0$.
Assumption~\ref{ass:limsup} is also sufficient to yield the geometric ergodicity results of \cite{pereyra2016proximal}.
\end{remark}
\begin{remark}
\label{rem:lis0}
    If $l = 0$ in Assumption~\ref{ass:limsup}, as it is often for light-tailed distributions, then it is possible to modify the proof so that under the assumption of $A(\cdot)$ converging inwards and with a choice of $h \leq 4 \lambda$, $\pi^{\lambda}$-MALA is geometrically ergodic.
\end{remark}
Theorem~\ref{thm:geom_mala} alleviates the burden of verifying \eqref{eq:lim_condn_ge} since a judicious choice of $h$ ensures condition \eqref{eq:lim_condn_ge} is satisfied.  Condition~\eqref{eq:inwards_condn_mala} is common and indicates that in the tails, the algorithm should either propose a value in the rejection region or propose a point moving away from the tails that is sure to be accepted. \cite{MR4003576} highlight that this condition is important to ``limit the degree of oscillation in the tails'' of the target density. \cite{roberts1996exponential} explain that the condition is connected to convexity of $\psi^{\lambda}$. Verification of the condition is typically done for different models on a case-by-case basis.
In addition to a general target, Theorem~\ref{thm:geom_mala} and Result~\ref{res:one-d}  highlight the improvements in mixing for $\pi^{\lambda}$ for when $\pi \in \mathcal{E}(\beta, \gamma)$.
\begin{enumerate}
    \item If $\beta \in [1, 2)$ then $\beta' = \beta$. For $h > 0$, MALA is geometrically ergodic for both $\pi$ and $\pi^{\lambda}$ by \citet[Section 16.1.3]{meyn2012markov} and  \citet[Theorem 4.1]{roberts1996exponential}.
    \item If $\beta = 2$ then $\beta' = 2$. According to \cite{roberts1996exponential}  $\pi^{\lambda}$-MALA is geometrically ergodic when $h\gamma' < 2$.  Therefore, $\pi^{\lambda}$-MALA is geometrically ergodic if $h < 4\lambda + 2/\gamma$.
    \item If $\beta > 2$ then by \eqref{eq:not_geom} $\pi$-MALA is not geometrically ergodic. However in this case, for $\pi^{\lambda}$, $\beta' = 2$ with $\gamma' = 1/(2 \lambda)$.  Consequently, the $\pi^{\lambda}$-MALA chain is geometrically ergodic as long as $h \leq 4\lambda$. This is consistent with Remark~\ref{rem:lis0}.
\end{enumerate}
For $\beta > 2$, the target $\pi$ corresponds to distributions with tails lighter than Gaussian. In such cases, efficient implementation of gradient-based schemes is difficult.  \cite{livingstone2022barker} proposed Barker's algorithm to mitigate this problem. In Section~\ref{sec:Poisson} we compare our MY-IS strategy with the Barker's algorithm as well. 
\subsection{Hamiltonian Monte Carlo} 
\label{sec:hmc}

Hamiltonian Monte Carlo is a gradient-based MCMC algorithm that has enjoyed success in a variety of problems. For a given target, $\pi(x) \propto \exp(-\psi(x))$, the HMC proposal is constructed by (approximately) conserving the total energy or Hamiltonian,
\begin{eqnarray*}
    H(x, z) & = & \psi(x) + \frac{1}{2}z^{\text{T}}M^{-1}z\,.
\end{eqnarray*}
Here, $z$ is a $d$-dimensional augmented momentum variable \citep[see][for details]{MR2858447} and $M$ is a $d \times d$ positive-definite mass matrix. We will fix $M=\mathbbm{I}_{d}$. Let $L \geq 1$ be an integer representing the number of leapfrog steps in the approximation of the Hamiltonian dynamics and let $\varepsilon > 0$ be a step-size. For a current state $x_0$ and $z_0 \sim N(0, \mathbbm{I}_d)$, \cite{MR4003576} state that the HMC proposal for a target $\pi$ can be expressed as
    \begin{eqnarray}
        x_{L\varepsilon} & = & x_{0} - \frac{L\varepsilon^{2}}{2}\nabla \psi(x_{0}) - \varepsilon^{2}\sum_{i=1}^{L-1} (L-i)\nabla \psi(x_{i\varepsilon}) + L\varepsilon z_{0} \,,
    \end{eqnarray}
where $x_{i\varepsilon}$ is the state at the $i^{\text{th}}$ leapfrog step. We will denote the mean of the proposed value as $m_{L, \varepsilon}(x_{0}, z_{0}) = x_{0} - (L\varepsilon^{2}/2)\nabla \psi(x_{0}) - \varepsilon^{2}\sum_{i=1}^{L-1} (L-i)\nabla \psi(x_{i\varepsilon})$, so that $x_{L \epsilon} = m_{L, \varepsilon}(x_{0}, z_{0}) + L \epsilon z_0$. Let $q_{\text{H}}(x_0, \cdot)$ denote the density of the HMC proposal. Define the potential rejection region as
\begin{eqnarray}
    R(x) & = &  \left\{y : \dfrac{\pi(y)q_\text{H}(y, x)}{\pi(x)q_\text{H}(x, y)} \leq  1  \right\}\,.
\end{eqnarray}
Sufficient conditions for geometric ergodicity for HMC were provided in \cite{MR4003576}. The following assumption  on $L$ ensures irreducibility of the Markov chain.

\begin{assum}
\label{ass:L}
The leapfrog steps $L$ is chosen using a distribution $\mathcal{L}(\cdot)$ such that $\Pr_{\mathcal{L}}[L=1] > 0$ and for any $(x_{0}, z_{0}) \in \mathbb{R}^{2d}$ and $\varepsilon > 0$, there is an $s < \infty$ such that $\mathbb{E}_{\mathcal{L}}[e^{s\|x_{L\varepsilon}\|}] < \infty$.
\end{assum}

\begin{theorem} [\cite{MR4003576}]
\label{thm:living_hmc}
    The HMC algorithm produces a $\pi$-geometrically ergodic Markov chain if Assumption~\ref{ass:L} holds and for $\eta(d) = \Gamma((d+1)/2)/\Gamma(d/2)$, $1/2 < \delta < 1$,
    \begin{eqnarray} 
    \label{limsup_condn}
        \limsup_{\|x\| \rightarrow \infty, \|z\| \leq \|x\|^{\delta}} \left\{\|m_{L, \varepsilon}(x, z)\| - \|x\|\right\} & < & -\sqrt{2}L\varepsilon \eta(d)\, \qquad \text{    and} \\ 
        \label{eq:inwards_condn_hmc}
         \lim_{\|x\| \rightarrow \infty} \int_{R(x) \cap I(x)} q_{\text{H}}(x, y) dy & = & 0 \,.
    \end{eqnarray}
    %
    %
    %
\end{theorem}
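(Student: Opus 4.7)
The plan is to establish geometric ergodicity through the classical Meyn--Tweedie machinery: verify irreducibility with respect to Lebesgue measure and aperiodicity, identify small sets, and then prove a geometric Foster--Lyapunov drift condition with drift function $V(x) = \exp(s\|x\|)$ for a suitably small $s > 0$. The exponential form of $V$ is the natural choice because the Hamiltonian proposal carries Gaussian-scale fluctuations whose contribution to $V$ can be absorbed into the inward drift only when $s$ is small, and because Assumption~\ref{ass:L} furnishes $\mathbb{E}_{\mathcal{L}}[\exp(s\|x_{L\varepsilon}\|)] < \infty$, which is what makes $PV$ finite at every $x$.

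I would first dispatch irreducibility, aperiodicity, and the small-set property. Assumption~\ref{ass:L} forces $\Pr_{\mathcal{L}}(L=1) > 0$, and at $L=1$ the HMC proposal reduces to a Gaussian with mean $x - (\varepsilon^{2}/2)\nabla\psi(x)$ and covariance $\varepsilon^{2}\mathbbm{I}_d$, which has a strictly positive Lebesgue density on $\Real^d$. After the Metropolis--Hastings correction and using continuity of $\nabla\psi$, compact sets become small sets via the standard minorization argument, and Lebesgue irreducibility together with aperiodicity follow immediately.

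The heart of the proof is the drift bound $(PV)(x) \leq \rho\, V(x) + b\, \mathbbm{1}_C(x)$ for some $\rho < 1$, $b < \infty$, and a compact $C$. Splitting the one-step expectation into accepted and rejected moves gives
\[
(PV)(x) \;=\; \E\!\left[V(x_{L\varepsilon})\, \alpha(x, x_{L\varepsilon})\right] \;+\; V(x)\,\E\!\left[1 - \alpha(x, x_{L\varepsilon})\right].
\]
For the accepted contribution I would condition on $z_{0}$. On the ``typical'' set $\{\|z_0\| \leq \|x\|^{\delta}\}$, hypothesis~\eqref{limsup_condn} yields $\|m_{L,\varepsilon}(x, z_0)\| \leq \|x\| - \sqrt{2}L\varepsilon\, \eta(d) - \zeta$ for some $\zeta > 0$ and all sufficiently large $\|x\|$. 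Since $x_{L\varepsilon} = m_{L,\varepsilon}(x, z_0) + L\varepsilon\, z_0$ and $\E\|z_0\| \approx \sqrt{2}\, \eta(d)$ for $z_0 \sim N(0, \mathbbm{I}_d)$, a Gaussian moment-generating-function estimate for $\exp(sL\varepsilon \|z_0\|)$ gives $\E[V(x_{L\varepsilon})\, \mathbbm{1}_{\{\|z_0\| \leq \|x\|^{\delta}\}}] \leq e^{-s\zeta/2}\, V(x)$ for $s$ chosen small enough. The atypical momenta $\{\|z_0\| > \|x\|^{\delta}\}$ contribute negligibly because their probability decays super-exponentially in $\|x\|^{\delta}$ by Gaussian tail concentration, combined with the finite MGF bound from Assumption~\ref{ass:L}.

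The main obstacle is the rejection term $V(x)\,\E[1 - \alpha]$, which carries the full weight $V(x)$ and thus cannot decay unless the rejection probability tends to zero in the tails. Condition~\eqref{eq:inwards_condn_hmc} is precisely the tool for this: any proposal landing in the automatic-acceptance region $A(x)$ is accepted by definition, and the residual possibility of rejection while moving inward, captured by the mass of $R(x) \cap I(x)$, is assumed to vanish as $\|x\| \to \infty$. Outward proposals, while they may be rejected, have low proposal mass on the typical $z_0$-set by~\eqref{limsup_condn}, so the total rejection probability is $o(1)$ as $\|x\|\to\infty$. Stitching the two contributions together yields $(PV)(x) \leq \rho\, V(x)$ outside a compact set, and the Meyn--Tweedie theorem then delivers $V$-geometric ergodicity of the $\pi$-invariant HMC kernel.
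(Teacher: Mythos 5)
This theorem is quoted in the paper from \cite{MR4003576} and is not proved there, so the only meaningful comparison is with the original source's argument. Your overall architecture matches it: drift function $V(x)=e^{s\|x\|}$ with small $s$, minorization on compact sets via the $\Pr_{\mathcal{L}}[L=1]>0$ Gaussian component, the split of momenta into $\{\|z_0\|\leq\|x\|^{\delta}\}$ and its complement with a Gaussian MGF estimate on the typical set (note $\E\|z_0\|=\sqrt{2}\,\eta(d)$ exactly, not approximately, which is precisely why the margin $\sqrt{2}L\varepsilon\eta(d)$ appears in \eqref{limsup_condn}), and Assumption~\ref{ass:L} guaranteeing $PV<\infty$. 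Up to the rejection term, your sketch is a faithful reconstruction.

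The genuine gap is in your treatment of rejection. You claim that ``the total rejection probability is $o(1)$ as $\|x\|\to\infty$,'' arguing that outward proposals have low mass under \eqref{limsup_condn}. This is false in general: \eqref{limsup_condn} only shifts the proposal \emph{mean} inward by $\sqrt{2}L\varepsilon\eta(d)+\zeta$, while the added noise $L\varepsilon z_0$ has $\|z_0\|$ concentrated near $\sqrt{2}\,\eta(d)$ with $O(1)$ fluctuations; the margin exactly cancels the expected noise norm, leaving an inward drift in expectation only. Hence outward proposals retain probability bounded away from zero, and they may well be rejected with non-vanishing probability, so your stitching step $(PV)(x)\leq\rho V(x)$ does not follow as written. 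The correct resolution, and the reason \eqref{eq:inwards_condn_hmc} is stated over $R(x)\cap I(x)$ rather than all of $R(x)$, is the pointwise decomposition
\begin{equation*}
\alpha(x,y)V(y) + \bigl(1-\alpha(x,y)\bigr)V(x) \;\leq\;
\begin{cases}
V(y) & \text{if } \|y\| > \|x\|\,, \\
V(y) + \mathbbm{1}_{R(x)}(y)\,V(x) & \text{if } \|y\| \leq \|x\|\,,
\end{cases}
\end{equation*}
which uses that rejecting an \emph{outward} move is harmless for $V$ (since then $V(x)\leq V(y)$) and that inward moves can only be rejected on $R(x)$. Integrating against $q_{\text{H}}(x,\cdot)$ gives $(PV)(x)\leq \E[V(x_{L\varepsilon})] + V(x)\int_{R(x)\cap I(x)}q_{\text{H}}(x,y)\,dy$; the first term is at most $\rho' V(x)$ with $\rho'<1$ by your MGF computation, and the second is $o(1)\,V(x)$ by \eqref{eq:inwards_condn_hmc}. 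With this replacement for your final paragraph, the argument closes; without it, the drift bound fails whenever the tail rejection probability does not vanish, which is the typical situation.
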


Condition \eqref{limsup_condn} can be  challenging to verify, but properties of $\pi^{\lambda}$ make this task easier.
%
%
\begin{theorem}
\label{thm:geom_hmc}
    Suppose Assumption~\ref{ass:limsup} and Assumption~\ref{ass:L} hold. Then  $\pi^{\lambda}$-HMC algorithm is geometrically ergodic for a sufficiently small $\varepsilon$, if \eqref{eq:inwards_condn_hmc} holds for $\pi^{\lambda}$.
    %
    %
\end{theorem}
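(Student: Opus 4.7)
The plan is to verify the hypotheses of Theorem~\ref{thm:living_hmc} applied to the Moreau-Yosida target $\pi^{\lambda}$. Assumption~\ref{ass:L} is already imposed and the inwards condition \eqref{eq:inwards_condn_hmc} is assumed to hold for $\pi^{\lambda}$, so the only remaining task is to establish the drift-type condition \eqref{limsup_condn} with $\psi$ replaced by $\psi^{\lambda}$. The key structural input is Proposition~\ref{prop:my_properties}(c), which gives $\nabla \psi^{\lambda}(x) = \lambda^{-1}(x - \prox^{\lambda}_{\psi}(x))$, together with the standard fact that $\nabla \psi^{\lambda}$ is globally $(1/\lambda)$-Lipschitz. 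Under Assumption~\ref{ass:limsup}, for any $l' \in (l, 1)$ there exists $R > 0$ with $\|\prox^{\lambda}_{\psi}(x)\| \leq l' \|x\|$ whenever $\|x\| \geq R$; in particular $\|\nabla \psi^{\lambda}(x)\| \leq (1 + l')\|x\|/\lambda$ in the tails.

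With these tools in place, I would first bound the leapfrog iterates: an induction over $i \in \{0, 1, \ldots, L\}$ using the Lipschitz bound on $\nabla\psi^{\lambda}$ together with $\|z_{0}\| \leq \|x_{0}\|^{\delta}$ yields $\|x_{i\varepsilon} - x_{0}\| \leq C_{1} i \varepsilon \|x_{0}\|^{\delta} + C_{2} i^{2} \varepsilon^{2} \|x_{0}\|/\lambda$ with constants depending only on $l'$ and $L$. For $\varepsilon$ small enough that $L^{2}\varepsilon^{2}/(2\lambda) < 1$, this gives $\|x_{i\varepsilon}\| \leq (1 + O(\varepsilon^{2}))\|x_{0}\|$ in the tails, so $\|\prox^{\lambda}_{\psi}(x_{i\varepsilon})\| \leq l' \|x_{i\varepsilon}\| \leq l'(1 + O(\varepsilon^{2}))\|x_{0}\|$. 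Substituting $\nabla \psi^{\lambda}(x_{i\varepsilon}) = \lambda^{-1}(x_{i\varepsilon} - \prox^{\lambda}_{\psi}(x_{i\varepsilon}))$ into $m_{L,\varepsilon}$, collecting the $x_{i\varepsilon}$ coefficients via $\sum_{i=1}^{L-1}(L-i) = L(L-1)/2$, and absorbing the Lipschitz-replacement errors $x_{i\varepsilon} - x_{0} = O(\varepsilon \|x_{0}\|^{\delta}) + O(\varepsilon^{2}\|x_{0}\|)$ into a remainder $E_{\varepsilon}$ of size $O(\varepsilon^{3} \|x_{0}\|^{\delta}) + O(\varepsilon^{4} \|x_{0}\|)$, the triangle inequality gives
\begin{equation*}
\|m_{L,\varepsilon}(x_{0}, z_{0})\| - \|x_{0}\| \;\leq\; -\frac{L^{2}\varepsilon^{2}}{2\lambda}(1 - l')\|x_{0}\| + O(\varepsilon^{4} \|x_{0}\|) + O(\varepsilon^{3} \|x_{0}\|^{\delta}).
\end{equation*}
Choosing $\varepsilon$ small so that the leading $\varepsilon^{2}$ term dominates the $\varepsilon^{4}$ correction, the right side tends to $-\infty$ as $\|x_{0}\| \to \infty$ uniformly in $\|z_{0}\| \leq \|x_{0}\|^{\delta}$ and hence is eventually below $-\sqrt{2} L \varepsilon \eta(d)$, verifying \eqref{limsup_condn}.

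The main obstacle is the careful bookkeeping in the middle step: controlling each intermediate $\prox^{\lambda}_{\psi}(x_{i\varepsilon})$ and the cumulative Lipschitz slack so that the combined remainder stays dominated by the quadratic-in-$\varepsilon$ drift term, uniformly over $\|z_{0}\| \leq \|x_{0}\|^{\delta}$ with $\delta < 1$. The fact that $\delta<1$ is critical, since it ensures that all contributions from the momentum variable are of strictly lower order than $\|x_0\|$. Once this error control is in hand, Assumption~\ref{ass:limsup} forces a strict contraction of $m_{L,\varepsilon}$ toward the origin in the tails, the hypotheses of Theorem~\ref{thm:living_hmc} are met, and geometric ergodicity of the $\pi^{\lambda}$-HMC chain follows.
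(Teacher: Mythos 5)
Your proposal is correct, but it takes a genuinely different route from the paper's proof. You verify the drift condition \eqref{limsup_condn} directly: expanding the leapfrog iterates, using $\nabla\psi^{\lambda}(x) = \lambda^{-1}\left(x - \prox^{\lambda}_{\psi}(x)\right)$ to write the dominant part of the proposal mean as the convex combination $(1-a)x_{0} + a\,\prox^{\lambda}_{\psi}(x_{0})$ with $a = L^{2}\varepsilon^{2}/(2\lambda)$, and letting Assumption~\ref{ass:limsup} turn that combination into a norm contraction of order $\varepsilon^{2}\|x_{0}\|$, with the Lipschitz-replacement and momentum errors held at orders $\varepsilon^{4}\|x_{0}\|$ and $\varepsilon^{3}\|x_{0}\|^{\delta}$ (and $\delta < 1$ is, as you say, exactly what keeps the momentum contribution subdominant). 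The paper never touches the leapfrog dynamics at all: it instead invokes the sufficient conditions of Theorem~\ref{thm:suff_condn_hmc} and checks \eqref{eq:living1}, \eqref{eq:living2}, and \eqref{eq:living3also} for $\psi^{\lambda}$. There, \eqref{eq:living1} follows from Lemma~\ref{lem:x_1-prox_limit}, \eqref{eq:living3also} from a one-line triangle inequality giving the bound $2/\lambda$, and the angle condition \eqref{eq:living2} is the delicate step, proved via the Moreau decomposition $\nabla\psi^{\lambda}(x) = \prox^{1/\lambda}_{\psi^{*}}(x/\lambda)$ and firm nonexpansiveness of the proximal mapping of the Fenchel conjugate $\psi^{*}$. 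What each approach buys: yours avoids the conjugate-duality machinery entirely --- notably you never need the inner-product condition \eqref{eq:living2}, because the explicit convex-combination form yields contraction outright --- at the cost of redoing the discretization bookkeeping that the cited result already encapsulates; the paper's route is shorter and more modular given the citation, but its verification of \eqref{eq:living2} is the least elementary part of the argument. Two points to make explicit in a final write-up: first, the bound $\|\prox^{\lambda}_{\psi}(x_{i\varepsilon})\| \leq l'\|x_{i\varepsilon}\|$ is only valid on $\{\|x\| \geq R\}$, so you should record that your iterate bound forces $\|x_{i\varepsilon}\| \geq \|x_{0}\|/2 \geq R$ for $\|x_{0}\|$ large and $\varepsilon$ small; second, your constants, and hence the admissible $\varepsilon$, depend on $L$, which is harmless here since \eqref{limsup_condn} is checked at fixed $L$ and the randomization of $L$ is governed separately by Assumption~\ref{ass:L}, but it is worth stating.
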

\begin{proof}
    See the supplement.
\end{proof}
We recognize that as it is true for general targets, demonstrating \eqref{eq:inwards_condn_hmc} can be challenging to establish for general problems. We have, however, some guiding principles based on $\pi$'s tail behavior when $\pi$ belongs to the exponential family class $\mathcal{E}(\beta, \gamma)$.
   \begin{enumerate}
    \item If $\beta \in [1, 2)$ then $\beta' = \beta$. By the result of \cite{MR4003576}, $\pi^{\lambda}$-HMC is geometrically ergodic.
    \item If $\beta = 2$ then $\beta' = 2$ and $\pi$-HMC is geometrically ergodic for sufficiently small  $\varepsilon$. Consequently, $\pi^{\lambda}$-HMC is also geometrically ergodic for sufficiently small $\varepsilon$.
    \item If $\beta > 2$ then $\pi$-HMC is not geometrically ergodic \citep{MR4003576}. However for $\pi^{\lambda}$, $\beta' = 2$, and thus $\pi^{\lambda}$-HMC is geometrically ergodic for sufficiently small $\varepsilon$.
\end{enumerate}
\section{Numerical studies}
\label{sec:numerical}

We evaluate our proposed methodology in a variety of simulation studies. 
%
In all studies we tune $\lambda$ following the guidelines in Section~\ref{sec:choosing_lambda}. We implement MY-IS with both $\pi^{\lambda}$-MALA and $\pi^{\lambda}$-HMC samples. We compare Markov chains $\pi^{\lambda}$-MALA with P-MALA of \cite{pereyra2016proximal} and $\pi^{\lambda}$-HMC with P-HMC of \cite{chaari2016hamiltonian}.  For the Bayesian Poisson random effects model, \cite{livingstone2022barker} show that the traditional MALA algorithm is unreliable and demonstrate the superiority of their Barker's algorithm. We compare $\pi^{\lambda}$-MALA and $\pi^{\lambda}$-HMC chains  with the Barker's algorithm as well. We employ a warm start for all Markov chains; other implementation details of these algorithms are provided in the supplement.

We evaluate methods on their statistical efficiency
by comparing variances of estimators of the posterior means. The estimated relative efficiency of Method 1 over Method 2 is
\begin{eqnarray*}
    \text{eff}^{\text{rel}} & = &  \dfrac{1}{p}\sum_{i=1}^{p} \dfrac{\hat{\tau}^{2}_{i({\text{Method 2}})}}{\hat{\tau}^{2}_{i({\text{Method 1}})}}\,,
\end{eqnarray*}
where $\hat{\tau}^{2}_{i(\cdot)}$ denotes the estimated asymptotic variance of component $i$ from the respective method. Variance of MY-IS estimators are estimated using the methods described in the supplement. R code is available at \url{https://github.com/sapratim/MoreauYosidaMCMC-IS}.

\subsection{Toy example}
\label{sec:toy_exam}

Consider the target distribution with density
\begin{eqnarray}
    \label{eq:laplace}
    \pi_{\beta}(x_1, \dots, x_d) & \propto & \prod_{i=1}^{d} e^{-\psi_{\beta}(x_i)} \amp = \amp \prod_{i=1}^{d} e^{-|x_i|^{\beta}}\,, \qquad x_{i} \in \Real\,,
\end{eqnarray}
$i = 1, 2, \ldots, d$, for $\beta = 1$ (Laplace) and $\beta = 4$ (super-Gaussian). Details of the Moreau-Yosida envelopes are provided in the supplement. 

We investigate the effect of $\lambda$ on the $\pi^{\lambda}$-MALA Markov chain and the quality of the importance sampling estimator $\hat{\theta}^{\text{MY}}_n$ for $d = 20$ (the supplement contains results for $d = 1, 10, 20$). For every combination of $\beta$ and $d$, we track the following as $\lambda$ is varied: (i) the estimated $n_e/n$, which reflects the quality of the importance sampling proposal, $\pi^{\lambda}$, (ii) the MCMC effective sample size by $n$ for estimating the mean of $\pi^{\lambda}$, which reflects the mixing quality of the $\pi^{\lambda}$-Markov chain, and (iii) the estimated asymptotic variance of $\hat{\theta}_n^{\text{MY}}$. Each Markov chain is run for $n = 10^6$ iterations. Figure~\ref{fig:toy} shows these quantities as a function of $\lambda$ for both $\beta$.

Similar trends hold for both values of $\beta$. As $\lambda$ increases, $n_e/n$ decreases, the MCMC effective sample size for $\pi^{\lambda}$ improves and the trade-off between these two is balanced by the variance of $\hat{\theta}_n^{\text{MY}}$. These results motivate our recommendation to choose $\lambda$ such that $n_e/n$ is in $[0.40, 0.80]$. 
\begin{figure} [h]
    \includegraphics[scale = .52]{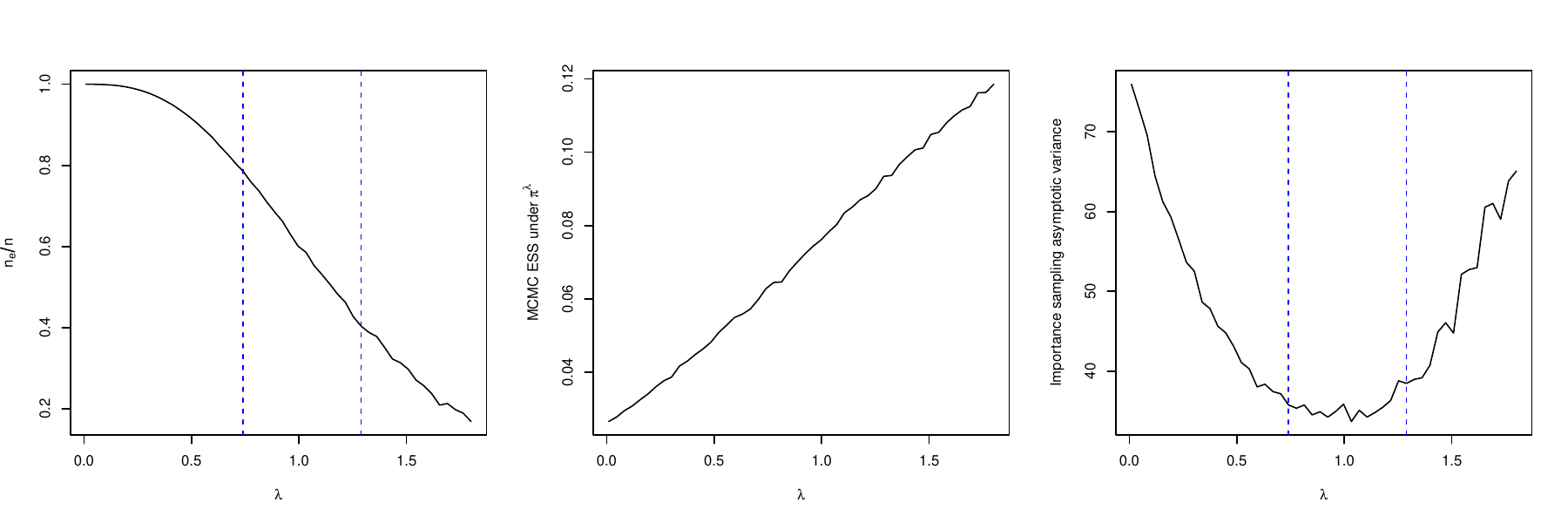}
     \includegraphics[scale = .52]{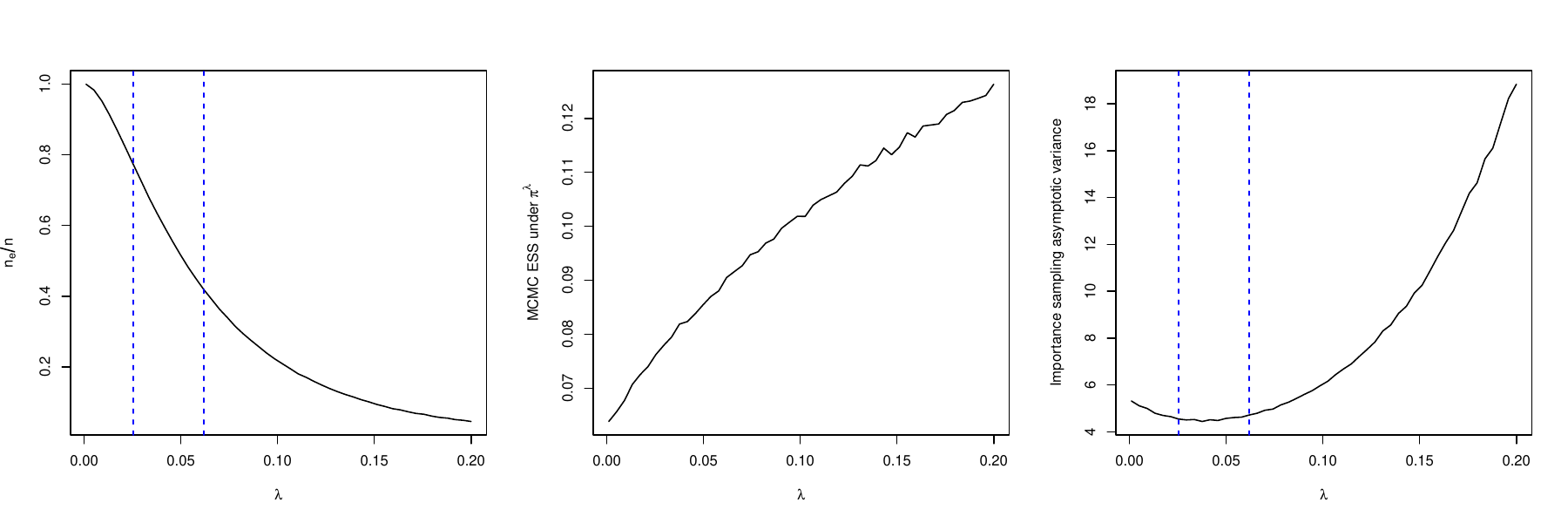}
    \caption{Top (Laplace) and bottom (Super-Gaussian) for $d = 20$. Left column plots $n_e/n$ for different $\lambda$, middle column plots the MCMC effective sample size for the $\pi^{\lambda}$-MALA chain for different $\lambda$, and the right column plots the estimated importance sampling asymptotic variance for different values of $\lambda$. The two vertical lines are the values of $\lambda$ that yield $n_e/n \in \{0.40, 0.80\}$.}
    \label{fig:toy}
\end{figure}
%
%
%
%
\subsection{Bayesian trendfiltering}
\label{sec:trendfiltering}
Consider the standard task in nonparametric regression. 
Let $y(t)$ be a scalar function of time that is a superposition of a smooth function $\mu(t)$ and additive noise. Suppose $y(t)$ is observed at  time points $t_1, \ldots, t_m$. Then $y =  \mu + \epsilon$,  where $y = (y(t_1), \dots, y(t_m))^{\Tra}$
%
%
, $\mu = (\mu(t_1), \dots, \mu(t_m))^{\Tra}$ and $\epsilon \sim N(0, \sigma^{2}\mathbb{I}_{m})$. The goal is to recover $\mu \in \Real^{m}$ from the observations $y \in \Real^m$. 

Let $\text{D}_{m}^{(k+1)} \in \Real^{(m - (k+1)) \times m}$ be a discrete difference matrix of order $(k+1)$ and dimension $m$. \cite{kim2009ell_1} proposed  $\ell_{1}$-trendfiltering, which estimates $\mu$ with the solution to the following convex optimization problem
\begin{equation} 
\label{eq:tf_eqn}
    \underset{\mu \in \Real^{m}}{\text{minimize}}\;  \frac{1}{2}\|y - \mu\|^{2} + \alpha\left\|\text{D}_{m}^{( k+1)}\mu\right\|_{1}\,.
\end{equation}
%
%
%
The function $\left\lVert \text{D}_{m}^{(1)} \mu \right\rVert_1 = \sum_{i = 1}^{m-1} |\mu_{i+1} - \mu_{i}|$ is the fused lasso penalty \citep{tibshirani2005sparsity}, which is designed to recover piecewise constant $\mu$. When $k = 1, 2,$ and $3$, the penalty incentivizes the recovery of piecewise linear, quadratic, and cubic trends, respectively. Difference matrices  can be calculated recursively by the relation $\text{D}_{m}^{(k+1)} = \text{D}_{m-k}^{(1)} \text{D}_{m}^{(k)}$. 
The solution to \eqref{eq:tf_eqn} produces the maximum a posteriori (MAP) estimator  corresponding to an appropriate Bayesian model. Later works  proposed approaches that not only produce point estimates but also can quantify the uncertainty in those estimates \citep{roualdes2015bayesian,faulkner2018locally,kowal2019dynamic,heng2023bayesian}. 
We consider a Bayesian model with the following posterior distribution that corresponds to the MAP estimate computed in \eqref{eq:tf_eqn},
\begin{eqnarray}
\label{eq:tf_posterior}
    \pi(\mu | y) & \propto & \exp \left\{ - \dfrac{\|y - \mu\|^{2}}{2\sigma^2}- \alpha\left\|\text{D}_{m}^{(k+1)} \mu\right\|_{1}  \right\}\,.
\end{eqnarray}
The $\ell_1$-penalty renders the posterior non-differentiable, precluding the use of traditional gradient-based MCMC schemes. Thus, the posterior in \eqref{eq:tf_posterior} is a natural candidate for a proximal MCMC-type sampler. We set $\sigma^2 = 9$ and for $m = 100$ obtain equally spaced time points from
\begin{eqnarray*}
   \mu(t) \amp = \amp t \mathbbm{1}(t \leq 35) + (70 - t) \mathbbm{1}(t \leq 70) + (0.5t - 35) \mathbbm{1}(t > 70)\,.
\end{eqnarray*}
%
%
%
Figure~\ref{tf_visualisation} shows the scatter plot of observed values, posterior mean and a band of $95\%$ credible intervals computed by MY-IS, with the latter derived using  \eqref{eq:impsamp_quantile}.
\begin{figure}
    \centering
    \includegraphics[scale = 0.35]{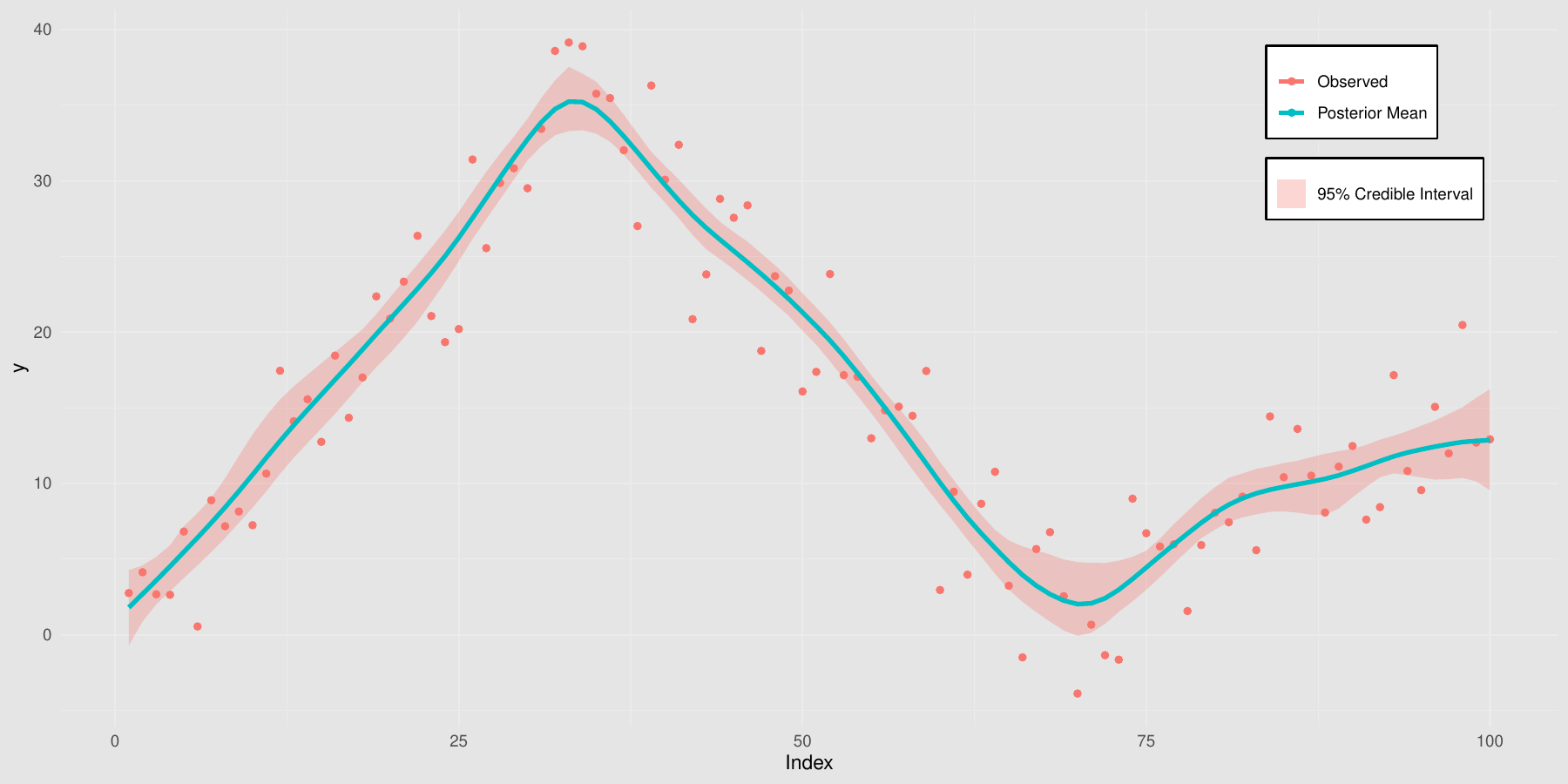}
    \caption{Trend filtering fit using the marginal quantiles and posterior mean for importance sampling estimator.}
    \label{tf_visualisation}
\end{figure}

We compare four proximal MCMC algorithms: (i) P-MALA, (ii) P-HMC, (iii) $\pi^{\lambda}$-MALA, and (iv) $\pi^{\lambda}$-HMC. For the latter two chains,  we set $\lambda  = 0.001$ to  obtain an importance sampling effective sample size of $n_e/n \approx 0.47$. For all chains we simulate a Monte Carlo sample size of $n = 10^5$. We ran 100 replications of all four chains to ascertain the gains in relative efficiencies. 

We denote the MY-IS estimator constructed from the $\pi^{\lambda}$-MALA and $\pi^{\lambda}$--HMC chains as MYIS-MALA and MYIS-HMC, respectively. Figure~\ref{fig:tf_rel_eff} shows the average relative efficiency of P-MALA versus MYIS-MALA (left) and P-HMC versus MYIS-HMC (right) in box plots of the 100 components. For HMC, estimation of components are at least 25 times more efficient. For MALA it is at least 2.6 times more efficient. The gains in efficiency with importance sampling are significant, especially in the case of $\pi^{\lambda}$-HMC.
\begin{figure}
    \centering
    \includegraphics[scale = 0.48]{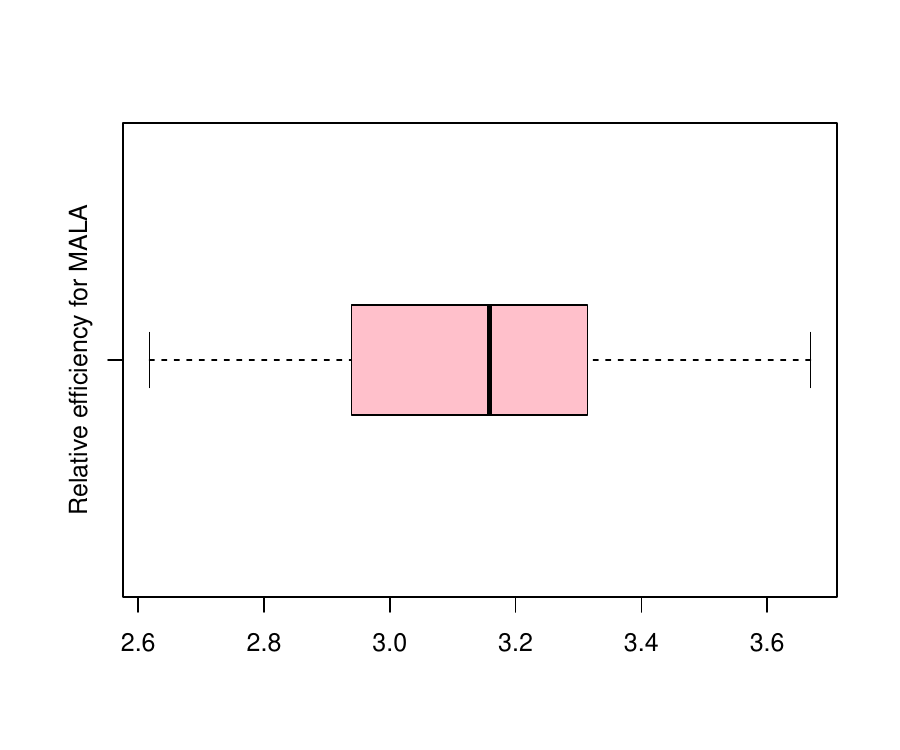}
    \includegraphics[scale = 0.48]{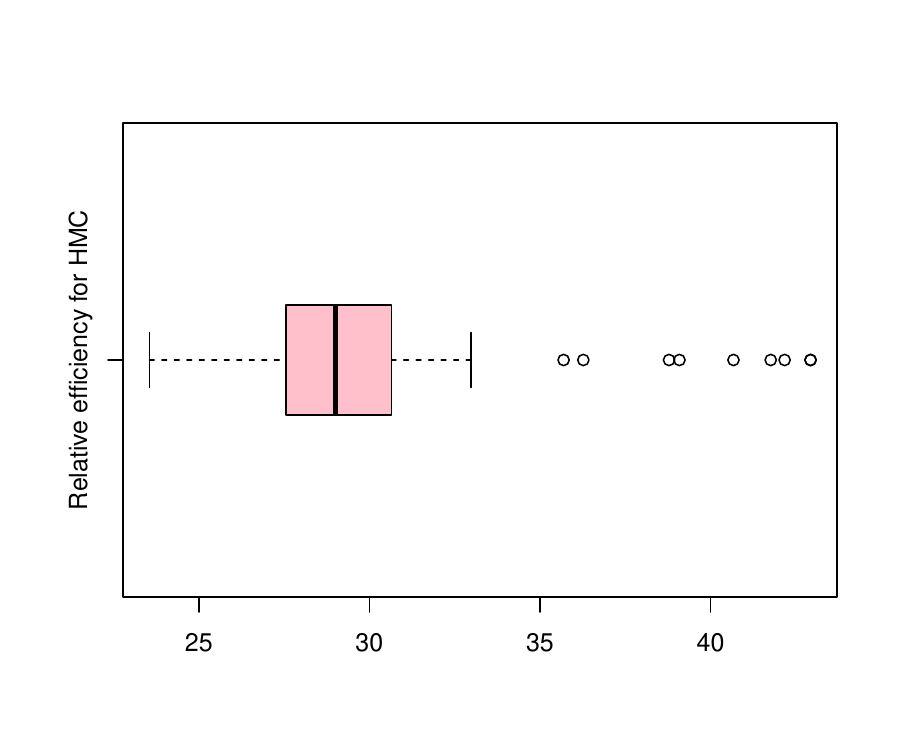}
    \caption{Trendfiltering: Average relative efficiencies of MYIS-MALA over P-MALA (left) and MYIS-HMC over P-HMC (right).}
    \label{fig:tf_rel_eff}
\end{figure}
To better understand these differences in efficiency gains, we examine the ability of MALA and HMC algorithms to explore $\pi^{\lambda}$ compared to $\pi$. Specifically, we compute the difference in the estimated autocorrelation lags between pairs of algorithms, i.e., the difference between their autocorrelation functions (ACFs). Figure~\ref{fig:acf_tf} presents the ACF difference plot computed using samples from P-MALA and $\pi^{\lambda}$-MALA (left) and P-HMC and $\pi^{\lambda}$-HMC (right) in a box plot over the 100 components of the chains.  For MALA, some components show improved mixing, whereas other components show marginally slower mixing. Despite this, the use of importance sampling yields significant gains as was seen in Figure~\ref{fig:tf_rel_eff}. In contrast, for HMC, all components exhibit substantially improved mixing of $\pi^{\lambda}$-HMC.  
\begin{figure}[t] 
    \centering
    \includegraphics[scale = 0.48]{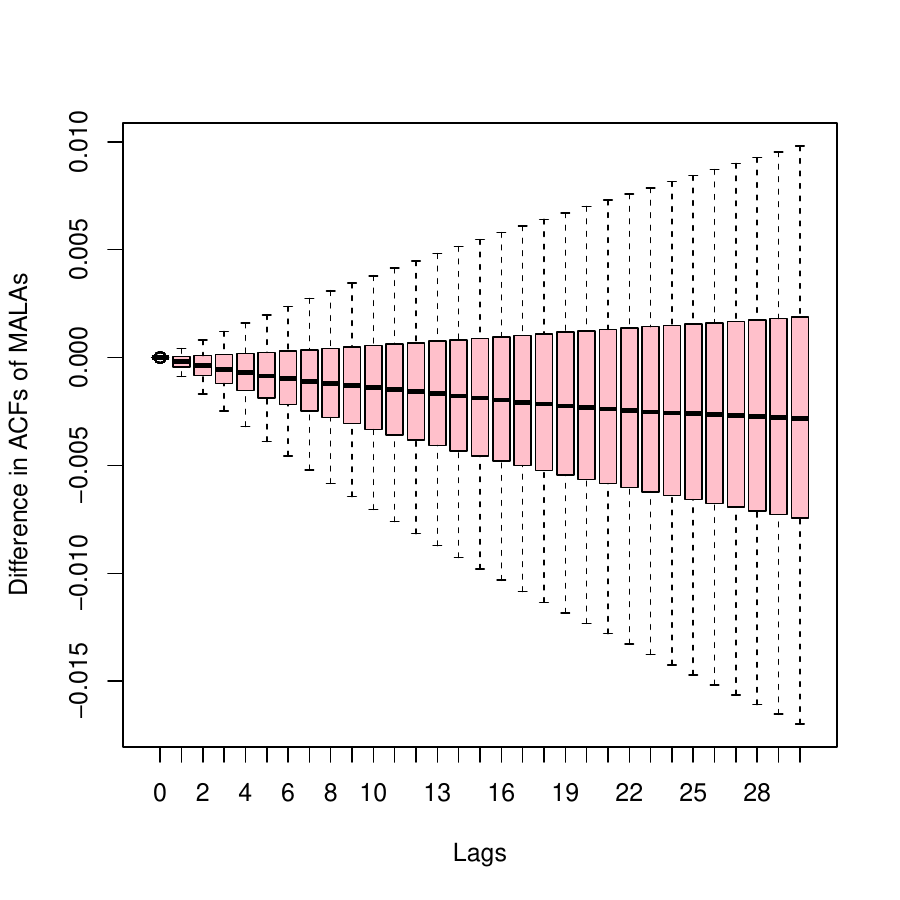}
    \includegraphics[scale = 0.48]{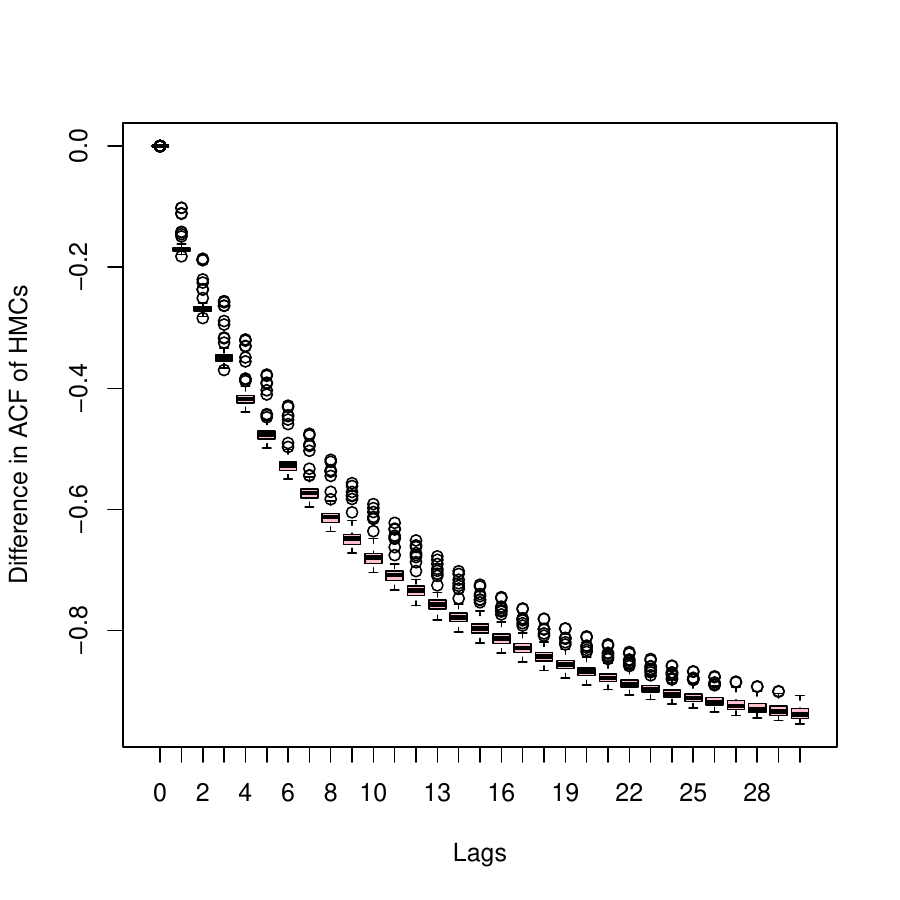}
    \caption{Trendfiltering: ACF difference plots computed using samples from  $\pi^{\lambda}$-MALA and P-MALA (left), $\pi^{\lambda}$-HMC and P-HMC (right).}
    \label{fig:acf_tf}
\end{figure}
\subsection{Nuclear-norm based low rank matrix estimation}
\label{sec:nuc_norm}

Estimating low rank matrices is a fundamental problem in statistics. A classic penalized likelihood approach to recovering low rank matrices is to employ a nuclear norm regularization term \citep{fazel2002matrix}. One of the most successful uses of the nuclear norm has been used in this role is in matrix completion \citep{Mazumder2010, Cai2010}. For illustrative purposes, we consider the problem of estimating a low-rank matrix in the context of matrix denoising.

We observe a matrix $Y \in \Real^{m \times k}$ where $Y = X + E$. The matrix $X$ is a latent low-rank matrix that we wish to recover and $E$ is a noise matrix with iid Gaussian entries, i.e., $e_{ij} \sim N(0, \sigma^2)$. We assume the prior $\pi(X)   \propto  \exp(-\alpha\|X\|_{*})$ on $X$
%
%
where $\|X\|_{*}$ is the nuclear norm of $X$.
Having observed $Y$, the posterior density of $X$ is 
\begin{eqnarray} 
\label{eq:matrix_post}
    \pi(X|Y) & \propto & \exp\left\{-\left(\frac{\|Y - X\|_{\text{F}}^{2}}{2\sigma^{2}} + \alpha\|X\|_{*} \right) \right\}\,,
\end{eqnarray}
where $\|X\|_{\text{F}}$ denotes the Frobenius norm of $X$. 
%
As in the trendfiltering example, a non-smooth penalty renders the target posterior non-differentiable, precluding the use of traditional gradient-based MCMC schemes. Thus, the posterior in \eqref{eq:matrix_post} is also a natural candidate for a proximal MCMC-type sampler.
\cite{recht2010guaranteed} derived the proximal mapping of the  negative log-posterior in \eqref{eq:matrix_post}:
\begin{equation}
    \prox_{\psi}^{\lambda}(X) \amp = \amp \text{SVT} \left( \dfrac{\lambda}{\lambda +  \sigma^{2}}Y + \dfrac{\sigma^2}{\lambda + \sigma^2}X, \dfrac{\alpha\lambda\sigma^{2}}{\lambda + \sigma^{2}}  \right)\,.
\end{equation}
The mapping $\text{SVT}(Z, t)$ is the singular value soft thresholding operator. Let $Z = U D V\Tra$ denote a singular value decomposition of $Z$ where $D$ is a diagonal matrix of singular values. Let $d_i$ denote the $i^{\text{th}}$ singular value of $D$. Let $\tilde{D}$ be the matrix obtained by replacing the $i^{\text{th}}$ singular value of $D$ by 
$\max\{d_{i} - t, 0\}$. Then $\text{SVT}(Z, t) = U\tilde{D}V\Tra$.
In the following numerical studies, we take $X$ to be the checkerboard image of $64 \times 64$ pixels studied in \cite{pereyra2016proximal}. Note that this is a relatively high-dimensional Bayesian inference problem as the dimension of the posterior distribution is 4096 ($64^{2}$). Additional details on the data are in the supplement.
We set $\sigma^2 = 0.01$ and following \cite{pereyra2016proximal}, set  $\alpha = 1.15/\sigma^{2}$. Setting $\lambda = 10^{-4}$ we obtain an importance sampling effective sample size of $n_e/n \approx 0.41$. For all chains we simulate a Monte Carlo sample size of $n = 10^5$ and ran 100 replications of all chains to ascertain the gains in relative efficiencies.
We present ACF difference plots from a randomly selected single replicate. For the ACF plots we obtain component-wise autocorrelations and present boxplots of component-wise difference in ACFs of $\pi^{\lambda}\text{-MALA} - \text{P-MALA}$ (same for HMC). For the relative efficiencies, we average the relative efficiencies over the 100 replications and present a boxplot of the average relative efficiencies over all components.
\begin{figure}
    \centering
    \includegraphics[scale = 0.43]{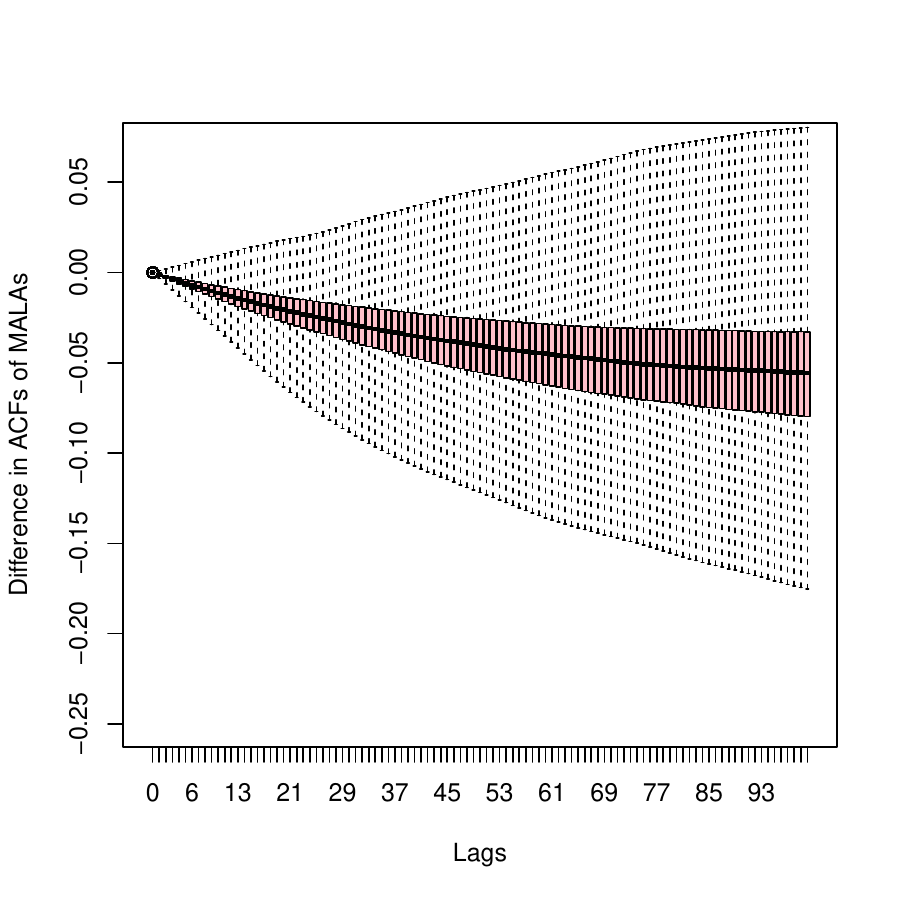}
    \includegraphics[scale = 0.43]{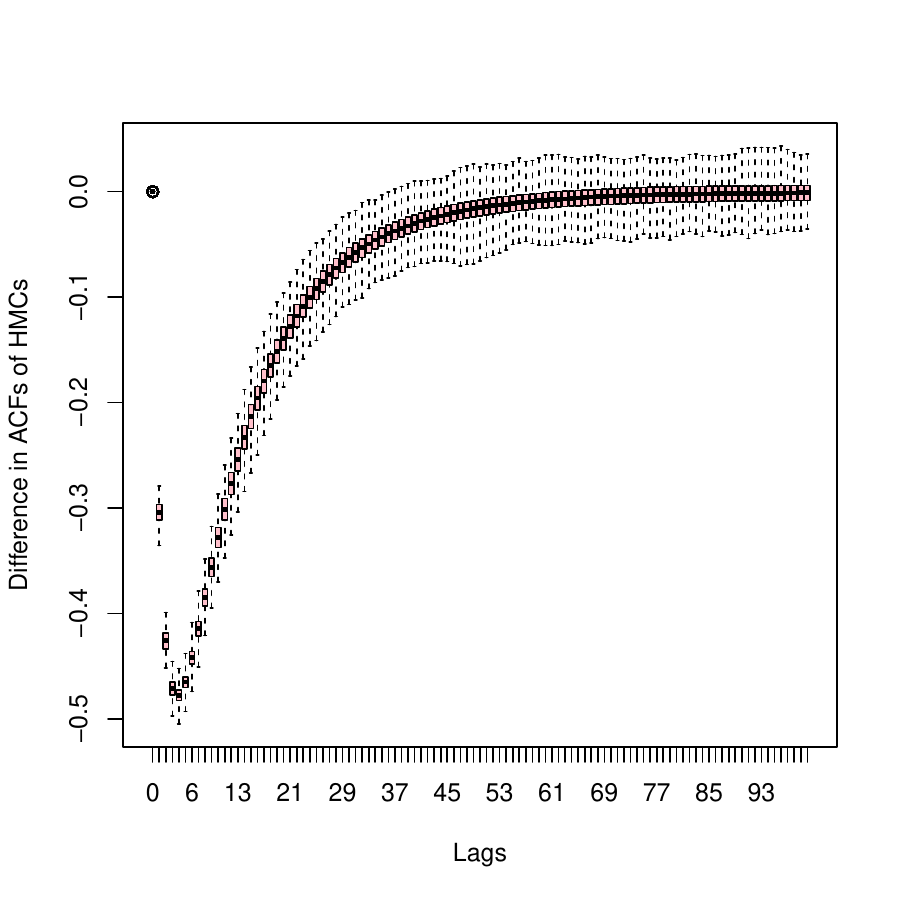}
    \caption{Nuclear-norm based matrix denoising: ACF difference plots computed using  samples from $\pi^{\lambda}$-MALA and P-MALA (left), $\pi^{\lambda}$-HMC and P-HMC (right).}
    \label{fig:acf_nn}
\end{figure} 

As previously observed in \cite{pereyra2016proximal}, P-MALA works well in this example. This can be seen in Figure~\ref{fig:acf_nn}, where the difference in the ACFs for the P-MALA and $\pi^{\lambda}$-MALA chains is marginally significant. However, Figure~\ref{fig:boxplot_nn} indicates that the MY-IS procedure nonetheless yields a more efficient estimator. For HMC, the efficiency gains are even better. Figure~\ref{fig:acf_nn} indicates a significant improvement in the ACF behavior for the $\pi^{\lambda}$-HMC chain over the P-HMC chain, which leads to a significant gain in relative efficiency as demonstrated in Figure~\ref{fig:boxplot_nn}.

\begin{figure}[t]
    \centering
    \includegraphics[scale = 0.43]{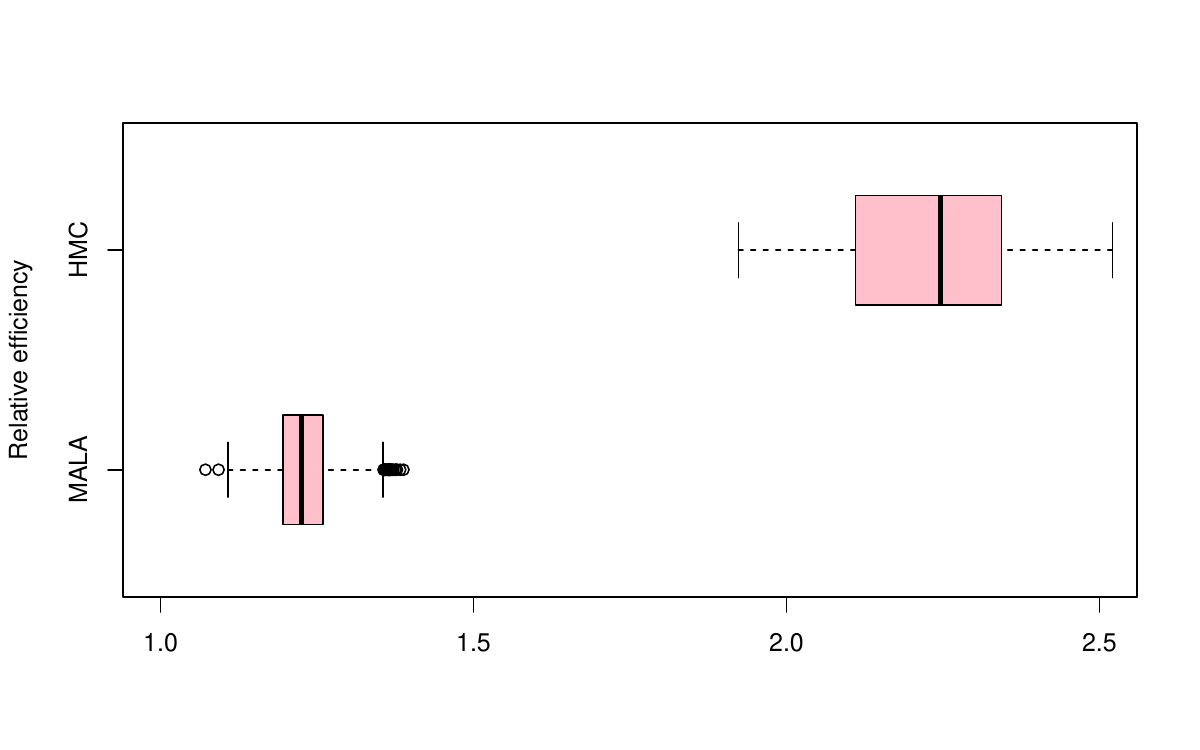}
    \caption{Nuclear-norm based matrix denoising: Average relative efficiencies of MYIS-MALA over P-MALA (left) and MYIS-HMC over P-HMC (right).}
    \label{fig:boxplot_nn}
\end{figure}
\subsection{Bayesian Poisson random effects model}
\label{sec:Poisson}

We revisit the Bayesian Poisson random effects model presented in \cite{livingstone2022barker}. MALA and HMC struggle to reliably generate samples from this model's posterior due to its light tails. Their proposed Barker's algorithm, however, can successfully generate samples in spite of the light tails. Given the effect of Moreau-Yosida smoothing on the exponential family class $\mathcal{E}(\beta, \gamma)$, one might conjecture that our MY-IS scheme could also be robust to light tails. Consequently, we test this conjecture by estimating the posterior mean with our MY-IS scheme. 

The model has the following hierarchical specification:
\begin{align*}
    y_{ij} \mid \eta_{i} \amp  &\sim \amp \text{Poisson}(e^{\eta_{i}}) 
 \qquad j = 1,2, \ldots, n_{i},\\
  \mu \amp  \sim \amp  \text{N}(0, c^{2}) \quad &\text{ and } \quad   \eta_{i} \mid \mu \amp  \sim \amp  \text{N}(\mu, \sigma_{\eta}^{2})  \qquad \qquad i = 1,2, \ldots, I, 
\end{align*}
where $y_{ij}$ represent count data measured for the $j^{\text{th}}$ subject in the $i^{\text{th}}$ class, with $n_i$ being the number of subjects in the the $i^{\text{th}}$ class. Following \cite{livingstone2022barker}, we set the number of classes $I$ to be 50, $\sigma_{\eta} = 3$, and $c = 10$. In the supplement,
we present an efficient algorithm to evaluate the Moreau-Yosida envelope. In contrast to the previous examples, this posterior density is differentiable. Thus, we also compare our results with those  obtained using Barker's algorithm. We set $\lambda = 0.001$ to obtain  an importance sampling effective sample size of  $n_e/n \approx 0.41$. For all chains we simulate a Monte Carlo sample size of $n = 10^5$ and ran 100 replications of all chains to ascertain the gains in relative efficiencies.
\begin{figure}
    \centering
    \includegraphics[scale = 0.32]{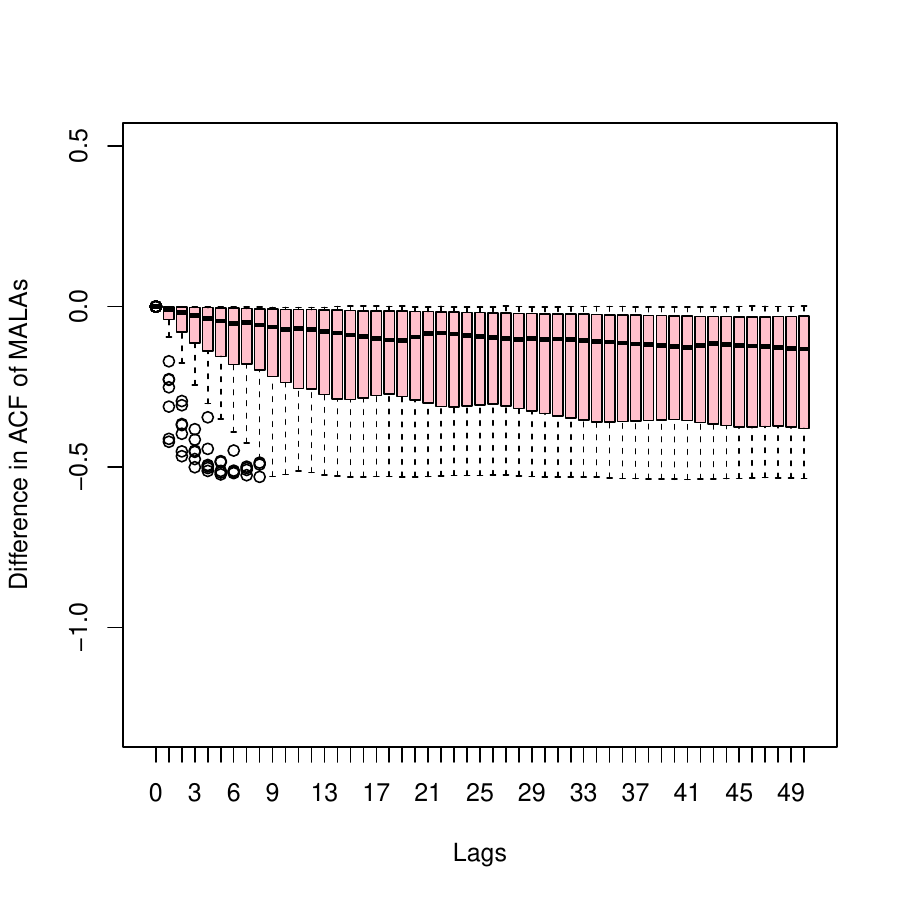}
    \includegraphics[scale = 0.32]{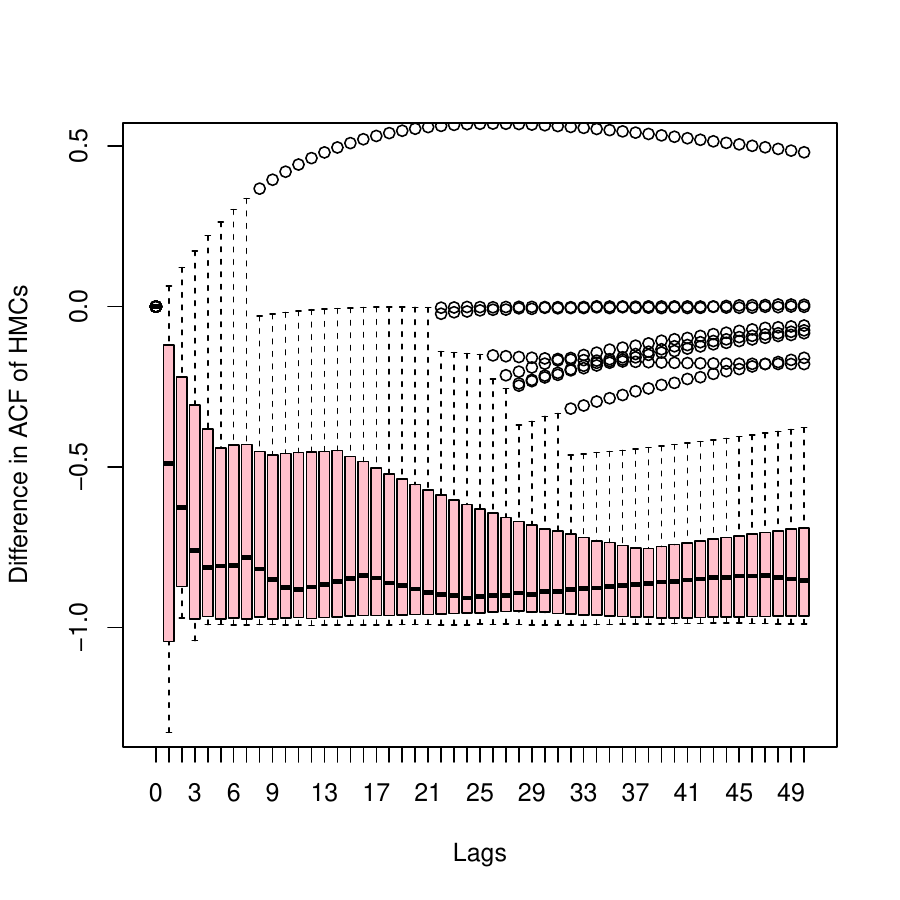}
    \includegraphics[scale = 0.32]{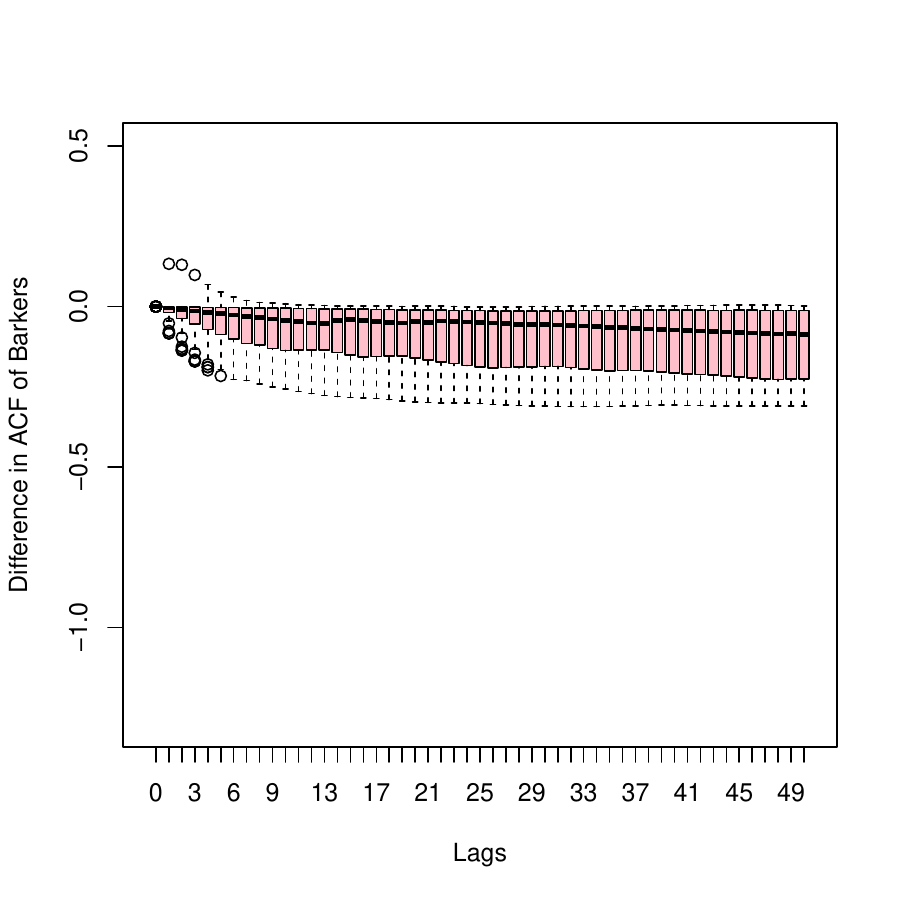}
    \caption{Poisson model: ACF difference plots computed using  samples from $\pi^{\lambda}$-MALA and P-MALA (left), $\pi^{\lambda}$-HMC and P-HMC (center) and $\pi^{\lambda}$-Barker and P-Barker (right).}
    \label{fig:acf_poisson}
\end{figure}
As before, we show the ACF difference plots from a randomly selected single replicate. This time we also compare Barker's algorithm with $\pi^{\lambda}$-Barker's algorithm.  Figure~\ref{fig:acf_poisson} shows that all three $\pi^{\lambda}$ chains mix better than their counterparts in general. There is one component whose mixing is better in P-HMC and Barker's, but for all other components there is a significant improvement in the quality of the Markov chains. Particularly, $\pi^{\lambda}$-HMC demonstrates significant improvement.

To demonstrate the impact of using MY-IS, we also present the relative efficiencies, focusing also on the relative efficiency of MY-IS using Barker's algorithm versus $\pi^\lambda$-MALA. Similar to other examples, we run 100 replications for $n = 10^5$ length Markov chains. The box plot in Figure~\ref{fig:boxplot_poisson} presents the average relative efficiencies across components. First, we note the significant gains in efficiency using $\pi^{\lambda}$-HMC. This  corroborates our conjecture that $\pi^{\lambda}$'s heavier tails make it more conducive for HMC to traverse the space. We further note that MY-IS using MALA chains also significantly improves efficiency, compared to both P-MALA and Barker's algorithm, with almost all components exhibiting relative efficiency above 1. 
\begin{figure} 
    \centering
    \includegraphics[scale = .43]{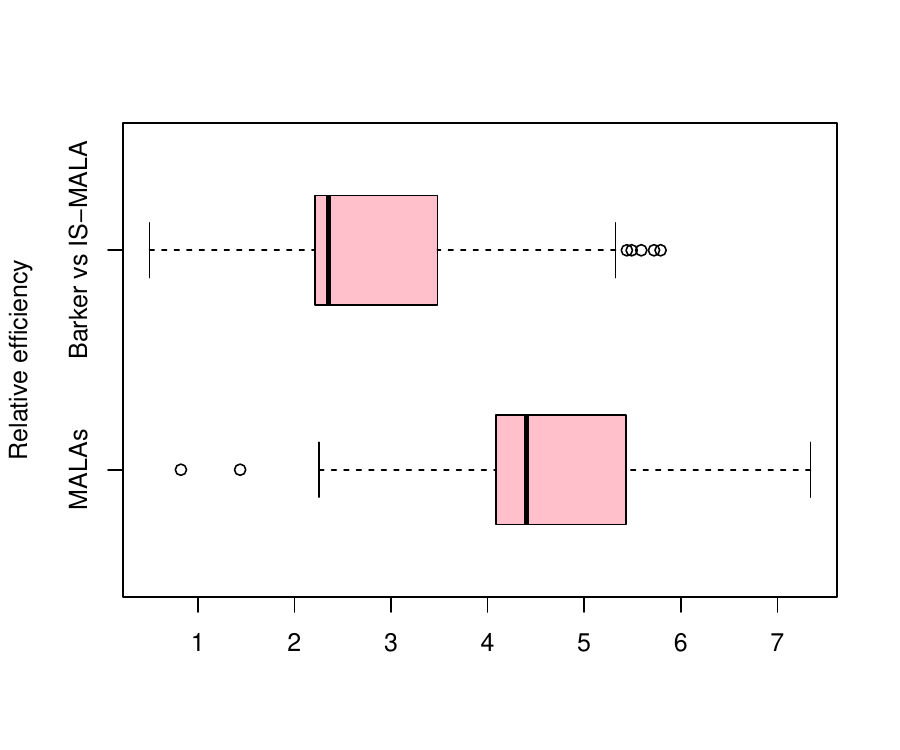}
    \includegraphics[scale = .43]{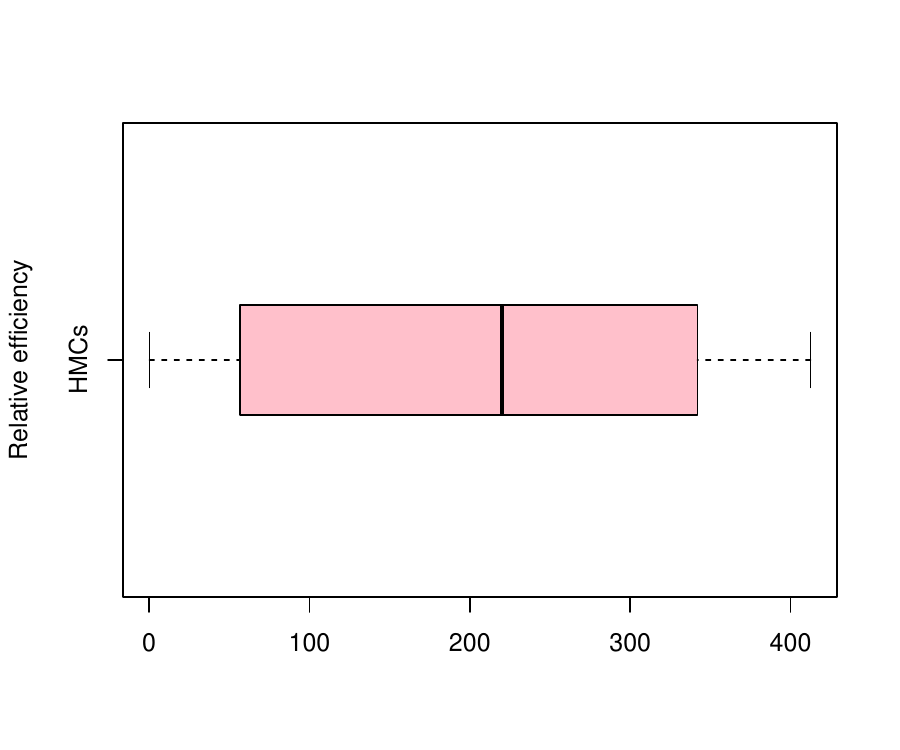}
    \caption{Poisson random effects model: Average relative efficiencies of MYIS-MALA over P-MALA and MYIS-MALA over P-Barker (left) and MYIS-HMC over P-HMC (right).}
\label{fig:boxplot_poisson}
\end{figure}
\section{Discussion}
\label{sec:discussion}

Non-differentiability of a target density is a key obstacle in building effective MCMC strategies.  In this work, we propose an importance sampling paradigm for a class of log-concave targets using Moreau-Yosida envelopes as a proposal. Our proposed estimator is guaranteed to have finite variance, with a Markov chain that can often mix better on the proposal than the original target. We demonstrate the gains in efficiency using our proposed methodology over a variety of examples and highlight the utility of employing $\pi^{\lambda}$-MCMC with importance sampling. 

For sampling from $\pi$, \cite{durmus2022proximal} proposed splitting the potential $\psi$ into a smooth differentiable and a non-differentiable component. This enables enveloping only the non-smooth part of $\psi$. While this strategy could be employed to build an importance sampling scheme, Proposition \ref{prop:my_properties} \ref{max_prop} will no longer be true. As a result, the mode of the importance distribution will not match with the posterior mode, which can have detrimental effects on the weights in high-dimensions. This approach may still be worth pursuing for models where the mismatch in modes is nominal.

Although most applications of importance sampling use iid samples from the importance distribution, MCMC samples from the importance distribution have also been employed with some success \citep[see for example][]{liesenfeld2008improving, schuster2020markov}. However, their use is either in specific problems, low-dimensional settings, or lack guaranteed finite variance of estimators. Adaptive importance sampling methods with MCMC samples are also common in signal processing \citep{martino2018group} but  their application is also typically limited to low-dimensional multi-modal target distributions. Concurrently with our work, \cite
{elvira2024proximal} employ proximal mappings in developing effective adaptive importance sampling paradigms for non-differentiable targets through a proximal Newton adaptation strategy. 

We initially focused on log-concave distributions due to the availability of efficient global solvers for the proximal mapping. Nonetheless, for $\pi$ that is not log-concave, it may be  possible to choose $\lambda$ so that $\pi^{\lambda}$ is log-concave. This would enable our MY-IS estimator to apply to much a wider class of target densities. For instance, similar results are available in the context of weakly convex functions. Specifically, the proximal mapping of a $\rho$-weakly convex function is unique and the gradient of MY envelope is continuously differentiable for all $\lambda \in (0, \rho^{-1})$ \citep{MR4155077, MR4224882}. It is not immediately obvious, however, if $\pi^{\lambda}$ is a proper density in this case -- something that warrants exploring in future work.
The choice of $\lambda$ in this work is crucial and critically dependent on $\pi$ and a more careful analysis of this problem is warranted. We also note that a Moreau-Yosida approximation could be constructed using multiple $\lambda$s, and a methodology similar to parallel tempering can be constructed enabling mode-jumping in a multi-modal target. We leave these problems for future work.
We also defer the study of asymptotic normality of importance sampling quantiles, as such a result would be applicable even outside the scope of proximal MCMC methods. Finally, we note that since \cite{pereyra2016proximal}, alternative approaches to sampling from non-smooth target densities have been introduced \citep{Lee2021, Liang2022, Mou2022}.  We also leave the interesting question of how to potentially adapt our importance sampling framework to these contexts for future work.

\section{Acknowledgements}

Dootika Vats and Eric Chi are grateful to the Rice-IITK Strategic Collaboration Grant for supporting this work while Eric Chi was at Rice University. Dootika Vats is also supported by Google Research.

\appendix

\section{Proof of Theorem~\ref{thm:imp_consistency}}
\label{appendix:consistency_of_snis}

\begin{proof}
For normalising constants $k_{\pi}$ and $k_{g}$, we have
\begin{equation*}
    \pi(x) \amp = \amp \frac{e^{-\psi(x)}}{k_{\pi}}  \quad\quad\text{and}\quad\quad
    g(x) \amp = \amp \frac{\tilde{g}(x)}{k_{g}}\,.
\end{equation*}
From \eqref{eq:est_xi}
\begin{equation*}
   \hat{\theta}^{g}_n \amp = \amp \dfrac{\sum_{t = 1}^{n}\xi(X_{t})w(X_{t})}{\sum_{k = 1}^{n} w(X_{k})} \amp = \amp \dfrac{n^{-1}\sum_{t = 1}^{n}\xi(X_{t})w(X_{t})}{n^{-1}\sum_{k = 1}^{n} w(X_{k})}.
\end{equation*}
Since $w(x) = e^{-\psi(x)}/\tilde{g}(x)$,
\begin{equation*}
    \hat{\theta}^{g}_n \amp = \amp \dfrac{n^{-1}\sum_{t = 1}^{n}\xi(X_{t})\dfrac{k_{\pi}\pi(X_{t})}{k_{g}g(X_{t})}}{n^{-1}\sum_{k = 1}^{n} \dfrac{k_{\pi}\pi(X_{k})}{k_{g}g(X_{k})}} \amp =: \amp \dfrac{Y_n}{Z_n}.
\end{equation*}
By Birkhoff's ergodic theorem \citep[][Section 28.4]{fristedt1996modern}, as $n\to \infty$
\begin{equation*}
    Y_n \amp = \amp n^{-1}\sum_{t = 1}^{n}\xi(X_{t})\dfrac{k_{\pi}\pi(X_{t})}{k_{g} g(X_{t})} \amp \overset{\text{a.s.}}{\to} \amp \frac{k_{\pi}}{k_{g}}\theta\,,
\end{equation*}
and
\begin{equation*}
    Z_n  \amp = \amp  n^{-1}\sum_{k = 1}^{n} \dfrac{k_{\pi}\pi(X_{k})}{k_{g}g(X_{k})} \amp \overset{\text{a.s.}}{\to} \amp \frac{k_{\pi}}{k_{g}}.
\end{equation*}
Therefore, by the continuous mapping theorem $\hat{\theta}^{g}_n\overset{\text{a.s.}}{\rightarrow} \theta$ as $n \to \infty$.
\end{proof}

\section{Proof of Theorem~\ref{thm:asymp_norm}}
\label{app:asymp_norm_proof}

\begin{proof}

It is known that given a $\pi^{\lambda}$-geometrically ergodic Markov chain, a multivariate central limit theorem holds for $\bar{S}_{n} = n^{-1}\sum_{t=1}^{n}S(X_{t})$ if $\E_{\pi^{\lambda}}\|S(X)\|^2 < \infty$ \citep[see][]{MR834478,vats2017output}.

Recall that by \cite{durmus2022proximal}, $\psi^{\lambda}(x) \leq \psi(x)$ for all $x \in \Real^{d}$. Therefore, 
\begin{align*}
    w^{\lambda}(x) \amp = \amp \frac{e^{-\psi(x)}}{e^{-\psi^{\lambda}(x)}} \amp \leq \amp 1\,,
\end{align*}
for all $x \in \Real^{d}$ and hence all moments of $w^{\lambda}(X)$ exist when $X \sim \pi^{\lambda}$. Further, for constants $k_{\pi}$ and $k_{\pi^{\lambda}}$ such that $\pi(x) = e^{-\psi(x)}/k_{\pi}$ and $\pi^{\lambda}(x) = e^{-\psi^{\lambda}(x)}/k_{\pi^{\lambda}}$, and for all $x \in \Real^{d}$,
\begin{equation*} 
    \sup_{x \in \Real^{d}} \frac{\pi(x)}{\pi^{\lambda}(x)} 
\amp = \amp \frac{k_{\pi^{\lambda}}}{k_{\pi}} \sup_{x \in \Real^{d}} \frac{e^{-\psi(x)}}{e^{-\psi^{\lambda}(x)}} \amp \leq \amp \frac{k_{\pi^{\lambda}}}{k_{\pi}} \amp < \amp \infty\,.
\end{equation*}
Given $x \in \Real^{d}$ we write $\xi(x) = \left(
    \xi_{1}(x), \xi_{2}(x), \cdots, \xi_{p}(x)
\right)\Tra$, where $\xi_{i} : \Real^{d} \to \Real$ for all $i = 1, 2, \ldots, p$. Thus,
\begin{align}
    \E_{\pi^{\lambda}}
    |\xi_{i}(X)w^{\lambda}(X)|^{2} \amp &= \amp \int_{\Real^{d}} |\xi_{i}(x)|^{2} |w^{\lambda}(x)|^{2} \pi^{\lambda}(x)\, dx \nonumber \\
   \amp &= \amp \int_{\Real^{d}} |\xi_{i}(x)|^{2} \left(\frac{\exp(-\psi(x))}{\exp(-\psi^{\lambda}(x))}\right)^{2} \pi^{\lambda}(x)\, dx \nonumber \\
   \amp &= \amp \left(\frac{k_{\pi}}{k_{\pi^{\lambda}}}\right)^{2} \int_{\Real^{d}} |\xi_{i}(x)|^{2} \left(\frac{\pi(x)}{\pi^{\lambda}(x)}\right)^{2} \pi^{\lambda}(x)\, dx \nonumber \\
   \amp &= \amp \left(\frac{k_{\pi}}{k_{\pi^{\lambda}}}\right)^{2} \int_{\Real^{d}} |\xi_{i}(x)|^{2} \frac{\pi(x)}{\pi^{\lambda}(x)} \pi(x)\, dx \nonumber \\
   \amp &\leq \amp \left(\frac{k_{\pi}}{k_{\pi^{\lambda}}}\right)^{2} \sup_{x \in \Real^{d}} \frac{\pi(x)}{\pi^{\lambda}(x)} \int_{\Real^{d}} |\xi_{i}(x)|^{2} \pi(x)\, dx \nonumber \\
   \amp &\leq \amp \left(\frac{k_{\pi}}{k_{\pi^{\lambda}}}\right) \int_{\Real^{d}} |\xi_{i}(x)|^{2} \pi(x)\, dx \nonumber \\
   \amp &= \amp \left(\frac{k_{\pi}}{k_{\pi^{\lambda}}}\right) \E_{\pi}(|\xi_{i}(X)|^2)\,. \nonumber
\end{align}
Since $\E_{\pi}\lVert\xi(X)\rVert^{2} < \infty$, $\E_{\pi^{\lambda}}
    |\xi_{i}(X)w^{\lambda}(X)|^{2}$ is finite. Thus, $\E_{\pi^{\lambda}}\|S(X)\|^2 < \infty$. By a Markov chain central limit theorem, there exists a $(p+1) \times (p+1)$ positive-definite matrix $\Sigma$ such that,
\begin{equation} \label{vector_CLT}
    \sqrt{n}\left(\bar{S}_{n} - \E_{\pi^{\lambda}} (S(X)) \right) \overset{d}{\to}  N_{p+1}(0, \Sigma)\,,
\end{equation}
where,
\begin{equation*}
    \Sigma \amp = \amp   \Var_{\pi^{\lambda}}S(X_{1}) + \sum_{k=1}^{\infty}\Cov_{\pi^{\lambda}}(S(X_{1}), S(X_{1+k})) + \sum_{k=1}^{\infty}\Cov_{\pi^{\lambda}}(S(X_{1+k}), S(X_{1}))\,.
\end{equation*}
Let $\kappa: \Real^{p+1} \to \Real^{p}$ be a continuous real-valued function such that for $u \in \Real^{p}$ and $v \in \Real$,
\begin{equation} \label{kappa_func}
   \kappa \begin{pmatrix}
        u \\
        v
    \end{pmatrix}  \amp = \amp \frac{u}{v}.
\end{equation}
Using multivariate delta method in \eqref{vector_CLT} for function $\kappa$, as $n \to \infty$,
\begin{equation} \label{mult_delta_eqn}
    \sqrt{n}\left( \kappa (\bar{S}_{n})  - \kappa (\E_{\pi^{\lambda}}S(X)) \right) \overset{d}{\to} N_{p}(0, \nabla \kappa_{\eta} \Sigma \nabla \kappa_{\eta}\Tra)\,,
\end{equation}
where $\nabla \kappa_{\eta}$ is the total derivative of $\kappa$ at the point $ \eta = \E_{\pi^{\lambda}}[S(X)]$. A sufficient condition for the existence of the total derivative is that all the partial derivatives exist in a neighbourhood of $\eta$ and are continuous at $\eta$ \citep{van2000asymptotic}. From \eqref{kappa_func}, writing $u = \left(
   u_{1},  u_{2},  \cdots, u_{p}
\right)\Tra$ where $u_{i} \in \Real$ for all $i = 1, 2, \ldots, p$ we have $\kappa \left(
    u, v
\right)\Tra = \left(
    u_{1}/v, u_{2}/v, \cdots, u_{p}/v
\right)\Tra$. The total derivative is then just the $p \times (p+1)$ matrix of partial derivatives given by
\begin{align*}
    \nabla \kappa_{\left(
    u, v
\right)\Tra} \amp &= \amp \begin{bmatrix}
        \dfrac{\textbf{I}_{p}}{v} & -\dfrac{u}{v^{2}}
    \end{bmatrix}.
\end{align*}
Therefore,
    \begin{align*}
    \nabla \kappa_{\eta} \amp &= \amp \begin{bmatrix}
        \dfrac{\textbf{I}_{p}}{\E_{\pi^{\lambda}}(w^{\lambda}(X))} & - \dfrac{\E_{\pi^{\lambda}}(\xi(X)w^{\lambda}(X))}{(\E_{\pi^{\lambda}}(w^{\lambda}(X)))^{2}}
    \end{bmatrix} \nonumber \\ 
   \amp &= \amp \dfrac{1}{\E_{\pi^{\lambda}}(w^{\lambda}(X))} \begin{bmatrix}
        \textbf{I}_{p} & -\theta
    \end{bmatrix}.
\end{align*}
Thus,
\begin{align}
    \nabla \kappa_{\eta} \Sigma \nabla \kappa_{\eta}\Tra \amp &= \amp \dfrac{1}{(\E_{\pi^{\lambda}}(w^{\lambda}(X)))^{2}}
     \begin{bmatrix}
        \textbf{I}_{p} & -\theta
    \end{bmatrix} \Sigma \begin{bmatrix}
        \textbf{I}_{p} \\
        -\theta\Tra
    \end{bmatrix} \amp = \amp \Xi\,.
    \label{asymp_variance}
\end{align}
Since $\hat{\theta}_{n}^{\text{MY}} = \kappa (\bar{S}_{n})$ and $\theta = \kappa (\E_{\pi^{\lambda}}S(X))$, using \eqref{asymp_variance} in \eqref{mult_delta_eqn} we have as $n \to \infty$,
    \begin{equation*}
        \sqrt{n}(\hat{\theta}_{n}^{\text{MY}} - \theta) \overset{d}{\to} N_{p}(0, \Xi)\,.
    \end{equation*}
\end{proof}

\section{Estimation of asymptotic variance}
\label{sec:estimating_variance}

Theorem~\ref{thm:asymp_norm} guarantees that, subject to convergence rates of the $\pi^{\lambda}$-Markov chain, the MY-IS estimator has a finite covariance matrix $\Xi$. Practitioners would further require an estimator of $\Xi$ in order to assess the Monte Carlo error in estimation. This involves estimating $\Sigma$ in,
\begin{eqnarray}
\label{eq:true_lambda_supp}
    \Xi  & = &  \dfrac{1}{ \left[\E_{\pi^{\lambda}}(w^{\lambda}(X_1)) \right]^{2}}
     \begin{bmatrix}
        \textbf{I}_{p} & -\theta
    \end{bmatrix} \Sigma \begin{bmatrix}
        \textbf{I}_{p} \\
        -\theta\Tra
    \end{bmatrix}\,,
\end{eqnarray}
 the asymptotic covariance matrix in the Markov chain central limit theorem for the process $\{S(X_t)\}_{t\geq 1}$. A number of estimators of the asymptotic covariance are available in the literature. \cite{chen:seila:1987} and \cite{vats2019multivariate} employ a batch means estimator and demonstrate its strong consistency. We employ this estimator due to its computational efficiency and well established asymptotic properties.

Let $n = ab$ where $a$ denotes the number of batches and $b$, the size of a batch. Let $\bar{T}_{k} = b^{-1} \sum_{j=1}^{b} S(X_{kb + j})$ for $k = 0, 1, \ldots, a-1$  be the mean vector of the $k^{\text{th}}$ batch, and $\bar{S}_{n} = n^{-1}\sum_{t=1}^{n} S(X_{t})$ be the overall mean. The batch means estimator of $\Sigma$ is defined as
\begin{eqnarray*}
    \hat{\Sigma}_n & = & \frac{b}{a-1} \sum_{k=0}^{a-1} (\bar{T}_{k} - \bar{S}_{n})(\bar{T}_{k} - \bar{S}_{n})^{\Tra} \,.
\end{eqnarray*}
Using $\hat{\Sigma}_n$, a plug-in estimator of $\Xi$ can be constructed. Denote $\bar{w}_n  =  n^{-1}\sum_{t=1}^{n} w^{\lambda}(X_t)\,.$
%
%
Then a plug-in estimator of $\Xi$ is
\begin{eqnarray}
    \label{eq:lambda_est}
    \hat{\Xi}^\text{BM}_n & = & \dfrac{1}{ \bar{w}^2_n}
     \begin{bmatrix}
        \textbf{I}_{p} & - \hat{\theta}^{\text{MY}}_n
    \end{bmatrix} \hat{\Sigma}_n \begin{bmatrix}
        \textbf{I}_{p} \\
        - \left(\hat{\theta}^{\text{MY}}_n\right)\Tra
    \end{bmatrix}\,.
\end{eqnarray}

Under the strong consistency conditions for $\hat{\Sigma}_n$ discussed in \cite{vats:flegal:2018,vats2019multivariate} and using the continuous mapping theorem, $\hat{\Xi}^\text{BM}_n$ can also be shown to be strongly consistent. Alternative estimators of $\Sigma$ exist that may better suit a user's preferences. Spectral variance estimators \citep{vats:fleg:jon:2018}, regenerative estimators \citep{seil:1982},  moment least squares estimators \citep{berg2023efficient,song2024multivariate}, and initial sequence estimators \citep{banerjee2024efficient,dai:jon:2017,gey:1992} are all well-studied with conditions for strong consistency available. See \cite{flegal2024implementing} for a  review. We thus present the following result generally for any estimator of $\Sigma$.
\begin{theorem}
    Let $\hat{\Sigma}$ be strongly consistent for $\Sigma$, and let $\hat{\Xi}_n$ be the estimator of $\Xi$ constructed using $\hat{\Sigma}$. Then, $\hat{\Xi}_n \overset{\text{a.s.}}{\to} \Xi$ as $n\to \infty$.
\end{theorem}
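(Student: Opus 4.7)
The plan is to prove the strong consistency of $\hat{\Xi}_n$ by assembling three almost sure convergence statements and then applying the continuous mapping theorem to the expression defining $\hat{\Xi}_n$.

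First, I would establish strong consistency of each ingredient in \eqref{eq:lambda_est}. The quantity $\bar{w}_n$ is the ergodic average of $w^{\lambda}(X_t)$ along a $\pi^{\lambda}$-ergodic Markov chain. Since $\psi^{\lambda}(x) \leq \psi(x)$ implies $w^{\lambda}$ is bounded above by $1$, it is $\pi^{\lambda}$-integrable, so Birkhoff's ergodic theorem yields $\bar{w}_n \overset{\text{a.s.}}{\to} \E_{\pi^{\lambda}}(w^{\lambda}(X_1))$. Next, Theorem~\ref{thm:imp_consistency} directly gives $\hat{\theta}_n^{\text{MY}} \overset{\text{a.s.}}{\to} \theta$ under the irreducibility, aperiodicity, and Harris recurrence that already follow from geometric ergodicity. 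Finally, $\hat{\Sigma}_n \overset{\text{a.s.}}{\to} \Sigma$ holds by the hypothesis of the theorem.

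Next, I would combine these via continuous mapping. Define the function
\begin{equation*}
F(w, \vartheta, M) \amp = \amp \frac{1}{w^2} \begin{bmatrix} \textbf{I}_p & -\vartheta \end{bmatrix} M \begin{bmatrix} \textbf{I}_p \\ -\vartheta\Tra \end{bmatrix},
\end{equation*}
defined on the open set where $w \neq 0$, with $\vartheta \in \Real^p$ and $M$ a $(p+1) \times (p+1)$ matrix. This $F$ is continuous in $(w, \vartheta, M)$ on its domain because the entries of the resulting $p \times p$ matrix are polynomials in $\vartheta$ and the entries of $M$, divided by $w^2$. The limit point $w = \E_{\pi^{\lambda}}(w^{\lambda}(X_1))$ lies in this domain: indeed, $w^{\lambda}(x) > 0$ for all $x$ and $\pi^{\lambda}$ is a proper density, so the expectation is strictly positive.

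Assembling: the joint almost sure convergence $(\bar{w}_n, \hat{\theta}_n^{\text{MY}}, \hat{\Sigma}_n) \overset{\text{a.s.}}{\to} (\E_{\pi^{\lambda}}(w^{\lambda}(X_1)), \theta, \Sigma)$ follows from the three marginal almost sure statements (almost sure convergence is preserved under finite intersection of probability-one events). Applying the continuous mapping theorem to $F$ then yields $\hat{\Xi}_n = F(\bar{w}_n, \hat{\theta}_n^{\text{MY}}, \hat{\Sigma}_n) \overset{\text{a.s.}}{\to} F(\E_{\pi^{\lambda}}(w^{\lambda}(X_1)), \theta, \Sigma) = \Xi$, as desired.

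There is no real obstacle here; the only mild subtlety is verifying that the limiting expectation $\E_{\pi^{\lambda}}(w^{\lambda}(X_1))$ is strictly positive so that $F$ is continuous at the limit point, but this is immediate from positivity of $w^{\lambda}$. The proof is essentially a clean application of the continuous mapping theorem once the three component convergences are in place.
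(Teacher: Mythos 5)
Your proof is correct and follows essentially the same route as the paper, whose own argument is a one-line appeal to the continuous mapping theorem together with the strong consistency of $\hat{\theta}^{\text{MY}}_n$ and $\bar{w}_n$. You simply fill in the details the paper leaves implicit, notably the strict positivity of $\E_{\pi^{\lambda}}(w^{\lambda}(X_1))$ ensuring continuity of the map at the limit point, and the passage from marginal to joint almost sure convergence, both of which are handled correctly.
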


\begin{proof}
    The result follows from the continuous mapping theorem and the fact that both $\hat{\theta}^{\text{MY}}_n$ and $\bar{w}_n$ are strongly consistent.
\end{proof}

\section{Optimal \texorpdfstring{$\lambda$}{lambda} for Gaussian target}
\label{app: opt_lamb_gaussian}
\begin{proof}[Proof of Theorem~\ref{thm:my_normal}]
We  first find the form of the MY envelope of $\psi$ for $N(0, \Omega)$ target. Note that
\begin{equation*}
    \psi^{\lambda}(x) \amp = \amp \min_{y \in \mathbb{R}^p} \left\{ \dfrac{y\Tra \Omega^{-1} y}{2} + \dfrac{(x - y)\Tra(x-y)}{2 \lambda}  \right\}\,.
\end{equation*}
It is then easy to see that $\text{prox}_{\psi}^{\lambda}(x) \amp = \amp (\lambda \Omega^{-1} + \mathbb{I}_d)^{-1} x$. Therefore,
\begin{align}
    \psi^{\lambda}(x) &\amp = \amp 
    \dfrac{\text{prox}_{\psi}^{\lambda}(x)\Tra \Omega^{-1} \text{prox}_{\psi}^{\lambda}(x)}{2} + \dfrac{(x - \text{prox}_{\psi}^{\lambda}(x))\Tra(x-\text{prox}_{\psi}^{\lambda}(x))}{2 \lambda} \label{eq: MY_proxsum}\,.
\end{align}
The following two simplifications will be used later
\begin{align*} \dfrac{\text{prox}_{\psi}^{\lambda}(x)\Tra \Omega^{-1} \text{prox}_{\psi}^{\lambda}(x)}{2} \amp &= \amp \dfrac{x\Tra(\lambda \Omega^{-1} + \mathbb{I}_d)^{-1}  \Omega^{-1} (\lambda \Omega^{-1} + \mathbb{I}_d)^{-1} x}{2} \quad \text{and} \\
    \dfrac{(x - \text{prox}_{\psi}^{\lambda}(x))\Tra(x-\text{prox}_{\psi}^{\lambda}(x))}{2 \lambda} \amp &= \amp \dfrac{x\Tra(\mathbb{I}_d - (\lambda \Omega^{-1} + \mathbb{I}_d)^{-1} )\Tra(\mathbb{I}_d - (\lambda \Omega^{-1} + \mathbb{I}_d)^{-1})x }{2\lambda}\,. 
\end{align*}
Using Woodbury identity,
\begin{equation*}
    (\lambda \Omega^{-1} + \mathbb{I}_d)^{-1} \amp = \amp \mathbb{I}_d - (\mathbb{I}_d + \lambda \Omega^{-1})^{-1}\lambda \Omega^{-1} \amp = \amp \mathbb{I}_d - \lambda \Omega^{-1}(\mathbb{I}_d + \lambda \Omega^{-1})^{-1}\,.
\end{equation*}
Thus, we obtain
\begin{equation*} 
     \dfrac{(x - \text{prox}_{\psi}^{\lambda}(x))\Tra(x-\text{prox}_{\psi}^{\lambda}(x))}{2 \lambda} \amp = \amp \dfrac{x\Tra(\lambda \Omega^{-1} + \mathbb{I}_d)^{-1}  \lambda \Omega^{-1} \Omega^{-1}(\lambda \Omega^{-1} + \mathbb{I}_d)^{-1}x }{2}\,. 
\end{equation*}
So in \eqref{eq: MY_proxsum} we have
\begin{align*}
    \psi^{\lambda}(x) \amp &= \amp \dfrac{x\Tra(\lambda \Omega^{-1} + \mathbb{I}_d)^{-1} \Omega^{-1} (\lambda \Omega^{-1} + \mathbb{I}_d)^{-1} x}{2} \\
      & \qquad  + \dfrac{x\Tra(\lambda \Omega^{-1} + \mathbb{I}_d)^{-1}  \lambda \Omega^{-1} \Omega^{-1}(\lambda \Omega^{-1} + \mathbb{I}_d)^{-1}x }{2} \\
        \Rightarrow \quad \psi^{\lambda}(x) \amp & = \amp \dfrac{x\Tra(\Omega + \lambda\mathbb{I}_d)^{-1} x}{2}\,.
      \end{align*}
Since normalizing constants are unique, the MY envelope of $N(0, \Omega)$ is $N(0, (\Omega + \lambda \mathbb{I}_d) )$.

\cite{agarwal2022principled} provided the form of the asymptotic covariance matrix of the self-normalized importance sampling estimator of $\E_{\pi}(X)$ when using iid samples from $\pi^{\lambda}$. Using their result, we obtain that the limiting covariance matrix for the (iid\@) self-normalized importance sampling estimator for $\xi(x) = x$ is,
\begin{equation}
\label{eq:gauss_lambda}
   \Xi \amp := \amp \dfrac{|\Omega + \lambda \mathbb{I}_d|^{1/2} }{ |\Omega|  \, | 2\Omega^{-1} -(\Omega + \lambda \mathbb{I}_d)^{-1}|^{1/2}}  \left(2\Omega^{-1} -(\Omega + \lambda \mathbb{I}_d)^{-1}  \right)^{-1}.
\end{equation}
We may then choose $\lambda$ based on the value that minimizes $|\Xi|$. The rest of the argument finds this optimal value of $\lambda$. First, consider the eigenvalue decomposition of $\Omega = LDL^T$ where $L$ is the orthogonal matrix of the eigenvectors of $\Omega$ and $D = \text{diagonal}(s_1, s_2, \dots, s_d)$ where $s_i$ is the $i$\textsuperscript{th} eigenvalue of $\Omega$. Using this,
\begin{align*}
    2\Omega^{-1} -(\Omega + \lambda \mathbb{I}_d)^{-1} \amp & = \amp 2 LD^{-1} L - (L D L\Tra + \lambda LL\Tra)^{-1}\\
    \amp & = \amp L \left(2D^{-1} - (D + \lambda \mathbb{I}_d)^{-1} \right) L\Tra\\
    \amp & = \amp L \,\text{diag}(w_{1}, w_{2}, \ldots, w_{p})\, L\Tra\,, \numberthis \label{eq:gauss-form}
\end{align*}
where $w_{i} = (s_i + 2\lambda)/(s_i(s_i + \lambda))$. Using \eqref{eq:gauss-form} in \eqref{eq:gauss_lambda},
\begin{align}
    |\Xi| \amp &= \amp  \left |  \dfrac{|\Omega + \lambda \mathbb{I}_d|^{1/2} }{ |\Omega|  \, | 2\Omega^{-1} -(\Omega + \lambda \mathbb{I}_d)^{-1}|^{1/2}}  \left(2\Omega^{-1} -(\Omega + \lambda \mathbb{I}_d)^{-1}  \right)^{-1}  \right| \nonumber\\
    \amp & = \amp \left(\dfrac{|\Omega + \lambda \mathbb{I}_d|^{1/2} }{ |\Omega|  \, | 2\Omega^{-1} -(\Omega + \lambda \mathbb{I}_d)^{-1}|^{1/2}} \right)^{d}  \left| \left(2\Omega^{-1} -(\Omega + \lambda \mathbb{I}_d)^{-1}  \right)^{-1} \right| \nonumber\\
    \Rightarrow \log |\Xi| \amp & = \amp \dfrac{d}{2} \log |\Omega + \lambda \mathbb{I}_d| - d\log |\Omega|- \dfrac{d}{2} \log | 2\Omega^{-1} -(\Omega + \lambda \mathbb{I}_d)^{-1}| \nonumber\\
    & \hspace{7cm} - \log | 2\Omega^{-1} -(\Omega + \lambda \mathbb{I}_d)^{-1}| \nonumber\\
    \amp & = \amp d\sum_{i=1}^{d} \left[ \dfrac{1}{2}\log (s_i + \lambda) -\log s_i - \dfrac{1}{2}\log  \left(\dfrac{s_i + 2\lambda}{s_i(s_i + \lambda)} \right) \right] \nonumber\\ 
    & \hspace{7cm} - \sum_{i=1}^{d} \log \left(\dfrac{s_i + 2\lambda}{s_i(s_i + \lambda)} \right) \nonumber\\ 
    \amp & = \amp \sum_{i=1}^{d} \left( (d+1) \log(s_i + \lambda) - \left(\dfrac{d}{2} - 1 \right)\log s_i - \left( \dfrac{d}{2} + 1 \right)\log(s_i + 2\lambda) \right) \nonumber\\
    \Rightarrow \dfrac{d \log |\Xi|}{d\lambda} \amp & = \amp \sum_{i=1}^{d} \left(\dfrac{d+1}{s_i + \lambda} - \dfrac{d + 2}{s_i + 2\lambda} \right) \amp \overset{\text{set}}{=}\amp 0 \nonumber\\
    \amp & \Rightarrow \amp \sum_{i=1}^{d} \dfrac{\lambda d - s_i}{ (s_i + \lambda) (s_i + 2 \lambda)}\amp = \amp 0\,.\label{eq:first_deriv_condn}
\end{align}
We check the existence of a solution to \eqref{eq:first_deriv_condn} for $\lambda > 0$. Denote $g(\lambda) := \log \lvert \Xi \rvert$, then, $g(\lambda)$ decreases on $[0, s_{1}/d)$ since $g'(\lambda) < 0$ in this interval. Thus, $g(\lambda) > g(s_{1}/d)$ for all $\lambda \in [0, s_{1}/d)$. Similarly $g(\lambda)$ increases on $(s_{d}/d, \infty)$ since $g'(\lambda) > 0$ on this interval. Thus, $g(\lambda) > g(s_{d}/d)$ for all $\lambda \in (s_{d}/d, \infty)$.
Since $g(\lambda)$ is continuous, it attains a global minimum over the compact set $\left[s_{1}/d, s_{d}/d\right]$, i.e., $g(\lambda) \geq g(\lambda^*)$ where $\lambda^* \in \left[s_{1}/d, s_{d}/d\right]$.
But $g(\lambda^*) \leq \min\{g(s_{1}/d), g(s_{d}/d)\}$. Therefore, $\lambda^*$ minimizes $g(\lambda)$ over $\lambda > 0$ and $s_1/d \leq \lambda^* \leq s_d/d$.
\end{proof}
\subsection{Importance sampling ESS in iid Gaussian case}

As before, let the target density $\pi$ be that of $N(0, \Omega)$. Further suppose all the eigenvalues are the same so that $\lambda^* = s_1/d$. Thus, $\pi^{\lambda^*}$ is the density of $N(0, W)$, where $W = \Omega + (s_1/d)\mathbb{I}_d$\,. The importance sampling ESS of \cite{kong1992note} is,
\begin{equation} \label{eq: kong_ess_formula}
    \dfrac{n_e}{n} \amp = \amp \dfrac{\bar{w}_n^{2}}{\,\,\overline{w^{2}}_n} \amp \approx \amp \dfrac{(\E_{\pi^{\lambda}}(w^{\lambda}(X)))^2}{\E_{\pi^{\lambda}}(w^{\lambda}(X)^2)} \,.
\end{equation}
Note that,
\begin{align} \label{eq: asymp_imp_ess_num}
    \E_{\pi^{\lambda}}(w^{\lambda}(X)) \amp = \amp \int_{\Real^d} \frac{\pi(x)}{k_{\pi}} \frac{k_{\pi^{\lambda}}}{\pi^\lambda(x)} \pi^{\lambda}(x) dx \amp = \amp \frac{k_{\pi}}{k_{\pi^{\lambda}}}\,. 
\end{align}
Further,
\begin{align*}
    \E_{\pi^{\lambda}}(w^{\lambda}(X)^2) \amp &= \amp \int_{\Real^d} \frac{\pi(x)^2}{k_{\pi}^2} \frac{k_{\pi^{\lambda}}^2}{\pi^\lambda(x)^2} \pi^{\lambda}(x) \,dx \\
    \amp &= \amp \frac{k_{\pi}^2}{k_{\pi^{\lambda}}^2} \int_{\Real^d} \frac{1}{(\sqrt{2\pi})^d} \frac{|W|^{1/2}}{|\Omega|} \exp(-2\psi(x) + \psi^{\lambda}(x)) \,dx \\
    \amp &= \amp \frac{k_{\pi}^2}{k_{\pi^{\lambda}}^2} \int_{\Real^d} \frac{1}{(\sqrt{2\pi})^d} \frac{|W|^{1/2}}{|\Omega|} \exp\left\{-x\Tra\Omega^{-1}x + \frac{1}{2}x\Tra W^{-1}x\right\}dx \\
    \amp &= \amp \frac{k_{\pi}^2}{k_{\pi^{\lambda}}^2} \int_{\Real^d} \frac{1}{(\sqrt{2\pi})^d} \frac{|W|^{1/2}}{|\Omega|} \exp\left\{-\frac{1}{2}\left[x\Tra\left(2\Omega^{-1} - W^{-1}\right)x\right]\right\}dx \\
    \amp &= \amp \frac{k_{\pi}^2}{k_{\pi^{\lambda}}^2} \frac{|W|^{1/2}}{\Omega} |(2\Omega^{-1} - W^{-1})^{-1}|^{1/2} \;\; \times \\
    & \,\,\,  \int_{\Real^d} \frac{1}{(\sqrt{2\pi})^d} \frac{1}{|(2\Omega^{-1} - W^{-1})^{-1}|^{1/2}} \exp\left(-\frac{1}{2}x\Tra (2\Omega^{-1} - W^{-1})x\right)dx\,.
\end{align*}
It is known that given two conformable matrices $A$ and $B$, a sufficient condition for $A - B$ to be positive definite is that the minimum eigenvalue of $A$ is greater than the maximum eigenvalue of $B$. Thus, $(2\Omega^{-1} - W^{-1})$ is positive definite, and the integral on the right hand side in the above equation is 1. Therefore,
\begin{equation} \label{eq: asymp_imp_ess_denom}
    \E_{\pi^{\lambda}}(w^{\lambda}(X)^2) \amp = \amp \frac{k_{\pi}^2}{k_{\pi^{\lambda}}^2} \frac{|W|^{1/2}}{|\Omega|} |(2\Omega^{-1} - W^{-1})^{-1}|^{1/2}\,.
\end{equation}
 Using \eqref{eq: asymp_imp_ess_num} and \eqref{eq: asymp_imp_ess_denom} in \eqref{eq: kong_ess_formula},
 \begin{equation} \label{eq: n_e/n_form}
     \dfrac{n_e}{n} \amp = \amp \dfrac{|2\Omega^{-1} - W^{-1}|^{1/2}|\Omega|}{|W|^{1/2}} \,.
 \end{equation}
Since $\Omega$ is a $d \times d$ diagonal matrix with same eigenvalues, namely $\Omega := \text{diag}(s_1)_{d \times d}$\,,
\begin{align*}
    2\Omega^{-1} - W^{-1} \amp = \amp \text{diag} \left( \frac{2}{s_1} - \frac{1}{s_1 + s_1/d} \right) \amp = \amp \text{diag} \left(\frac{1}{s_1} \left(\frac{d+2}{d+1}\right) \right)_{d \times d}\,.
\end{align*}
Noting that $|W|^{1/2} = s_1^{d/2}((d+1)/d)^{d/2}$ and substituting from the last equation into \eqref{eq: n_e/n_form} we have,
\begin{equation*}
    \dfrac{n_e}{n} \amp = \amp \dfrac{(d(d+2))^{d/2}}{(d+1)^d} \amp = \amp \dfrac{(1 + 2/d)^{d/2}}{(1 + 1/d)^{d}}\,.
\end{equation*}
Now both numerator and denominator converge to $e$ as $d \to \infty$. Thus, $n_e / n \to 1$ as $d \to \infty$.

\section{Proofs of geometric ergodicity of \texorpdfstring{$\pi^{\lambda}$}{pilambda}-Markov chains}
\label{app:proof_pi-lambda_chains}

We will require the following lemma in both the proofs of geometric ergodicity for $\pi^{\lambda}$-MALA and $\pi^{\lambda}$-HMC.

\begin{lemma}
\label{lem:x_1-prox_limit}
Under Assumption~\ref{ass:limsup},
\begin{equation*}
    \liminf_{\|x\| \to \infty} \left\{\|x\| \left(1 - \dfrac{\|\prox^{\lambda}_{\psi}(x)\|}{\|x\|}  \right)\right\} \amp = \amp  \infty \,.
\end{equation*}
\end{lemma}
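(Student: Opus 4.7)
The plan is to unwind the definition of $\limsup$ in Assumption~\ref{ass:limsup} and then multiply through by $\|x\|$, noting that the factor in parentheses stays bounded away from zero in the tails. Let $r(x) := \|\prox^{\lambda}_{\psi}(x)\|/\|x\|$ so that by assumption $\limsup_{\|x\|\to\infty} r(x) = l < 1$.

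First I would fix $\epsilon = (1-l)/2 > 0$ and invoke the definition of $\limsup$: there exists $R > 0$ such that $r(x) < l + \epsilon = (1+l)/2$ for every $x$ with $\|x\| > R$. Consequently, for all such $x$,
\[
1 - r(x) \;>\; 1 - \frac{1+l}{2} \;=\; \frac{1-l}{2} \;>\; 0.
\]
Multiplying by $\|x\|$ then yields $\|x\|(1 - r(x)) > \|x\| \cdot (1-l)/2$ whenever $\|x\| > R$, and the right-hand side tends to $\infty$ as $\|x\| \to \infty$. Therefore $\liminf_{\|x\|\to\infty} \|x\|(1 - r(x)) = \infty$, which is the desired conclusion after substituting $r(x) = \|\prox^{\lambda}_{\psi}(x)\|/\|x\|$.

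There is no real obstacle here: the statement is a direct quantitative consequence of Assumption~\ref{ass:limsup}, and the only subtlety is being careful that $l < 1$ (not just $l \leq 1$) is what ensures that the slack $1 - l$ is strictly positive so that multiplying by $\|x\|$ produces a diverging lower bound. I would present the proof as a short two-line argument after explicitly invoking the $\limsup$ characterization above.
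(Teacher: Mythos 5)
Your proof is correct and follows essentially the same route as the paper's: both unwind the $\limsup$ in Assumption~\ref{ass:limsup} to get $\|\prox^{\lambda}_{\psi}(x)\|/\|x\|$ bounded strictly below $1$ in the tails, and then multiply through by $\|x\|$ to obtain a diverging lower bound. Your single fixed choice $\epsilon = (1-l)/2$ is in fact a streamlined version of the paper's argument, which runs the same idea along a sequence $\epsilon_n = 1/n$ with an auxiliary increasing, diverging sequence of radii $t_n$; nothing is lost in your simplification, and you correctly identify $l < 1$ as the crux.
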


\begin{proof}
Let $B(t) := \{x: \|x\| > t\}$.  
Assumption~\ref{ass:limsup} is equivalent to 
\begin{eqnarray}
\label{eq: limsup_assum1}
\inf_{t\geq 0} \sup_{x \in B(t)}  \dfrac{\|\prox^{\lambda}_{\psi}(x)\|}{\|x\|} & = & l\,.
\end{eqnarray}
Equation \eqref{eq: limsup_assum1} implies that for any positive $\epsilon$ there exists $t_\epsilon \geq 0$ such that
\begin{eqnarray*}
\dfrac{\|\prox^{\lambda}_{\psi}(x)\|}{\|x\|} & < & l + \epsilon \qquad  \text{for all } x \in B(t_\epsilon)\,.
\end{eqnarray*}
Consider a sequence  $\epsilon_n = 1/n$. Then there exists an increasing and diverging sequence $t_n$ such that
\begin{eqnarray}
\label{eq:liminf_bound}
\lVert x \rVert \left(1 - \dfrac{\|\prox^{\lambda}_{\psi}(x)\|}{\|x\|}\right) & > & \lVert x \rVert \left(1 - l -  \frac{1}{n}\right) \quad \text{for all } x \in B(t_n)\,.
\end{eqnarray}
We will show the existence of such a sequence towards the end of this proof. Taking the infimum of both sides of the inequality \eqref{eq:liminf_bound} over $x \in B(t_n)$,
\begin{eqnarray*}
\underset{x \in B(t_n)}{\inf}\;\left\{
\lVert x \rVert \left(1 - \dfrac{\|\prox^{\lambda}_{\psi}(x)\|}{\|x\|}\right)\right\} & \geq & t_n \left(1 - l - \frac{1}{n} \right).
\end{eqnarray*}
Note that for $t \geq t_n$, $B(t) \subset B(t_n)$. Consequently,
\begin{eqnarray}
\label{eq:liminf_bound2}
\underset{x \in B(t)}{\inf}\;\left\{
\lVert x \rVert \left(1 - \dfrac{\|\prox^{\lambda}_{\psi}(x)\|}{\|x\|}\right)\right\} & \geq & t_n \left(1 - l - \frac{1}{n} \right)\,.
\end{eqnarray}
Taking the limit as $t \rightarrow \infty$ on both sides of the  inequality in \eqref{eq:liminf_bound2} gives us
\begin{eqnarray}
\label{eq:liminf_bound3}
\underset{t \rightarrow \infty}{\lim}
\underset{x \in B(t)}{\inf}\;\left\{
\lVert x \rVert \left(1 - \dfrac{\|\prox^{\lambda}_{\psi}(x)\|}{\|x\|}\right)\right\} & \geq & t_n \left(1 - l - \frac{1}{n} \right).
\end{eqnarray}
Since $t_n$ is an increasing and diverging sequence, taking the limit as $n \rightarrow \infty$ of both sides of the 
inequality in \eqref{eq:liminf_bound3}  gives the desired result.

What is left to show is that there exists an increasing and diverging sequence $t_n$.
Note that there exists $\tilde{t}_{n+1}$ such that
\begin{eqnarray*}
\dfrac{\|\prox^{\lambda}_{\psi}(x)\|}{\|x\|} & < & l + \frac{1}{n+1} \qquad \text{for all } x \in B(\tilde{t}_{n+1})\,.
\end{eqnarray*}
Let $t_{n+1} := \max(\tilde{t}_{n+1}, t_n + 1) \geq t_n + 1 > t_n$. But since $t_{n+1} \geq \tilde{t}_{n+1}$, we have that $B(t_{n+1}) \subset B(\tilde{t}_{n+1})$. Therefore,
\begin{eqnarray*}
\dfrac{\|\prox^{\lambda}_{\psi}(x)\|}{\|x\|} & < & l + \frac{1}{n+1} \qquad  \text{for all }x \in B(t_{n+1})\,.
\end{eqnarray*}
\end{proof}

\subsection{Proof of Theorem~\ref{thm:geom_mala}} \label{app:proof_mala}
\begin{proof}
We need to show that when $h \leq 2\lambda$
    \begin{eqnarray*}
        \eta & := & \liminf_{\|x\| \rightarrow \infty} \left\{\|x\| - \|c(x)\|\right\} \amp > \amp 0\,.
     \end{eqnarray*}
 Recall that $\nabla \log \pi^{\lambda}(x) = - \lambda^{-1}\left[x - \prox^{\lambda}_{\psi}(x)\right]$.
    Therefore,
    \begin{align*}
        \|x\| - \|c(x)\| \amp &= \amp \|x\| - \left\|x + \frac{h}{2}\nabla \log \pi^{\lambda}(x)\right\| \\
       \amp &= \amp \|x\| - \left\|x - \frac{h}{2\lambda}\left[x - \prox^{\lambda}_{\psi}(x)\right]\right\| \\
       \amp &= \amp \|x\| - \left\|\left(1 - \frac{h}{2\lambda}\right)x + \frac{h}{2\lambda}\prox^{\lambda}_{\psi}(x)\right\| \\
       \amp & \geq \amp \|x\| - \left|1 - \frac{h}{2\lambda}\right|\|x\| - \frac{h}{2\lambda}\left\|\prox^{\lambda}_{\psi}(x)\right\|\,,
    \end{align*}
    where the last inequality follows from the triangle inequality.
   Since $h/(2\lambda) \leq 1$,
    \begin{align*}
       \liminf_{\|x\| \to \infty}  \left\{ \|x\| - \|c(x)\| \right\} \amp &\geq \amp \liminf_{\|x\| \to \infty} \left \{\|x\| - \left(1 - \frac{h}{2\lambda}\right)\|x\| - \frac{h}{2\lambda}\left\|\prox^{\lambda}_{\psi}(x)\right\| \right\} \\
        \amp &= \amp \liminf_{\|x\| \to \infty} \left \{\frac{h}{2\lambda}(\|x\| - \left\|\prox^{\lambda}_{\psi}(x)\right\|)\right\} \\
        \amp &= \amp \frac{h}{2\lambda} 
        \liminf_{\|x\| \to \infty} \left \{ \|x\| \left(1 -  \frac{\left\|\prox^{\lambda}_{\psi}(x)\right\|}{\|x\|}\right)\right\} \\
       \amp &> \amp 0\,,
    \end{align*}
    where the last inequality follows from Lemma~\ref{lem:x_1-prox_limit}.
\end{proof}

\subsection{Proof of Theorem~\ref{thm:geom_hmc}}
\label{app:proof_hmc}

Condition \eqref{limsup_condn} can be challenging to verify, and hence \cite{MR4003576} provide the following sufficient conditions for it to hold.
\begin{theorem} [\cite{MR4003576}] 
\label{thm:suff_condn_hmc}
    For any $L \geq 1$, \eqref{limsup_condn} holds if the following are met
    \begin{align}
        &\text(a) \qquad \lim_{\|x\| \rightarrow \infty} \|\nabla \psi(x)\| \amp = \amp \infty\,, \label{eq:living1} \\
        &\text(b) \qquad \liminf_{\|x\| \rightarrow \infty} \frac{\langle \nabla \psi(x), x \rangle}{\|\nabla \psi(x)\| \|x\|} \amp > \amp 0\,,\label{eq:living2} \\
        &\text(c) \qquad \lim_{\|x\| \rightarrow \infty} \frac{\|\nabla \psi(x)\|}{\|x\|} \amp = \amp 0\,. \label{eq:living3}
        \end{align}
        In addition, if ($c$) is replaced by, 
        \vspace{-0.5cm}
        \begin{align}
        \hspace{-.5cm}\text(c^*) \qquad \limsup_{\|x\| \rightarrow \infty} \frac{\|\nabla \psi(x)\|}{\|x\|} \amp = \amp S_{l}\,, \label{eq:living3also}
    \end{align}
    for some $S_{l} < \infty$, then there exists an $\varepsilon_{0} < \infty$ such that \eqref{limsup_condn} holds provided $\varepsilon < \varepsilon_{0}$.
\end{theorem}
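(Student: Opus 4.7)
The plan is to reduce \eqref{limsup_condn} to a scalar comparison via the identity
\[
\|m_{L,\varepsilon}(x,z)\|^{2}-\|x\|^{2} \;=\; -2\langle x, G(x,z)\rangle + \|G(x,z)\|^{2},
\]
where $G(x,z):=x-m_{L,\varepsilon}(x,z)=(L\varepsilon^{2}/2)\nabla\psi(x)+\varepsilon^{2}\sum_{i=1}^{L-1}(L-i)\nabla\psi(x_{i\varepsilon})$ bundles the full drift. Since $\|m\|-\|x\|=(\|m\|^{2}-\|x\|^{2})/(\|m\|+\|x\|)$, the aim is to show that the numerator is sufficiently negative while the denominator is comparable to $2\|x\|$ up to a factor controlled by $\varepsilon$. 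Concretely, I will argue that $\|m\|-\|x\|\to -\infty$ uniformly over $\|z\|\le\|x\|^{\delta}$ under (a)--(c) with no restriction on $\varepsilon$, and the same conclusion under (a)--(b)--(c$^{\ast}$) provided $\varepsilon<\varepsilon_{0}$; either statement is much stronger than \eqref{limsup_condn}.

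The first concrete step is to control the intermediate leapfrog iterates. Unrolling the symplectic recursion gives
\[
x_{i\varepsilon} \;=\; x_{0}+i\varepsilon z_{0}-\tfrac{i\varepsilon^{2}}{2}\nabla\psi(x_{0})-\varepsilon^{2}\sum_{j=1}^{i-1}(i-j)\nabla\psi(x_{j\varepsilon}),
\]
and using $\|z_{0}\|\le\|x_{0}\|^{\delta}$ with $\delta<1$ together with (c) (resp.\ (c$^{\ast}$)), I would prove inductively on $i\le L-1$ that $\|x_{i\varepsilon}-x_{0}\|/\|x_{0}\|\to 0$ as $\|x_{0}\|\to\infty$ (resp.\ is bounded by a constant times $\varepsilon$ that can be made arbitrarily small). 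This is the main technical obstacle: under (c$^{\ast}$) the gradient terms in the recursion carry a factor of order $\|x_{0}\|$, so a Gr\"onwall-style induction is required to keep the ratios $\|x_{i\varepsilon}\|/\|x_{0}\|$ close to $1$ uniformly in $i$. Once this is in place, I can transfer the alignment condition (b) from $\nabla\psi(x_{0})$ to $\nabla\psi(x_{i\varepsilon})$ by writing
\[
\frac{\langle x_{0},\nabla\psi(x_{i\varepsilon})\rangle}{\|x_{0}\|\,\|\nabla\psi(x_{i\varepsilon})\|} \;=\; \frac{\langle x_{i\varepsilon},\nabla\psi(x_{i\varepsilon})\rangle}{\|x_{i\varepsilon}\|\,\|\nabla\psi(x_{i\varepsilon})\|}\cdot\frac{\|x_{i\varepsilon}\|}{\|x_{0}\|} - \frac{\langle x_{i\varepsilon}-x_{0},\nabla\psi(x_{i\varepsilon})\rangle}{\|x_{0}\|\,\|\nabla\psi(x_{i\varepsilon})\|},
\]
so that the first factor is asymptotically bounded below by the $\liminf$ $\kappa>0$ in (b) while the Cauchy--Schwarz remainder vanishes by the previous step. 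Summing with the positive weights $L/2$ and $(L-i)$ yields $\langle x_{0}, G(x_{0},z_{0})\rangle\ge(\kappa+o(1))\|x_{0}\|\,\|G(x_{0},z_{0})\|$.

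The last step is algebra. Substituting into the basic identity,
\[
\|m\|^{2}-\|x\|^{2} \;\le\; \|G\|\bigl(\|G\|-2\kappa\|x\|(1+o(1))\bigr).
\]
Under (c), $\|G\|/\|x\|\to 0$, so both $\|m\|+\|x\|\sim 2\|x\|$ and $\|G\|<2\kappa\|x\|$ eventually, giving $\|m\|-\|x\|\le-\kappa\|G\|(1+o(1))\to -\infty$ because (a) forces $\|G\|\to\infty$. Under (c$^{\ast}$) the ratio $\|G\|/\|x\|$ is bounded by a constant multiple of $\varepsilon^{2}S_{l}$; choosing $\varepsilon_{0}$ small enough to force this bound below $\kappa$, the same manipulation produces a uniform linear contraction $\|m\|-\|x\|\le-(\kappa/C)\|G\|$, which again diverges to $-\infty$ by (a). A subtle point throughout is the uniformity in $z$: the momentum contributes an $O(\|x\|^{\delta})$ perturbation to the iterates and to the cross terms, and this must be shown to be negligible in each alignment and magnitude estimate, which is ultimately why the restriction $\delta<1$ is essential.
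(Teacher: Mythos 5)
This statement is quoted from \cite{MR4003576} and the paper provides no proof of it --- it is imported as a known tool for proving Theorem~\ref{thm:geom_hmc} --- so there is no internal argument to compare against; the relevant benchmark is the original proof in the cited source, and your reconstruction follows essentially the same route: expand $\|m_{L,\varepsilon}(x,z)\|^{2}-\|x\|^{2}=-2\langle x,G\rangle+\|G\|^{2}$, control the leapfrog iterates so that $\|x_{i\varepsilon}\|/\|x_0\|\to 1$ uniformly over $\|z\|\leq\|x\|^{\delta}$, transfer the angle condition \eqref{eq:living2} from $x_{i\varepsilon}$ to $x_{0}$, and conclude that $\|m\|-\|x\|\to-\infty$ (strictly stronger than \eqref{limsup_condn}, since \eqref{eq:living1} forces $\|G\|\to\infty$ once the alignment gives $\|G\|\geq(\kappa+o(1))(L\varepsilon^{2}/2)\|\nabla\psi(x_{0})\|$). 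The outline is sound, and you correctly identify the two genuinely delicate points: the Gr\"onwall-style induction on the iterates, which under (c$^{*}$) needs the growth bound upgraded to a global one of the form $\|\nabla\psi(y)\|\leq M+(S_{l}+1)\|y\|$ (using continuity of $\nabla\psi$ on compacts) so the recursion closes and the iterates stay both comparable to and diverging with $\|x_{0}\|$; and the fact that under (c$^{*}$) the relative iterate perturbation is only $O(\varepsilon^{2}S_{l}L^{2})$-small rather than $o(1)$, which degrades the effective alignment constant to $\kappa-C\varepsilon^{2}$ and is exactly why the conclusion holds only for $\varepsilon<\varepsilon_{0}$, whereas under (c) no restriction on $\varepsilon$ is needed. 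Modulo writing out that induction and the uniformity in $z$ (the $O(\|x\|^{\delta})$ momentum contribution is $o(\|x\|)$ precisely because $\delta<1$, as you note), your sketch is a faithful and correct reconstruction of the cited result.
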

As it turns out, under Assumption~\ref{ass:limsup}, \eqref{eq:living1}, \eqref{eq:living2}, and \eqref{eq:living3also} hold for $\pi^{\lambda}$, and verifying geometric ergodicity is significantly simpler. 

\begin{proof}
    We will show that under Assumption~\ref{ass:limsup}, conditions \eqref{eq:living1}, \eqref{eq:living2}, and \eqref{eq:living3also} hold, yielding the result. First,
    \begin{align*}
    \lim_{\|x\| \rightarrow \infty} \|\nabla \psi^{\lambda}(x)\| \amp & = \amp \lim_{\|x\| \rightarrow \infty} \dfrac{\|x - \prox_{\psi}^{\lambda}(x)\|}{\lambda} \\
    \amp  & \geq \amp \dfrac{1}{\lambda}\lim_{\|x\| \rightarrow \infty}\|x\|\left(1 - \dfrac{\|\prox_{\psi}^{\lambda}(x)\|}{\|x\|} \right) = \amp \infty\,, \numberthis \label{eq:lim_grad_psi_lam_infty}
    \end{align*}
using Lemma~\ref{lem:x_1-prox_limit} and the fact that for any sequence $a_n$, $\liminf a_n \leq \limsup a_n$. Thus, \eqref{eq:living1} holds.

We move on to showing \eqref{eq:living2} holds. Let $\psi^*$ denote the Fenchel conjugate of $\psi$, i.e.,
$$
\psi^*(x) \amp = \amp \underset{y}{\sup}\; \left\{\langle x, y \rangle - \psi(y) \right\}.
$$
Recall that $\psi^* \in \Gamma(\Real^d)$  whenever 
$\psi \in \Gamma(\Real^d)$. Furthermore, recall that if $\psi \in \Gamma(\Real^d)$ then by the Moreau decomposition, $x \amp = \amp \prox^{\lambda}_{\psi}(x) + \lambda \prox^{1/\lambda}_{\psi^*}(x/\lambda)$, and thus
\begin{equation*}
    \nabla \psi^{\lambda}(x) \amp = \amp \frac{1}{\lambda}[x - \prox^{\lambda}_{\psi}(x)] \amp = \amp \prox^{1/\lambda}_{\psi^*}(x/\lambda)\,.
\end{equation*}
Since $\psi^* \in \Gamma(\Real^d)$, the proximal mapping is firmly nonexpansive, i.e., for all $x$ and $y$
\begin{equation}
\label{eq:firm_nonexpansive}
    \lVert \prox^{1/\lambda}_{\psi^*}(x) - \prox^{1/\lambda}_{\psi^*}(y) \rVert^2 \amp \leq \amp \langle x - y, \prox^{1/\lambda}_{\psi^*}(x) - \prox^{1/\lambda}_{\psi^*}(y) \rangle\,.
\end{equation}
Plugging $(x, y) = \left (x/\lambda, 0\right)$ into \eqref{eq:firm_nonexpansive} and rearranging terms gives 
$$
\begin{aligned}
\left\langle \frac{x}{\lambda}, \prox^{1/\lambda}_{\psi^*}\left(\frac{x}{\lambda}\right) \right\rangle & \amp \geq \amp \left\lVert \prox^{1/\lambda}_{\psi^*}\left(\frac{x}{\lambda}\right) - \prox^{1/\lambda}_{\psi^*}(0) \right\rVert^2 + \left\langle \frac{x}{\lambda}, \prox^{1/\lambda}_{\psi^*}(0) \right\rangle \\
\Rightarrow  \langle x, \nabla \psi^{\lambda}(x) \rangle &\amp \geq \amp \lambda \|\nabla \psi^{\lambda}(x) - \nabla \psi^{\lambda}(0)\|^2 + \left\langle x, \nabla \psi^{\lambda}(0) \right\rangle.
\end{aligned}
$$
Dividing the terms on both sides of the above inequality by $\|x\| \, \|\nabla \psi^{\lambda}(x)\|$,
\begin{align*}
   &\frac{\langle x, \nabla \psi^{\lambda}(x) \rangle}{\lVert x \rVert\lVert \nabla \psi^{\lambda}(x) \rVert} \\
   & \amp \geq \amp 
 \lambda \frac{\|\nabla \psi^{\lambda}(x) - \nabla \psi^{\lambda}(0)\|}{\lVert x \rVert}\cdot \frac{\|\nabla \psi^{\lambda}(x) - \nabla \psi^{\lambda}(0)\|}{\lVert \nabla \psi^{\lambda}(x) \rVert} + \frac{\langle x, \nabla \psi^{\lambda}(0)  \rangle}{\lVert x \rVert\lVert \nabla \psi^{\lambda}(x) \rVert}  \\
 &\amp \geq \amp \lambda \left[\frac{\lVert\nabla\psi^{\lambda}(x) \rVert}{\lVert x \rVert} - \frac{\lVert\nabla\psi^{\lambda}(0) \rVert}{\lVert x \rVert}\right] \left[1 - \frac{\lVert\nabla\psi^{\lambda}(0) \rVert}{\lVert \nabla \psi^{\lambda}(x) \rVert}\right] + \left\langle \frac{x}{\lVert x \rVert}, \frac{\nabla \psi^{\lambda}(0)}{\lVert \nabla \psi^{\lambda}(x) \rVert}\right\rangle\,. \numberthis \label{eq:hmc_proof_main}
\end{align*}
Since $\underset{\lVert x \rVert \rightarrow \infty}{\lim}\; \lVert \nabla \psi^{\lambda}(x) \rVert = \infty$, 
\begin{equation}
\label{eq:hmc_proof_1}
    \underset{\lVert x \rVert \rightarrow \infty}{\lim}\; \left\langle \frac{x}{\lVert x \rVert}, \frac{\nabla \psi^{\lambda}(0)}{\lVert \nabla \psi^{\lambda}(x) \rVert}\right\rangle \amp = \amp 0\,.
\end{equation}
Further, by Assumption~\ref{ass:limsup}
\begin{align*}
    \liminf_{\lVert x \rVert \to \infty} \frac{\lVert\nabla \psi^{\lambda}(x)\rVert}{\lVert x \rVert} & \amp = \amp \frac{1}{\lambda} \liminf_{\lVert x \rVert \to \infty} \frac{\lVert x - \prox^{\lambda}_{\psi}(x)\rVert}{\lVert x \rVert} \\
    & \amp \geq \amp \frac{1}{\lambda} \liminf_{\lVert x \rVert \to \infty} \left(1 - \frac{\|\prox^{\lambda}_{\psi}(x)\|}{\|x\|}\right) \\
    &\amp > \amp 0\,. \numberthis \label{eq:hmc_proof_2}
\end{align*}
Using \eqref{eq:hmc_proof_1} and \eqref{eq:hmc_proof_2} in  \eqref{eq:hmc_proof_main},
\begin{align*}
     &\liminf_{\|x\| \to\infty} \frac{\langle x, \nabla \psi^{\lambda}(x) \rangle}{\lVert x \rVert\lVert \nabla \psi^{\lambda}(x) \rVert} \\
     & \amp \geq \amp \liminf_{\|x\| \to\infty} \left\{ \lambda \left[\frac{\lVert\nabla\psi^{\lambda}(x) \rVert}{\lVert x \rVert} - \frac{\lVert\nabla\psi^{\lambda}(0) \rVert}{\lVert x \rVert}\right] \left[1 - \frac{\lVert\nabla\psi^{\lambda}(0) \rVert}{\lVert \nabla \psi^{\lambda}(x)} \rVert\right] +  \left\langle \frac{x}{\lVert x \rVert}, \frac{\nabla \psi^{\lambda}(0)}{\lVert \nabla \psi^{\lambda}(x) \rVert}\right\rangle \right\} \\
     &\amp \geq \amp \liminf_{\|x\| \to\infty}  \lambda \left[\frac{\lVert\nabla\psi^{\lambda}(x) \rVert}{\lVert x \rVert} - \frac{\lVert\nabla\psi^{\lambda}(0) \rVert}{\lVert x \rVert}\right] \left[1 - \frac{\lVert\nabla\psi^{\lambda}(0) \rVert}{\lVert \nabla \psi^{\lambda}(x) \rVert}\right] + \liminf_{\|x\| \to\infty} \left\langle \frac{x}{\lVert x \rVert}, \frac{\nabla \psi^{\lambda}(0)}{\lVert \nabla \psi^{\lambda}(x) \rVert}\right\rangle\\
     &\amp \geq \amp \liminf_{\|x\| \to\infty}  \lambda \left[\frac{\lVert\nabla\psi^{\lambda}(x) \rVert}{\lVert x \rVert} - \frac{\lVert\nabla\psi^{\lambda}(0) \rVert}{\lVert x \rVert}\right] \lim_{\|x\| \to\infty}\left[1 - \frac{\lVert\nabla\psi^{\lambda}(0) \rVert}{\lVert \lVert \nabla \psi^{\lambda}(x) \rVert \rVert}\right]\\
     &\amp >  \amp 0\,.
\end{align*}
This establishes \eqref{eq:living2}. Next, we show that \eqref{eq:living3also} holds. Note that,
\begin{align*}
    \frac{\|\nabla \psi^{\lambda}(x)\|}{\|x\|} \amp &= \amp \frac{1}{\lambda}\frac{\|x - \prox^{\lambda}_{\psi}(x)\|}{\|x\|} \nonumber \\ 
    \amp &\leq \amp \frac{1}{\lambda}\frac{\|x\| + \|\prox^{\lambda}_{\psi}(x)\|}{\|x\|}\nonumber \\
    \Rightarrow \limsup_{\|x\| \rightarrow \infty} \frac{\|\nabla \psi^{\lambda}(x)\|}{\|x\|} \amp &\leq \amp \limsup_{\|x\| \rightarrow \infty} \frac{1}{\lambda}\left(1 + \frac{\|\prox^{\lambda}_{\psi}(x)\|}{\|x\|} \right) \leq \amp \frac{2}{\lambda} \amp < \amp \infty. 
\end{align*}
Thus, $\limsup_{\|x\| \rightarrow \infty} \|\nabla \psi^{\lambda}(x)\| / \|x\|$ is finite, and \eqref{eq:living3also} holds. By \cite{MR4003576}, the $\pi^{\lambda}$-HMC chain is geometrically ergodic for a sufficiently small $\epsilon$.
\end{proof}
\section{Details for numerical examples}
\subsection{Proximal algorithms}
\label{app:other_algos}

The proximal MCMC algorithms used in the examples are the P-MALA and P-HMC algorithms for the trendfiltering and the nuclear norm examples. For the Bayesian Poisson random effects model, Barker's algorithm is implemented and compared with the proximal versions.

The primary features of the proximal adaptations of the MALA and HMC are the following.
\begin{enumerate}
    \item Use the gradient of the approximated target $\pi^{\lambda}$, i.e. $\nabla \log \pi^{\lambda}(x)$ in the proposal step.
    \item Maintain $\pi$-invariance of the algorithm by employing the Metropolis-Hastings correction step with respect to $\pi$.
\end{enumerate}
Thus, although the proximal MCMC algorithms use the gradient information of the smooth approximations to the target density, they employ the M-H correction step to maintain $\pi$ invariance resulting in samples from $\pi$. The algorithms for proposing samples using P-MALA and P-HMC are given below.

\begin{algorithm}
\caption{P-MALA for $\pi$ given $h > 0$}
\label{algo:P-MALA}
\begin{enumerate}
\item  Given  $X_t = x$, generate proposal $(Y = y) \sim q_{\text{M}}(x, \cdot) \equiv N \left(x - \dfrac{h}{2}\nabla \psi^{\lambda}(x) , h\mathbbm{I}_{d} \right)$
\item  Generate $U \sim U(0,1)$ independently and set
\begin{equation*}
\alpha(x, y) \amp = \amp \min \left\{1, \dfrac{\pi(y)q_{\text{M}}(y, x)}{\pi(x)q_{\text{M}}(x,y)} \right\}
\end{equation*}
\item If $U \leq \alpha(x, y)$, then $X_{t+1} = y$
\item  Else $X_{t+1} = x$
\end{enumerate}
\end{algorithm}

\begin{algorithm} 
\caption{P-HMC for $\pi$ given $\varepsilon > 0$, $L$ a positive integer and $M$ a p.d. matrix}
\label{algo:P-HMC}
    \begin{enumerate}
    \item Given $X_{t} = x$ , $\varepsilon > 0$, and $L \geq 1$
    \item Set $x_{0} = x$, draw $z_{0} \sim N(0, M)$
    \item Use the \emph{leapfrog} one-$\varepsilon$ step equations
    \begin{align*}
    z_{\frac{\varepsilon}{2}} \amp &= \amp z_{0} - \frac{\varepsilon}{2}\nabla \psi^{\lambda}(x_{0}) \\
    x_{\varepsilon} \amp &= \amp x_{0} + \varepsilon M^{-1}z_{\frac{\varepsilon}{2}} \\
    z_{\varepsilon} \amp &= \amp z_{\frac{\varepsilon}{2}} - \frac{\varepsilon}{2}\nabla \psi^{\lambda}(x_{\varepsilon})
\end{align*}
$L$ times sequentially to reach $(x_{0}, z_{0}) \rightarrow (x_{L\varepsilon}, z_{L\varepsilon})$
    \item Draw \(U \sim U(0,1)\) and calculate,
    \begin{equation*}
        \alpha(x_{0}, x_{L\varepsilon}) \amp = \amp \min\left\{1 , e^{-H(x_{L\varepsilon}, z_{L\varepsilon}) + H(x_{0}, z_{0})}\right\}
    \end{equation*}
    where $H(x, z) = \psi(x) + \frac{1}{2}z^{\text{T}}M^{-1}z$ denotes total energy at $(x, z)$
    \item If $U \leq \alpha(x_{0}, x_{L\varepsilon})$, then $X_{t+1} = x_{L\varepsilon}$
    \item Else $X_{t+1} = x$
\end{enumerate}
\end{algorithm}
\noindent The Barker's algorithm due to \cite{livingstone2022barker} uses the gradients of the target in the proposal density,
\begin{equation} \label{eqbark}
    q_{\text{B}}(x, y) \amp = \amp 2.\frac{q_{h}(y-x)}{1 + e^{-(y-x)\Tra\nabla\log\pi(x)}}
\end{equation}
where $q_{h}(\cdot)$ denotes the Gaussian density with variance $h$. Algorithms~\ref{algo:barkerprop} and \ref{algo:Barkers} give a straightforward way to propose from \eqref{eqbark}.

\begin{algorithm}
\caption{Barker proposal on $\mathbb{R}^{d}$}
\label{algo:barkerprop}
 \begin{enumerate}
\item Draw $z \sim q_{h}(.)$
\item Calculate $\phi(x, z) = 1/(1+e^{-z\Tra\nabla\log\pi^{\lambda}(x)})$
\item Set $b(x,z) = 1$ with probability $\phi(x,z)$, and $b(x,z) = -1$ otherwise
\item Set $y = x + b(x,z) \times z$
\end{enumerate}
\end{algorithm}

\begin{algorithm} 
\caption{Metropolis-Hastings with Barker's proposal }
\label{algo:Barkers}
\begin{enumerate}
\item  Given  $X_t = x$, generate proposal $(Y = y) \sim q_{\text{B}}(x, \cdot)$ using Algorithm~\ref{algo:barkerprop}
\item  Generate $U \sim U(0,1)$ independently and set
\begin{equation*}
\alpha(x, y) \amp = \amp \left\{1, \dfrac{\pi(y)}{\pi(x)} \dfrac{q_{\text{B}}(y, x)}{q_{\text{B}}(x,y)}  \right\}
\end{equation*}
\item If $U \leq \alpha(x, y)$, then $X_{t+1} = y$
\item  Else $X_{t+1} = x$
\end{enumerate}
\end{algorithm}

Recall that we employ $\pi^{\lambda}$ invariant Markov chains for drawing samples from the importance sampling proposal.
Algorithm~\ref{algo:MYMALA} presents the $\pi^{\lambda}$-MALA algorithm, and Algorithm~\ref{algo:MYHMC} presents the $\pi^{\lambda}$-HMC algorithm.
\begin{algorithm}
\caption{$\pi^{\lambda}$-MALA given $h > 0$}
\label{algo:MYMALA}
\begin{enumerate}
\item  Given  $X_t = x$, generate $Y = y$ from  $N \left(x - \dfrac{h}{2}\nabla \psi^{\lambda}(x) , h\mathbbm{I}_{d} \right)$ with density $q_{\text{M}}(x, \cdot)$
\item  Generate $U \sim U(0,1)$ independently and set
\begin{eqnarray*}
\alpha(x, y) & = & \min \left\{1, \dfrac{\pi^{\lambda}(y)q_\text{M}(y, x)}{\pi^{\lambda}(x)q_\text{M}(x,y)} \right\}
\end{eqnarray*}
\item If $U \leq \alpha(x, y)$, then $X_{t+1} = y$
\item  Else $X_{t+1} = x$
\end{enumerate}
\end{algorithm}

\begin{algorithm} [t]
\caption{$\pi^{\lambda}$-HMC given $\varepsilon > 0$, $L \geq 1$ and $M_{d \times d} \succ 0$}
\label{algo:MYHMC}
    \begin{enumerate}
    \item Given $X_{t} = x_{0}$, draw $z_{0} \sim N(0, M)$
    \item Use the \emph{leapfrog} one-$\varepsilon$ step equations,
    \begin{eqnarray*}
    z_{\frac{\varepsilon}{2}} & = & z_{0} - \frac{\varepsilon}{2}\nabla \psi^{\lambda}(x_{0}) \\
    x_{\varepsilon} &= &x_{0} + \varepsilon M^{-1}z_{\frac{\varepsilon}{2}} \\
    z_{\varepsilon} & = & z _{\frac{\varepsilon}{2}} - \frac{\varepsilon}{2}\nabla \psi^{\lambda}(x_{\varepsilon})
\end{eqnarray*}
$L$ times sequentially to reach $(x_{0}, z_{0}) \rightarrow (x_{L\varepsilon}, z_{L\varepsilon})$
    \item Draw \(U \sim U(0,1)\) independently, and calculate
    \begin{eqnarray*}
        \alpha(x_{0}, x_{L\varepsilon}) & = & \min\left\{1 , e^{-H(x_{L\varepsilon}, z_{L\varepsilon}) + H(x_{0}, z_{0})}\right\}\,,
    \end{eqnarray*}
    where $H(x, z) = \psi^{\lambda}(x) + \frac{1}{2}z^{\Tra}M^{-1}z$ denotes total energy at $(x, z)$
    \item If $U \leq \alpha(x_{0}, x_{L\varepsilon})$, then $X_{t+1} = x_{L\varepsilon}$
    \item Else $X_{t+1} = x_0$
\end{enumerate}
\end{algorithm}

\subsection{Toy example}
Consider the target distribution with density
\begin{eqnarray}
    \label{eq:laplace}
    \pi_{\beta}(x_1, \dots, x_d) & \propto & \prod_{i=1}^{d} e^{-\psi_{\beta}(x_i)} \amp = \amp \prod_{i=1}^{d} e^{-|x_i|^{\beta}}\,, \qquad x_{i} \in \Real\,,
\end{eqnarray}
$i = 1, 2, \ldots, d$,
for $\beta = 1$ (Laplace) and $\beta = 4$ (super-Gaussian). Then $\psi(x) = \sum_{i=1}^{d}|x_{i}|^{\beta}$, and its Moreau-Yosida envelope is
\begin{eqnarray} \label{eq:MY_env_lap}
    \psi^{\lambda}_{\beta}(x) & = & \min_{y \in \mathbb{R}^{d}}\left(\sum_{i=1}^{d}|y_{i}|^{\beta} + \frac{1}{2\lambda}\|x-y\|^{2}\right)\,.
\end{eqnarray}
Recall that the $i^\text{th}$ component of the proximal mapping of a separable sum of functions is the proximal mapping of the $i^\text{th}$ function in the sum \citep{parikh2014proximal}, i.e., 
\begin{eqnarray*}
\prox^{\lambda}_{\psi_{\beta}}(x_{i}) & = & \arg\min_{y_{i} \in \mathbb{R}}\left(|y_{i}|^{\beta} + \frac{1}{2\lambda}(x_{i}-y_{i})^{2}\right)\,.
\end{eqnarray*}
For $\beta = 1$, we obtain the familiar soft-thresholding function
\begin{eqnarray*}
    \prox^{\lambda}_{\psi_1}(x_{i}) & = & \begin{cases}
        (|x_{i}| - \lambda)\text{sgn}(x_{i}) & \text{if $|x_{i}| \geq \lambda$} \\
        0   & \text{otherwise}\,. 
    \end{cases}
\end{eqnarray*}
\cite{durmus2022proximal} provide the exact expression of $\pi^{\lambda}_{1}$ for $d = 1$:
\begin{eqnarray*}
    \pi^{\lambda}_{1}(x) & = & \dfrac{\exp\left\{ \left(\frac{\lambda}{2} - |x| \right) \mathbbm{1}(|x|\geq \lambda) - \frac{x^2}{2\lambda}\mathbbm{1}(|x| <  \lambda)   \right\}}{2 ( e^{-\lambda/2} + \sqrt{2 \pi \lambda} (\Phi(\sqrt{\lambda}) - 1/2) )}\,.
\end{eqnarray*}
For $\beta = 4$, we obtain
\begin{eqnarray} \label{eq:prox_exp}
    \prox^{\lambda}_{\psi_4}(x_{i}) & = & \dfrac{\sqrt[3]{3}\left(\sqrt{3}\sqrt{\lambda^{3}(27\lambda x_{i}^{2} + 1)} + 9\lambda^{2}x_{i}\right)^{\frac{2}{3}} - 3^{\frac{2}{3}}\lambda}{6\lambda\sqrt[3]{\sqrt{3}\sqrt{\lambda^{3}(27\lambda x_{i}^{2} + 1)} + 9\lambda^{2}x_{i}}} \,,
\end{eqnarray}
which yields an expression of $\pi^{\lambda}_4$ up to a normalization constant. The following figures depict the effect of $\lambda$ on the $\pi^{\lambda}$-MALA Markov chain and the quality of the importance sampling estimator for $d = 1, 10, 20$. For every combination of $\beta$ and $d$, we track the following as $\lambda$ is varied: (i) the estimated $n_e/n$, which reflects the quality of the importance sampling proposal, $\pi^{\lambda}$, (ii) the MCMC effective sample size by $n$ for estimating the mean of $\pi^{\lambda}$, which reflects the mixing quality of the $\pi^{\lambda}$-Markov chain, and (iii) the estimated asymptotic variance of $\hat{\theta}_n^{\text{MY}}$. Each Markov chain is run for $n = 10^6$ iterations. Figures~\ref{fig:laplace} and ~\ref{fig:exp4} show these quantities as a function of $\lambda$ for both $\beta$.
 
These results further strengthen our motivation to recommend choosing $\lambda$ such that $n_e/n$ is in $[0.40, 0.80]$. 
 \begin{figure} 
     \centering
     \includegraphics[scale = .48]{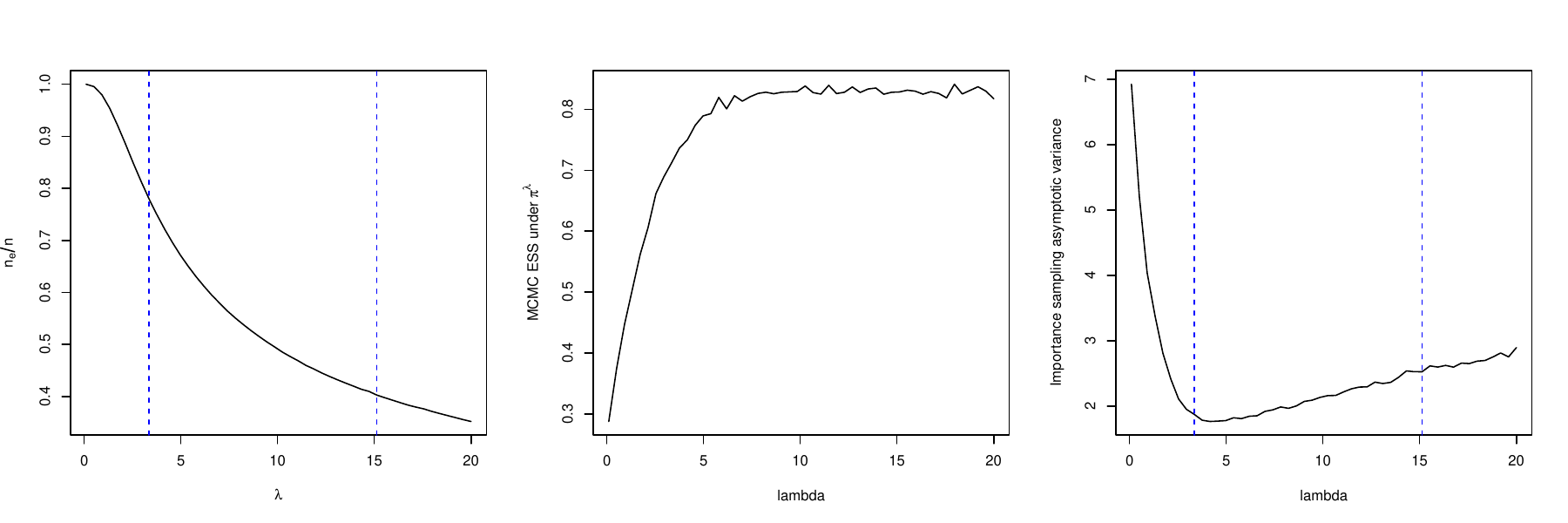}
     \includegraphics[scale = .48]{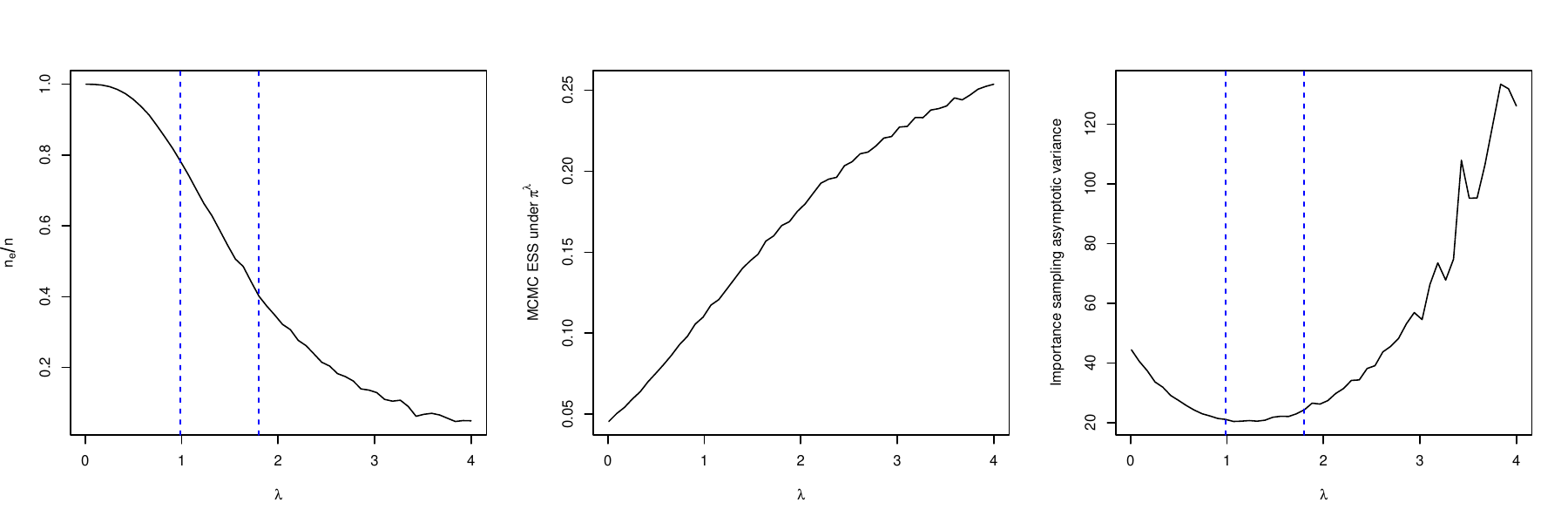}
     \includegraphics[scale = .48]{lap_plot_d20.pdf}
     \caption{(Laplace) From top to bottom $d = 1, 10, 20$. Left column has the importance sampling effective sample size $n_e/n$ for different $\lambda$, middle column has the MCMC effective sample size for the $\pi^{\lambda}$-MALA chain for different $\lambda$, and the right column has the estimated importance sampling asymptotic variance for different values of $\lambda$. The two vertical lines are the values of $\lambda$ that yield $n_e/n \in \{0.40, 0.80\}$.}
     \label{fig:laplace}
 \end{figure}
 \begin{figure} 
     \centering
     \includegraphics[scale = .48]{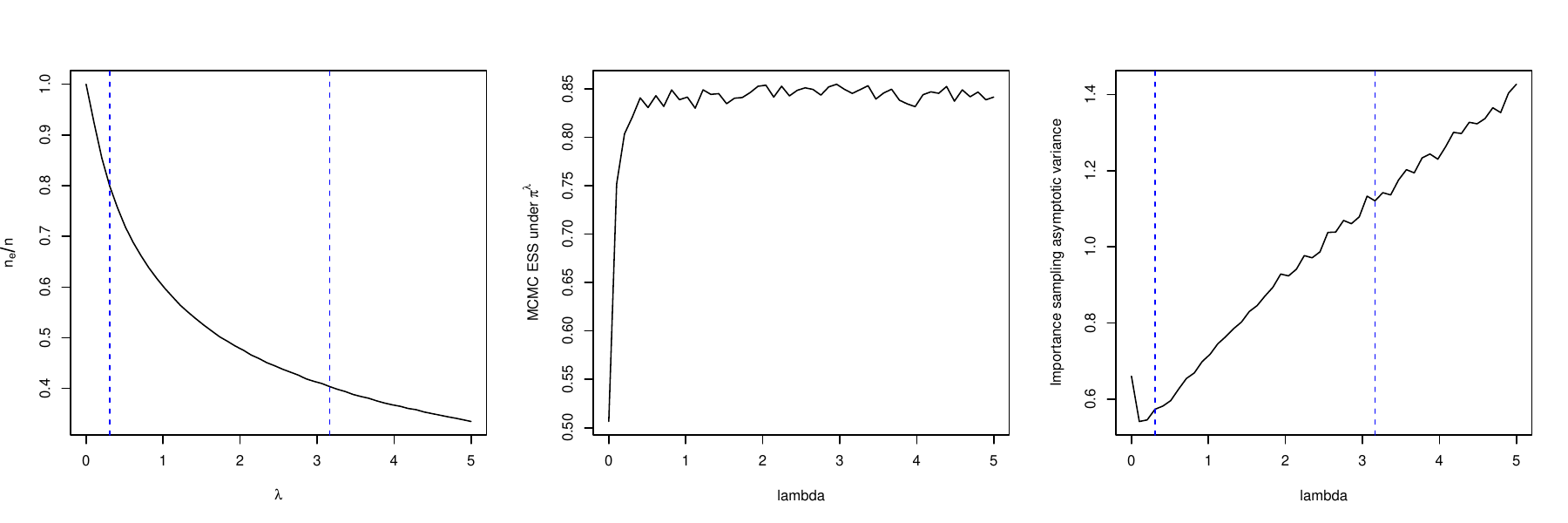}
     \includegraphics[scale = .48]{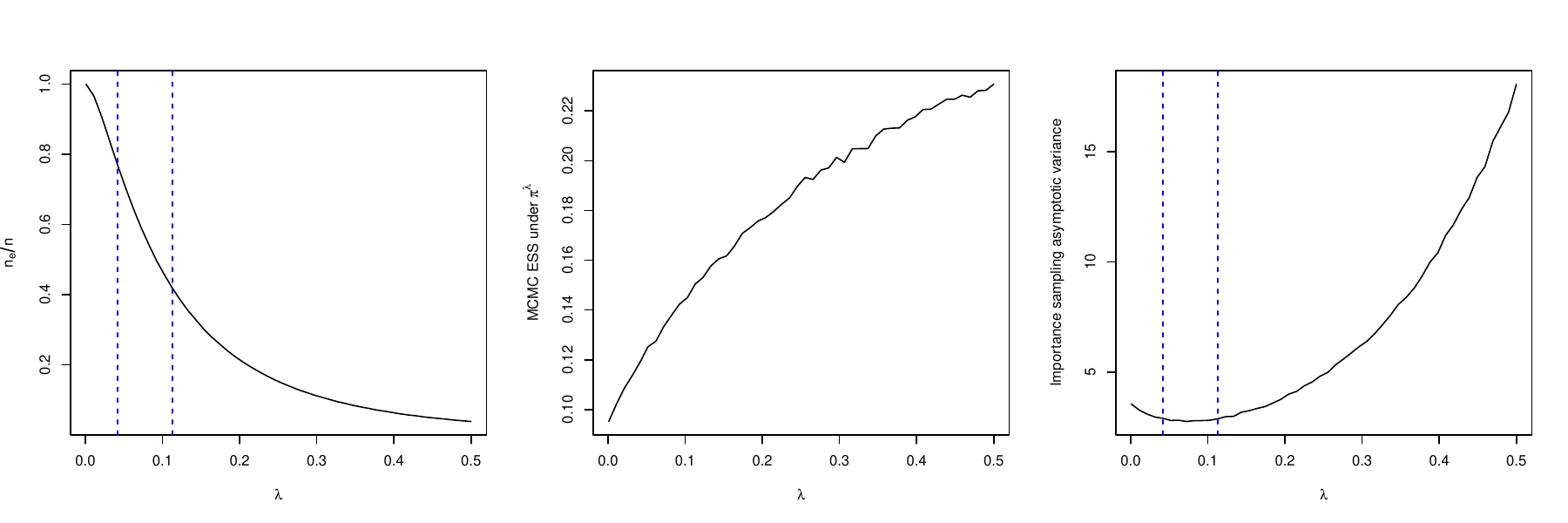}
     \includegraphics[scale = .48]{exp_plot_d20.pdf}
     \caption{(Super-Gaussian) From top to bottom $d = 1,  10, 20$. Left column has the importance sampling effective sample size $n_e/n$ for different $\lambda$, middle column has the MCMC effective sample size for the $\pi^{\lambda}$-MALA chain for different $\lambda$, and the right column has the estimated importance sampling asymptotic variance for different values of $\lambda$. The two vertical lines are the values of $\lambda$ that yield $n_e/n \in \{0.40, 0.80\}$.}
     \label{fig:exp4}
 \end{figure}

\subsection{Bayesian trendfiltering}

The MY envelope of $\psi(\mu)$ is given by
\begin{eqnarray*}
    \psi^{\lambda}(\mu) & = & \underset{\eta \in \Real^{d}}{\min}\; \left\{\psi(\eta) + \frac{1}{2\lambda}\|\mu - \eta\|_{2}^{2} \right\}\\
    & = & \underset{\eta \in \Real^{d}}{\min}\;\left\{ \frac{\|y - \eta\|^{2}_{2}}{2\sigma^{2}} + \frac{1}{2\lambda}\|\mu - \eta\|_{2}^{2} + \alpha\left\|\text{D}_{m}^{(k+1)}\eta\right\|_1\right\},
\end{eqnarray*}
or equivalently,
\begin{equation*}
    \psi^{\lambda}(\mu) \amp = \amp \frac{\left\|y - \prox_{\psi}^{\lambda}(\mu)\right\|^{2}_{2}}{2\sigma^{2}} + \frac{1}{2\lambda}\left\|\mu - \prox_{\psi}^{\lambda}(\mu)\right\|_{2}^{2} + \alpha\left\|\text{D}_{m}^{(k+1)}\prox_{\psi}^{\lambda}(\mu)\right\|_1,
\end{equation*}
where
\begin{equation} 
\label{eq:prox_tf}
    \prox_{\psi}^{\lambda}(\mu) \amp = \amp \underset{\eta \in \Real^{d}}{\arg\min}\;\left\{ \frac{\|y - \eta\|^{2}_{2}}{2\sigma^{2}} + \frac{1}{2\lambda}\|\mu - \eta\|_{2}^{2} + \alpha\|\text{D}_{m}^{(k+1)}\eta\|_1\right\}.
\end{equation}
Completing the square of the quadratic terms in the expression within the braces of \eqref{eq:prox_tf} enables us to express the proximal mapping of $\psi$ as 
\begin{equation} 
\label{eq:prox_tf2}
    \prox_{\psi}^{\lambda}(\mu) \amp = \amp  \underset{\eta \in \Real^{d}}{\arg\min} \left\{ \frac{1}{2}\lVert \eta -  z \rVert_2^2 + \frac{\alpha\sigma^2\lambda}{\sigma^2 + \lambda}\left\lVert \text{D}_{m}^{(k+1)}\eta\right\rVert_1\right\},
\end{equation}
where
\begin{eqnarray*}
z & = &\frac{\sigma^2}{\sigma^2 + \lambda}\mu + \frac{\lambda}{\sigma^2 + \lambda} y.
\end{eqnarray*}
The optimization problem in \eqref{eq:prox_tf2} is solved by an ADMM algorithm implemented in the \texttt{trendfilter} function of the \texttt{glmgen} \texttt{R} package available at \url{https://github.com/statsmaths/glmgen}.

\subsection{Nuclear-norm based low rank matrix estimation}
\label{app:checkerboard}

We consider the $64 \times 64$ checkerboard image of MATLAB, shown in the left panel of Figure~\ref{fig:checkerboard}. The image is low rank as seen in the plot of the singular values of the SVD of the image shown in the right panel of Figure~\ref{fig:checkerboard}.

\begin{figure}
    \centering
    \includegraphics[scale = 0.63]{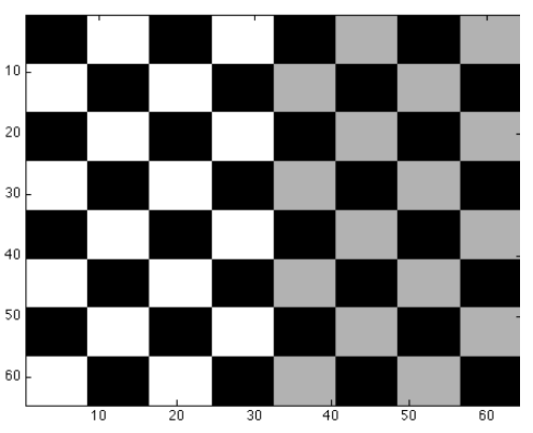}
    \includegraphics[scale = 0.34]{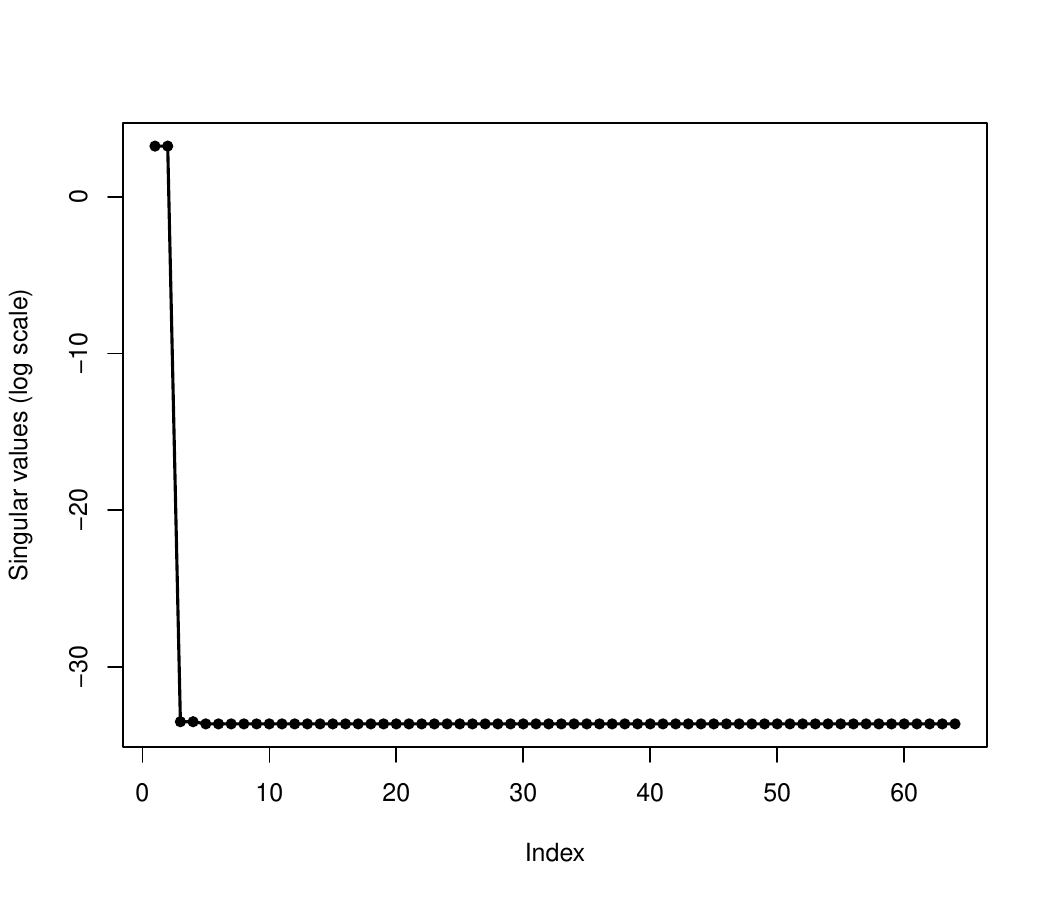}
    \caption{Checkerboard image (\emph{left}), singular values (\emph{right}).}
    \label{fig:checkerboard}
\end{figure}

\subsection{MY envelope for Bayesian Poisson random effects}
\label{app:poisson}
We first derive the joint posterior density function of the Poisson random effects model. The joint prior of $\eta_{i}$'s and $\mu$ is given by,
\begin{align}
    p(\eta_{1}, \eta_{2}, \ldots, \eta_{I}, \mu) \amp &= \amp f(\mu)\prod_{i=1}^{I}f(\eta_{i}|\mu) \nonumber \\
   \amp &\propto \amp \exp\left(-\frac{\mu^{2}}{2c^{2}}\right)\prod_{i=1}^{I} \exp\left(-\frac{1}{2\sigma_{\eta}^{2}}(\eta_{i} - \mu)^{2}\right)  \nonumber\\
   \amp &\propto \amp  \exp\left(-\frac{\sum_{i=1}^{I}(\eta_{i}^{2} + \mu^{2} - 2\eta_{i}\mu)}{2\sigma_{\eta}^{2}}\right)\exp\left(-\frac{\mu^{2}}{2c^{2}}\right)  \nonumber\\
   \amp &\propto \amp  \exp\left[-\left(\frac{\sum_{i=1}^{I}\eta_{i}^2}{2\sigma_{\eta}^{2}} + \frac{\mu^{2}I}{2\sigma_{\eta}^{2}} - \frac{\mu\sum_{i=1}^{I}\eta_{i}}{\sigma_{\eta}^{2}}\right)\right]\exp\left[-\frac{\mu^{2}}{2c^{2}}\right]  \nonumber\\
   \amp &\propto \amp  \exp\left[-\left(\frac{\sum_{i=1}^{I}\eta_{i}^2}{2\sigma_{\eta}^{2}} + \frac{\mu^{2}}{2}\left(\frac{I}{\sigma_{\eta}^{2}} + \frac{1}{c^{2}}\right) - \frac{\mu\sum_{i=1}^{I}\eta_{i}}{\sigma_{\eta}^{2}}\right)\right]\,.  \label{Poisson_prior}
\end{align}
Further, the likelihood function of the observed data is,
\begin{align}
    L(\eta_{1}, \eta_{2}, \ldots, \eta_{I}, \mu | y) \amp &= \amp \prod_{i=1}^{I} \prod_{j=1}^{n_{i}} \frac{\exp(-e^{\eta_{i}}).e^{\eta_{i}y_{ij}}}{y_{ij}!} \nonumber \\
   \amp &= \amp \prod_{i=1}^{I} \prod_{j=1}^{n_{i}} \frac{\exp(\eta_{i}y_{ij} - e^{\eta_{i}})}{y_{ij}!} \nonumber \\
   \amp &= \amp \frac{\exp(\sum_{i=1}^{I}\sum_{j=1}^{n_{i}}(\eta_{i}y_{ij} - e^{\eta_{i}}))}{\prod_{i=1}^{I} \prod_{j=1}y_{ij}!} \nonumber \\
   \amp &= \amp \frac{\exp(\sum_{i=1}^{I}(\eta_{i}\sum_{j=1}^{n_{i}}y_{ij}) - \sum_{i=1}^{I}n_{i}e^{\eta_{i}})}{\prod_{i=1}^{I} \prod_{j=1}y_{ij}!}\,. \label{Poisson_likelihood}
\end{align}
Combining the prior \eqref{Poisson_prior} and likelihood \eqref{Poisson_likelihood} gives us the  posterior
\begin{eqnarray*}
    p(\eta_{1}, \eta_{2}, \ldots, \eta_{I}, \mu | y) & \propto & \exp(-\psi(\eta_{1}, \eta_{2}, \ldots, \eta_{I}, \mu)),
\end{eqnarray*}
where
\begin{equation*}
    \psi(\eta_{1}, \eta_{2}, \ldots, \eta_{I}, \mu) \amp = \amp \frac{\sum_{i=1}^{I}\eta_{i}^2}{2\sigma_{\eta}^{2}} - \frac{\mu\sum_{i=1}^{I}\eta_{i}}{\sigma_{\eta}^{2}} + \sum_{i=1}^{I} n_{i}e^{\eta_{i}} - \sum_{i=1}^{I} \left(\eta_{i}\sum_{j=1}^{n_{i}}y_{ij}\right) + \frac{\mu^{2}}{2}\left(\frac{I}{\sigma_{\eta}^{2}} + \frac{1}{c^{2}}\right).
\end{equation*}
The proximal mapping for $\psi$ does not admit a closed form solution, so we resort to an iterative algorithm using a Newton-Raphson method. This requires evaluating the gradient $\nabla \psi$ and Hessian $\nabla^2\psi$. Denote $u := (\eta_{1}, \eta_{2}, \ldots, \eta_{I}, \mu)$. Then,
\begin{equation*}
    \psi_{\eta_{i}}'(u) \amp := \amp \frac{\partial  \psi(u)}{\partial \eta_{i}} \amp = \amp \frac{\eta_{i}}{\sigma_{\eta}^{2}} - \frac{\mu}{\sigma_{\eta}^{2}} + n_{i}e^{\eta_{i}} - \sum_{j=1}^{n_{i}} y_{ij}
\end{equation*}
for $i = 1, 2, 3, \ldots, I$, and,
\begin{equation*}
    \psi_{\mu}'(u) \amp := \amp \frac{\partial  \psi(u)}{\partial \mu} \amp = \amp \mu \left(\frac{I}{\sigma_{\eta}^{2}} + \frac{1}{c^{2}}\right) - \frac{\sum_{i=1}^{I} \eta_{i}}{\sigma_{\eta}^{2}}\,.
\end{equation*}
Thus, the $(I+1) \times 1$ first order derivative $\nabla \psi(u)$ can be written as,
\begin{align} \label{eq:pois_potential}
   \nabla \psi(u) \amp = \amp \left(
        - \psi_{\eta_{1}}(u), - \psi_{\eta_{2}}(u), \ldots, - \psi_{\eta_{I}}(u), - \psi_{\mu}(u) \right)\Tra.
\end{align}
Since for a given $u$,
\begin{equation*}
    \text{prox}_{\psi}^{\lambda}(u) \amp = \amp \argmin_{v \in \mathbb{R}^{I+1}}\left\{\psi(v) + \frac{\|u - v\|^{2}}{2\lambda} \right\}\amp =: \amp \argmin_{v \in \mathbb{R}^{I+1}} f_{u}^{\lambda}(v),
\end{equation*}
the proximal mapping is the solution of,
\begin{equation} \label{eq:first_der}
    \nabla f_{u}^{\lambda}(v) \amp = \amp \nabla \psi(v) - \frac{u - v}{\lambda} \amp \overset{\text{set}}{=} \amp 0\,.
\end{equation}
Using \eqref{eq:pois_potential} in \eqref{eq:first_der}, we get the proximal solution $\Tilde{v} = (\Tilde{\eta}_{1}, \Tilde{\eta}_{1}, \ldots, \Tilde{\eta}_{I}, \Tilde{\mu})$. In particular,
\begin{align*}
    \Tilde{\mu} \amp = \amp \frac{\sum_{i=1}^{I}\eta_{i}/\sigma_{\eta}^{2} + \mu/\lambda}{(I/\sigma_{\eta}^{2} + 1/c^{2} + 1/\lambda)}\,. 
\end{align*}
Evaluating $\Tilde{v}$ is difficult analytically so we apply the Newton-Raphson algorithm. We require the Hessian of $f_{u}^{\lambda}$ with respect to $(\eta_{1}, \ldots, \eta_{I})$ i.e.,
\begin{equation*}
    \nabla^{2}_{\eta}f_{u}^{\lambda}(v) \amp := \amp \text{diag}(k_{11}, k_{22}, \ldots, k_{II})\,,
\end{equation*}
where $k_{ii} = 1/\sigma_{\eta}^{2} + n_{i}e^{\eta_{i}} + 1/\lambda$, and $\nabla^{2}_{\eta}$ is a map from $\Real^{(I+1) \times 1} \to \Real^{I \times I}$. Let $\eta^{k} = (\eta_{1}^{k}, \eta_{2}^{k}, \ldots, \eta_{I}^{k})$ be the vector of $\eta_{i}$'s at the $k^{\text{th}}$ step. Then the one step Newton-Raphson algorithm to generate $v^{k+1} = (\eta^{k+1}, \mu^{k+1})\Tra$ given $v^{k} = (\eta^{k}, \mu^{k})\Tra$ is given in algorithm \ref{algo:Newton-Raphson}. The algorithm stops when,
\begin{equation*}
    \|\nabla f_{u}^{\lambda}(v^*)\| < \epsilon
\end{equation*}
for a specified tolerance $\epsilon$ and $v^*$ is the approximate minimizer.
\begin{algorithm}[H]
\caption{Newton-Raphson}
\label{algo:Newton-Raphson}
\begin{algorithmic}[1]
\State \vspace*{0.325\baselineskip}Initialize $v^{0} = (\eta^{0}, \mu^{0})\Tra$
\State \vspace*{0.325\baselineskip}$k \gets 0$
\Repeat\vspace*{0.325\baselineskip}
\State  \vspace*{0.325\baselineskip}$\eta^{k+1} = \eta^{k} - \left\{\nabla^{2}_{\eta} f_{u}^{\lambda}\left(v^{k}\right)\right\}^{-1} \nabla_{\eta} f_{u}^{\lambda}\left(v^{k}\right)$
\State \vspace*{0.325\baselineskip} $\mu^{k+1} = \frac{\sum_{i=1}^{I}\eta_{i}^{k+1}/\sigma_{\eta}^{2} + \mu/\lambda}{(I/\sigma_{\eta}^{2} + 1/c^{2} + 1/\lambda)}$
\State \vspace*{0.325\baselineskip} $v^{k+1} = (\eta^{k+1}, \mu^{k+1})\Tra$
\State \vspace*{0.325\baselineskip} $k \gets k + 1$
\Until{$\lVert \nabla f_{u}^{\lambda}(v^k)\rVert < \epsilon$}
\end{algorithmic}
\end{algorithm}
\onehalfspacing
\bibliographystyle{apalike}
\bibliography{ref}

\end{document}